\newcounter{newpseudonum}[pseudocode]
  \providecommand{\refline}[1]{\hyperref[#1]{(\ref*{#1})}}
  \providecommand{\refline}[1]{\ref*{#1}}
\renewcommand{\RETURN}[1]{\ifthenelse{\equal{#1}{} }{\mbox{\bfseries return}}{\mbox{\bfseries return}#1}}
\newcommand{\FUNCTION}[2]{\mbox{\bfseries proc }\mbox{\textsc{#1}}\left(\ensuremath{#2}\right)\\}
\newlength{\pcodewidth}
\newenvironment{code}[1]{
\begin{Sbox}
\!\!\begin{minipage}{#1}
\bfseries
\noindent
\scriptsize
$$
\begin{array}{@{\hspace*{1ex}}lr@{}}
}{
\end{array}
$$
\end{minipage}\vspace{-2mm}
\end{Sbox}
\shadowbox{\TheSbox}{}
}
\DeclareMathAlphabet{\mathpzc}{OT1}{pzc}{m}{it}
\newcommand{\ii}[2]{{\mathpzc{#1}\mathpzc{#2}}_{ii}}
\newcommand{\ip}[1]{{\mathpzc{#1}}_{ip}}
\renewcommand{\pi}[2]{{\mathpzc{#1}\mathpzc{#2}}_{pi}}
\newcommand{\pp}[1]{{\mathpzc{#1}}_{pp}}
\newcommand{\Lin}{\mathsf{Lin}}
\newcommand{\Den}{\mathsf{Den}}
\newcommand{\Dis}{\mathsf{Dis}}
\newcommand{\Unb}{\mathsf{Unb}}
\newcommand{\allr}{\mathfrak{R}}
\newcommand{\alli}{\mathfrak{I}}
\newcommand{\allm}{\mathfrak{M}}
\newcommand{\allp}{\mathfrak{P}}
\newtheorem{theorem}[thm]{Theorem}
\newtheorem{lemma}[thm]{Lemma}
\newtheorem{definition}[thm]{Definition}
\begin{document}

\title[A Theory of Points and Intervals (II)]{An Integrated First-Order Theory of Points and Intervals over Linear Orders (Part II)}

\author[W.~Conradie]{Willem Conradie\rsuper{a}}	
\address{\lsuper{a}School of Mathematics, University of Witwatersrand\\
        Johannesburg, South Africa}	
\email{willem.conradie@wits.ac.za}  
\thanks{The research of the first author was partially supported by grant number 81309 of the South African National Research Foundation.}	

\author[S.~Durhan]{Salih Durhan\rsuper{b}}	
\address{\lsuper{b}D4C Research and Development\\
	Istanbul, Turkey}	
\email{salihdurhan@d4c.ai }  

\author[G.~Sciavicco]{Guido Sciavicco\rsuper{c}}	
\address{\lsuper{c}Department of Mathematics and Computer Science\\
        University of Ferrara\\
        Ferrara, Italy}	
\email{guido.sciavicco@unife.it}  
\thanks{We would like to thank Dr. Davide Bresolin, of the University of Padova, for his help. This paper is the second part of a 2-parts paper.}	



\keywords{Interval based temporal logics; expressivity; Allen's relations}
\subjclass{[Theory of computation]: Logic --- Modal and temporal logics; [Computing methodologies]: Artificial intelligence --- Knowledge representation and reasoning ---  Temporal reasoning}


\begin{abstract}
  There are two natural and well-studied approaches to temporal ontology and reasoning: point-based and interval-based. Usually, interval-based temporal reasoning deals with points as a particular case of duration-less intervals. A recent result by Balbiani, Goranko, and Sciavicco presented an explicit two-sorted point-interval temporal framework in which time instants (points) and time periods (intervals) are considered on a par, allowing the perspective to shift between these within the formal discourse. We consider here two-sorted first-order languages based on the same principle, and therefore including relations, as first studied by Reich, among others, between points, between intervals, and inter-sort. We give complete classifications of its sub-languages in terms of relative expressive power, thus determining how many, and which, are the intrinsically different extensions of two-sorted first-order logic with one or more such relations. This approach roots out the classical problem of whether or not points should be included in a interval-based semantics. In this Part II, we deal with the cases of all dense and the case of all unbounded linearly ordered sets.
\end{abstract}

\maketitle

\section{Introduction}\label{sec:intro}

The relevance of temporal logics in many theoretical and applied areas of computer science and AI, such as theories of action and change, natural language analysis and processing, and constraint satisfaction problems, is widely recognized. While the predominant approach in the study of temporal reasoning and logics has been based on the assumption that time points (instants) are the primary temporal ontological entities, there has also been significant activity in the study of interval-based temporal reasoning and logics over the past two decades. The variety of binary relations between intervals in linear orders was first studied systematically by Allen~\cite{Allen87,Allen:1983:MKA,JLOGC::AllenF1994}, who explored their use in systems for time management and planning. Allen's work and much that follows from it is based on the assumption that time can be represented as a dense line, and that points are excluded from the semantics. At the modal level, Halpern and Shoham~\cite{JACM::HalpernS1991} introduced the multi-modal logic HS that comprises modal operators for all possible relations (known as Allen's relations~\cite{Allen:1983:MKA}) between two intervals in a linear order, and it has been followed by a series of publications studying the expressiveness and decidability/undecidability and complexity of the fragments of HS, e.g.,~\cite{tcs2014,amai2014}. Many studies on interval logics have considered the so-called `non-strict' interval semantics, allowing point-intervals (with coinciding endpoints) along with proper ones, and thus encompassing the instant-based approach, too; more recent ones, instead, started to treat pure intervals only. Yet, little has been done so far on the formal treatment of both temporal primitives, points and intervals, in a unified two-sorted framework. A detailed philosophical study of both approaches, point-based and interval-based, can be found in~\cite{vB91} (see also~\cite{590368}). A similar mixed approach has been studied in~\cite{DBLP:journals/ci/AllenH89}. ~\cite{DBLP:journals/cj/MaH06} contains a study of the two sorts and the relations between them in dense linear orders. More recently, a modal logic that includes different operators for points and interval has been presented in~\cite{DBLP:journals/entcs/BalbianiGS11}.

\medskip

The present paper provides a systematic treatment of point and interval relations (including equality between points and between intervals treated on the same footing as the other relations) at the first-order level. Our work is motivated, among other observations, by the fact that natural languages incorporate both ontologies on a par, without assuming the primacy of one over the other, and have the capacity to shift the perspective smoothly from instants to intervals and vice versa within the same discourse, e.g.: {\em when the alarm goes on, it stays on until the code is entered}, which contains two instantaneous events and a non-instantaneous one. Moreover, there are various temporal scenarios which neither of the two ontologies alone can grasp properly since neither the treatment of intervals as the sets of their internal points, nor the treatment of points as `instantaneous' intervals, is really adequate. The technical identification of intervals with sets of their internal points, or of points as instantaneous intervals leads also to conceptual problems like the confusion of events and fluents. Instantaneous events are represented by time intervals and should be distinguished from instantaneous holding of fluents, which are evaluated at time points: therefore, the point $a$ should be distinguished from the interval $[a,a]$, and the truths in these should not necessarily imply each other. Finally, we note that, while differences in expressiveness have been found between the strict and non-strict semantics for some interval logics (see~\cite{ijcai11}, for example), so far, no distinction in the decidability of the satisfiability has been found. Therefore, we believe that an attempt to systemize the role of points, intervals, and their interaction, would make good sense not only from a purely ontological point of view, but also from algorithmic and computational perspectives.

\medskip

\noindent{\bf Previous Work and Motivations.} As presented in the early work of van Benthem~\cite{vB91} and Allen and Hayes~\cite{Allen85}, interval temporal reasoning can be formalized as an extension of first-order logic with equality with one or more relations, and the properties of the resulting language can be studied; obviously, the same applies when relations between points are considered too. In this paper we ask the question: interpreted over linear orders, how many and which expressively different languages can be obtained by enriching first-order logic with relations between intervals,  between points, and between intervals and points? Since, as we shall see, there are 26 different relations (including equality of both sorts) between points, intervals, and points and intervals, $2^{26}$ is an upper bound on this number. (It is worth noticing that in~\cite{DBLP:journals/cj/MaH06} the authors distinguish 30 relations, instead of 26; this is due to the fact that the concepts of the point $a$ {\em starting} the interval $[a,b]$ and {\em meeting} it are considered to be different.) However, since certain relations are definable in terms of other ones, the actual number is less and in fact, as we shall show, much less. The answer also depends on our choices of certain semantic parameters, specifically, the class of linear orders over which we construct our interval structures. In this paper, in Part I~\cite{DBLP:journals/lmcs/ConradieDS18}, we consider the classification problem relative to:

\begin{enumerate}[label=\emph{(\roman*)}]
\item the class of all linear orders;
\item the class of all  weakly discrete linear orders (i.e., orders in which every point with a successor/predecessor has an immediate one).
\end{enumerate}

\noindent In Part II of this paper we consider:
\begin{enumerate}[resume,label=\emph{(\roman*)}]
\item the class of all dense linear orders;
\item the class of all unbounded linear orders;
\end{enumerate}

\medskip

Apart from the intrinsic interest and naturalness of this classification problem, its outcome has some important repercussions, principally in the reduction of the number of cases that need to be considered in other problems relating to these languages. For example, it reduces the number of representation theorems that are needed: given the {\em dual} nature of time intervals (i.e., they can be abstract first-order individuals with specific characteristics, or they can be defined as ordered pairs over a linear order), one of the most important problems that arises is the existence or not of a {\em representation theorem}. Consider any class of linear orders: given a specific extension of first-order logic with a set of interval relations (such as, for example, {\em meets} and {\em during}), does there exist a set of axioms in this language which would constrain (abstract) models of this signature to be isomorphic to concrete ones?  Various representation theorems exist in the literature for languages that include interval relations only: van Benthem~\cite{vB91}, over rationals and with the interval relations {\em during} and {\em before}, Allen and Hayes~\cite{Allen85}, for the dense unbounded case without point intervals and for the relation {\em meets}, Ladkin~\cite{Ladkin}, for point-based structures with a quaternary relation that encodes meeting of two intervals, Venema~\cite{JLOGC::Venema1991}, for structures with the relations {\em starts} and {\em finishes}, Goranko, Montanari, and Sciavicco~\cite{GMS03}, for linear structures with {\em meets} and {\em met-by}, Bochman~\cite{DBLP:journals/ndjfl/Bochman90}, for point-interval structures, and Coetzee~\cite{Coetzee:MThesis} for dense structure with {\em overlaps} and {\em meets}. Clearly, if two sets of relations give rise to expressively equivalent languages, two separate representations theorems for them are not needed.  In which cases are representation theorems still outstanding? Preliminary works that provide similar classifications appeared in~\cite{DBLP:conf/caepia/ConradieS11} for first-order languages with equality and only interval-interval relations, and in~\cite{time2012} for points and intervals (with equality between intervals treated on a par with the other relations) but only over the class of all linear orders. Finally, a complete study of first-order interval temporal logics enables a deeper understanding of their modal counterparts based on their shared relational semantics.

\medskip

\noindent{\bf Structure of the paper.} This paper is structured as follows. Section~\ref{sec:basics} provides the necessary preliminaries, along with an overview of the general methodology used in this paper. Part I of this paper dealt with definability and undefinability in the classes $\Lin$ and $\Dis$, from which we start in order to tackle, in Section~\ref{sec:den}, the study the expressive power of the language by analyzing the definability properties of each basic relation in the class $\Den$, and in Section~\ref{sec:den-incomp} the corresponding undefinability results in this case. Then, in Section~\ref{sec:unb} and Section~\ref{sec:unb-incomp}, respectively, we present the same analysis in the unbounded case, before concluding. It is worth reminding that most of the results presented here are a consequence of those presented in Part I, to which we shall refer whenever necessary.

\section{Basics}\label{sec:basics}

\subsection{Syntax and semantics}

\begin{table}[t]
\centering
\begin{tikzpicture}[scale=1]
\tikzstyle{every node}=[font=\small]
\draw (0,0)node(op){};
%

\draw (op) ++(0,-.75)node(meets){};
\draw (meets)++(0,-.75)node(later){};
\draw (meets)++(0,-1.5)node(starts){};
\draw (meets)++(0,-2.25)node(finishes){};
\draw (meets)++(0,-3)node(during){};
\draw (meets)++(0,-3.75)node(overlaps){};
\draw (meets)++(-1.0,0)node[right](Ra){$(\mbox{{\em meets,m}})\,\,[a,b]~\ii{3}{4}~[c,d] \Leftrightarrow b=c$};
\draw (meets)++(-1.0,-.75)node[right](Rl){$(\mbox{{\em before,b}})\,\,[a,b]~\ii{4}{4}~[c,d] \Leftrightarrow b < c$};
\draw (meets)++(-1.0,-1.5)node[right](Rs){$(\mbox{{\em starts,s}})\,\,[c,d]~\ii{1}{4}~[a,b] \Leftrightarrow a=c, d < b$};
\draw (meets)++(-1.0,-2.25)node[right](Rf){$(\mbox{{\em finishes,f}})\,\,[c,d]~\ii{0}{3}~[a,b] \Leftrightarrow b=d, a < c$};
\draw (meets)++(-1.0,-3)node[right](Rd){$(\mbox{{\em during,d}})\,\,[c,d]~\ii{0}{4}~[a,b] \Leftrightarrow a < c, d < b$};
\draw (meets)++(-1.0,-3.75)node[right](Ro){$(\mbox{{\em overlaps,o}})\,\,[a,b]~\ii{2}{4}~[c,d] \Leftrightarrow a < c < b < d$};
\draw[red,|-|] (meets) ++(7.6,.75)node[above](a){$a$} -- ++(2,0)node[above](b){$b$};
\draw[dashed,red,help lines,thick] (a) -- ++(0,-5.25);
\draw[dashed,red,help lines,thick] (b) -- ++(0,-5.25);
\draw[|-|] (b) ++(0,-1) ++(0,0)node[above](Ac){$c$}
-- ++(1,0)node[above](Ad){$d$};
\draw[|-|] (b) ++(0,-1) ++(.5,-.75)node[above](Lc){$c$}
-- ++(1,0)node[above](Ld){$d$};
\draw[|-|] (a) ++(0,-1) ++(0,-1.5)node[above](Bc){$c$}
-- ++(.5,0)node[above](Bd){$d$};
\draw[|-|] (b) ++(0,-1) ++(-.5,-2.25)node[above](Ec){$c$}
-- ++(.5,0)node[above](Ed){$d$};
\draw[|-|] (a) ++(0,-1) ++(.5,-3)node[above](Dc){$c$}
-- ++(1,0)node[above](Dd){$d$};
\draw[|-|] (a) ++(0,-1) ++(1,-3.75)node[above](Oc){$c$}
-- ++(2,0)node[above](Od){$d$};
\end{tikzpicture}
\caption{Interval-interval relations, a.k.a.\ Allen's relations. The equality relation is not depicted.}
\label{ii:relations}
\end{table}

Given a linear order $\mathbb D=\langle D,<\rangle$, we call the elements of $D$ \emph{points} (denoted by $a,b,\ldots$) and define an {\em interval} as an ordered pair $[a,b]$ of points in $D$, where $a<b$. Abstract intervals will be denoted by $I,J,\ldots,$ and so on.
Now, as we have mentioned above, there are 13 possible relations, including equality, between any two intervals. From now on, we
call these {\em interval-interval} relations. Besides equality, there are 2 different relations that may hold between any two points ({\em before}
and {\em after}), called hereafter {\em point-point} relations, and 5 different relations that may hold between a point and an interval and vice-versa: we call those {\em interval-point} and {\em point-interval} relations, respectively, and we use the term {\em mixed} relations to refer to them indistinctly. Interval-interval relations are exactly Allen's relations~\cite{Allen:1983:MKA}; point-point relations are the classical relations on a linear order, and mixed relations will be explained below. Traditionally, interval relations are represented by the initial letter of the description of the relation, like $m$ for {\em meets}. However, when one considers more relations (like point-point and point-interval relations) this notation becomes confusing, and even more so in the presence of more relations, e.g.\ when one wants to consider interval relations over a {\em partial order}\footnote{This paper is focused on linear orders only; nevertheless, it is our intention to complete this study to include the treatment of partial orders also, and, at this stage, we want to make sure that we will be able to keep the notation consistent.}. We introduce the following notation to resolve this issue: an interval $[a,b]$ induces a
partition of $\mathbb D$ into five regions (see~\cite{lig91}): region 0 which contains all
points less than $a$, region 1 which contains $a$ only,
region 2 which contains all the points strictly between $a$ and $b$, region 3 which contains
only $b$ and region 4 which contains the points greater than $b$. Likewise, a point $c$ induces a partition of $\mathbb D$ into 3 pieces: region 0 contains all the points less than $c$, region 2 contains only $c$, and region 4 contains all the points greater than $c$. Interval-interval relations will be denoted by $I\ii{k}{k'}J$ (where the subscript $_{ii}$ refers to interval-interval relations), where $k,k'\in\{0,1,2,3,4\}$, and $k$ represent the region of the partition induced by $I$ in which the left endpoint of $J$ falls, while $k'$ is the region of the same partition in which the right endpoint of $J$ falls; for example, $I\ii{3}{4}J$ is exactly Allen's relation {\em meets}. Similarly, interval-point relations will be denoted by $I\ip{k\,}a$ (where the subscript $_{ip}$ stands for interval-point relations), where $k$ represents the position of $a$ with respect to $I$; for example, $I\ip{4}a$ is the relation {\em before}. Analogously, point-point relations will be denoted by the symbol $\pp{k\,}$, and point-interval relations by the symbol $\pi{k}{k'}$. For point-point relations it is more convenient to use $<$ instead of $\pp{4}$, and $>$ instead of $\pp{0}$.  In Tab.~\ref{ii:relations} we show six of the interval-interval relations, along with its original nomenclature and symbology, and in Tab.~\ref{ip:relations} we show the interval-point relations. Finally, we consider one equality per sort, using  $=_i$ to denote $\ii 13$ (equality between intervals), and $=_p$ to denote $\pp 2$ (the equality between points). Now, given any of the mentioned relations $r$, its inverse, generically denoted by $\bar r$, can be obtained by inverting the roles of the objects in the case of non-mixed relations; for example, the inverse of the relation $\ii{2}{2}$ (Allen's relation {\em contains}) is the relation $\ii{0}{4}$ (Allen's relation {\em during}). On the other hand, mixed relations present a different situation: the inverse of a point-interval relation is an interval-point relation; thus, for example, the inverse of $\ip{3}$ is $\pi{0}{2}$. Finally, notice that some combinations are forbidden: for instance, the relation $\pi{2}{2}$ makes no sense, as all intervals have a non-zero extension.

\begin{definition}
We shall denote by: $\allr$ the set of all above described relations; $\alli\subset\allr$ the subset of all 13 interval-interval relations (Allen's relations) including the relation $=_i$; $\allm\subset\allr$ the subset of all mixed relations; $\allp\subset\allr$ the subset of all point-point relations including the relation $=_p$. Clearly, $\allr=\alli \bigcup \allm \bigcup \allp$.
\end{definition}

\begin{table}[t]
\centering
\begin{picture}(100,100)
\color{red}
\put(60,90){\line(1,0){40}}
\put(60,87){\line(0,1){6}}
\put(100,87){\line(0,1){6}}
\put(58,94){\small{$a$}}
\put(98,94){\small{$b$}}
\color{black}
\put(-70,74){\small{$[a,b]~\ip{3}~c \Leftrightarrow b=c$}}
\put(98,77){$\cdot$}
\put(97,81){\small{$c$}}
\put(-70,58){\small{$[a,b]~\ip{4}~c \Leftrightarrow b<c$}}
\put(120,60){$\cdot$}
\put(119,64){\small{$c$}}
\put(-70,42){\small{$[a,b]~\ip{2}~c \Leftrightarrow a<c<b$}}
\put(80,45){$\cdot$}
\put(79,49){\small{$c$}}
\put(-70,28){\small{$[a,b]~\ip{1}~c \Leftrightarrow a=c$}}
\put(59,29){$\cdot$}
\put(58,35){\small{$c$}}
\put(-70,12){\small{$[a,b]~\ip{0}~c \Leftrightarrow c<a$}}
\put(40,10){$\cdot$}
\put(39,15){\small{$c$}}
\multiput(60,15)(0,8){9}{\line(0,1){3}}
\multiput(100,15)(0,8){9}{\line(0,1){3}}
\end{picture}
\caption{Interval-point relations.}
\label{ip:relations}
\end{table}

\begin{definition}
In the following, we denote by:

\begin{enumerate}[label=\emph{(\roman*)}]
\item $\Lin$ the class of all linear orders;
\item $\Den$ the class of all dense linear orders, that is, the class of all linear orders where there exists a point in between any two distinct points;
\item $\Dis$ the class of all weakly discrete linear orders, that is, the class of all linear orders where each point, other than the least (resp., greatest) point, if there is one, has a direct predecessor (resp., successor) -- by a {\em direct predecessor} of $a$ we of course mean a point $b$ such that $b < a$ and for all points $c$, if $c < a$ then $c \leq b$, and the notion of a {\em direct successor} is defined dually;
\item $\Unb$ the class of all unbounded linear orders, that is, the class of all linear order such that for every point $a$ there exists a point $b>a$ and a point $c<a$.
\end{enumerate}
\end{definition}

\begin{definition}
Given a linear order $\mathbb D$, and given the set $\mathbb I(\mathbb D)=\{[a,b]\mid a,b\in \mathbb D, a<b\}$ of all intervals built on $\mathbb D$:
\begin{itemize}
\item a \emph{concrete interval structure of signature $S$} is a relational structure $\mathcal{F} = \langle \mathbb I(\mathbb D), r_1,$ $r_2, \ldots,r_n \rangle$, where $S = \{r_1, \ldots,r_n \} \subseteq \alli$, and
\item a \emph{concrete point-interval structure of signature $S$} is a two-sorted relational structure $\mathcal{F} = \langle \mathbb D,\mathbb I(\mathbb D), r_1, r_2, \ldots,$ $r_n \rangle$, where $S = \{r_1, \ldots, r_n \} \subseteq \allr$.
\end{itemize}
\end{definition}

\noindent Since all relations between intervals, points, and all mixed relations are already implicit in $\mathbb I(\mathbb D)$, we shall often simply write $\langle \mathbb I(\mathbb D)\rangle$ for a concrete interval structure $\langle \mathbb I(\mathbb D), r_1, r_2, \ldots, r_n \rangle$, and $\langle \mathbb{D}, \mathbb I(\mathbb D)\rangle$ for a concrete point-interval structure $\langle \mathbb D,\mathbb I(\mathbb D), r_1, r_2, \ldots, r_n \rangle$; this is in accordance with the standard usage in much of the literature on interval temporal logics. Moreover, we denote by $FO+S$ the language of first-order logic without equality and relation symbols corresponding to the relations in $S$. Finally, $\mathcal{F}$ is further said to be \emph{of the class $\mathrm C$} ($\mathrm C\in\{\Lin,\Den,\Dis,\Unb\}$) when $\mathbb D$ belongs to the specific class of linear orders $\mathrm C$.

%
%
%
%
%
%

\subsection{(Un)definability and Truth Preserving Relations}

We describe here the most important tools that we use to classify the expressive power of our (sub-)languages.

\begin{definition}
Let $S \subseteq \allr$, and $\mathrm C$ a class of linear orders. We say that $FO + S$ {\em defines} $r\in \allr$ over $\mathrm C$, denoted by $FO+S\rightarrow_{\mathrm C} r$, if there exists an $FO+S$-formula $\varphi(x,y)$ such that $\varphi(x,y) \leftrightarrow r(x,y)$ is valid on the class of concrete point-interval structures of signature $(S \cup \{ r \})$ based on $\mathrm C$.
\end{definition}

\noindent By $FO+S\rightarrow r$ we denote the fact that $FO+S\rightarrow_{\Lin} r$ (and hence $FO+S\rightarrow_{\mathrm C} r$ for every $\mathrm C\in\{\Lin,\Den,\Dis,\Unb\}$). Obviously, $FO+S\rightarrow r$ for all $r \in S$.

\begin{definition}
Let $S,S'\subseteq \allr$ and $\mathrm{C}$ a class of linear orders. We say that $S$ is:
\begin{itemize}
\item {\em $S'$-complete over $\mathrm C$} (resp., {\em $S'$-incomplete over $\mathrm C$}) if and only if $FO+S\rightarrow_{\mathrm C} r$ for all $r\in S'$ (resp., $FO+S\not\rightarrow_{\mathrm C} r$ for some $r\in S'$), and
\item {\em minimally $S'$-complete over $\mathrm C$} (resp., {\em maximally $S'$-incomplete over $\mathrm C$}) if and only if it is $S'$-complete (resp., $S'$-incomplete) over $\mathrm C$, and every proper subset (resp., every proper superset) of $S$ is $S'$-incomplete (resp., $S'$-complete) over the same class.
\end{itemize}
\end{definition}

\noindent The notion of (minimally) $r$-completeness and (maximally) $r$-incompleteness over $\mathrm C$ is immediately deduced from the above one, by taking $S'=\{r\}$ and denoting the latter simply by $r$. Moreover, one can project the above definitions over some interesting subsets of $\allr$, such as $\alli,\allm$ or $\allp$, obtaining relative completeness and incompleteness.

%


\medskip

Let $C' \subseteq C$ be two classes of linear orders. Notice that if $FO+S\rightarrow_{\mathrm C} r$ then $FO+S\rightarrow_{\mathrm C'} r$ and, contrapositively, that if $FO+S \not \rightarrow_{\mathrm C'} r$ then $FO+S \not \rightarrow_{\mathrm C} r$. So specifically, if $S$ is $S'$-complete over $\mathrm C$, then it is also $S'$-complete over $\mathrm C'$. Also, if $S$ is $S'$-incomplete over $\mathrm C'$, then it is also $S'$-incomplete over $\mathrm C$. Notice however, that minimality and maximality of complete and incomplete sets does not necessarily transfer between super and subclasses in a similar way. In what follows, in order to prove that $FO+S \not \rightarrow_{\mathrm C} r$ for some $r$ and some class $\mathrm C$, we shall repeatedly apply the following definition and (rather standard) procedure.

\begin{definition}\label{Def:S:Truth:Pres:Rel}
Let $\mathcal{F} = \langle \mathbb D,\mathbb I(\mathbb D),S\rangle$ and $\mathcal{F}' = \langle \mathbb D',\mathbb I(\mathbb D'), S\rangle$ be concrete structures where $S\subseteq\allr$. A binary relation $\zeta\subseteq(\mathbb D\cup\mathbb I(\mathbb D)) \times (\mathbb D' \cup \mathbb I(\mathbb D'))$ is called a \emph{surjective $S$-truth preserving relation} if and only if:

\medskip

\begin{enumerate}[label=\emph{(\roman*)}]
\item $\zeta$ respects sorts, i.e., $\zeta = \zeta_p \cup \zeta_i$, where $\zeta_p \subseteq \mathbb D \times \mathbb D'$ and $\zeta_i \subseteq \mathbb I(\mathbb D) \times \mathbb I(\mathbb D')$;
\item $\zeta$ respects the relations in $S$, i.e., if $(a,a'),(b,b')\in\zeta_p$ and $(I,I'),(J,J')\in\zeta_i$, then:
  \begin{enumerate}[noitemsep,topsep=0pt]
  \item $r(a,b)$ if and only if $r(a',b')$ for every point-point relation $r \in S$;
  \item $r(I,a)$ if and only if $r(I',a')$ for every interval-point relation $r\in S$;
  \item $r(I,J)$ if and only if $r(I',J')$ for every interval-interval relation $r \in S$;
  \end{enumerate}

\item $\zeta$ is total and surjective, i.e.:
  \begin{enumerate}[noitemsep,topsep=0pt]
    \item for every $a \in \mathbb D$ (resp., $I \in \mathbb I(\mathbb D)$), there exist $a' \in \mathbb D'$ (resp., $I' \in \mathbb I(\mathbb D')$) such that $(a,a')\in\zeta_p$ (resp., $(I,I')\in\zeta_i$);
    \item for every $a' \in \mathbb D'$ (resp., $I' \in \mathbb I(\mathbb D')$), there exist $a \in \mathbb D$ (resp., $I \in \mathbb I(\mathbb D)$) such that $(a,a')\in\zeta_p$ (resp., $(I,I')\in\zeta_i$).
    \end{enumerate}
\end{enumerate}
\end{definition}

\noindent If we add to Definition \ref{Def:S:Truth:Pres:Rel} the requirement that that $\zeta$ should be functional, we obtain nothing but the definition of an isomorphism between two-sorted first-order structures or, equivalently, an isomorphism between single sorted first-order structures with predicates added for `point' and `interval' (see e.g.\ \cite{hodges1993model}). As one would expect, surjective $S$-truth preserving relations preserve the truth of all first-order formulas in signature $S$. This is stated in Theorem \ref{theo:truth}, below. The reason why we consider only interval-point relations instead of all mixed relations is that, as we shall explain, we can limit ourselves to work without inverse relations, and point-interval relations are the inverse of interval-point ones.

\begin{definition}
If $\zeta$ is a surjective $S$-truth preserving relation, we say that $\zeta$ {\em breaks} $r\not\in S$ if and only if there are:
\begin{enumerate}[label=\emph{(\roman*)}]
    \item $(a,a'),(b,b')\in\zeta_p$ such that $r(a,b)$ but $\neg r(a',b')$, if $r$ is point-point, or
    \item $(a,a')\in\zeta_p$ and $(I,I')\in\zeta_i$ such that $r(I,a)$ but $\neg r(I',a')$, if $r$ is interval-point, or
    \item $(I,I'),(J,J')\in\zeta_i$ such that $r(I,J)$ but $\neg r(I',J')$, if $r$ is interval-interval.
\end{enumerate}
\end{definition}

\noindent The following result is, as already mentioned, a straightforward generalization of the classical result on the preservation of truth under isomorphism between first-order structures, and it is proved by an easy induction on formulas, using clause (ii) of Definition \ref{Def:S:Truth:Pres:Rel} to establish the base case for atomic formulas and clause (iii) for the inductive step for the quantifiers.

\begin{theorem}\label{theo:truth}
If $\zeta=\zeta_p\cup\zeta_i$ is a surjective $S$-truth preserving relation between $\mathcal{F} = \langle \mathbb D,\mathbb I(\mathbb D),S\rangle$ and $\mathcal{F}' = \langle \mathbb D',\mathbb I(\mathbb D'),S\rangle$, and $a_1,\ldots,a_k \in \mathbb D$, $a'_1,\ldots, a'_k \in \mathbb D$, $I_1,\ldots,I_l \in \mathbb I(\mathbb D)$, and $I'_1,\ldots,I'_l \in \mathbb I(\mathbb D')$ are such that $(a_j,a_j')\in\zeta_p$ for $1 \leq j \leq k$, and $(I_j,I'_j)\in\zeta_i$ for $1 \leq j \leq l$, then for every $FO+S$ formulas $\varphi(x_p^1,\ldots, x_p^k,y_i^1,\ldots, y_i^l)$ with free variables $x_p^1,\ldots x_p^k, y_i^1, \ldots y_i^l$, we have that
\[
\mathcal{F} \models \varphi(a_1, \ldots, a_k, I_1, \ldots, I_l) \text{ \ if and only if \ } \mathcal{F}' \models \varphi(a'_1, \ldots a'_k, I'_1, \ldots, I'_l).
\]
\end{theorem}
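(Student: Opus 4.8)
The plan is to prove the statement by structural induction on $\varphi$, after strengthening the induction hypothesis so that it ranges over \emph{arbitrary} finite tuples of $\zeta$-related elements; this is exactly the classical truth-preservation argument for isomorphisms, adapted to the fact that $\zeta$ is merely a relation. Concretely, I would establish: for every $FO+S$ formula $\psi$ and all point pairs $(a_1,a'_1),\dots,(a_m,a'_m)\in\zeta_p$ and interval pairs $(I_1,I'_1),\dots,(I_n,I'_n)\in\zeta_i$ matching the free variables of $\psi$, we have $\mathcal F\models\psi(\bar a,\bar I)$ if and only if $\mathcal F'\models\psi(\bar a',\bar I')$. The theorem is then just the special case of this claim for the given tuples.

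For the base case, $\psi$ is atomic, hence of the form $r(u,v)$ for some $r\in S$, and the sorts of $u,v$ are forced by the type of $r$: two point variables if $r$ is point-point, an interval and a point variable if $r$ is interval-point, and two interval variables if $r$ is interval-interval. (Recall that the language has no primitive equality symbol: equality of each sort, when present, occurs as one of the relations $=_p,=_i\in S$.) In each of these three situations the instantiating elements form $\zeta$-related pairs of the appropriate sort, so the corresponding sub-clause of clause (ii) of Definition~\ref{Def:S:Truth:Pres:Rel} yields $r$ in $\mathcal F$ iff $r$ in $\mathcal F'$. It is precisely here that it matters that clause (ii) demands preservation for \emph{all} related pairs and not merely for a chosen section: since $\zeta$ need not be functional, we cannot dictate which witnesses appear, but the universally quantified form of clause (ii) makes atomic truth stable under every admissible choice.

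The Boolean connectives are immediate from the induction hypothesis. The only informative step is the quantifier case, where clause (iii) enters. Consider an existential point-quantifier $\exists x_p\,\psi$ (interval quantifiers are handled identically, using the interval components of $\zeta$). If $\mathcal F\models\exists x_p\,\psi$, fix a witness $a\in\mathbb D$; by totality (clause (iii)(a)) there is $a'\in\mathbb D'$ with $(a,a')\in\zeta_p$, and applying the strengthened induction hypothesis to the tuples extended by this pair gives $\mathcal F'\models\psi$, whence $\mathcal F'\models\exists x_p\,\psi$. For the converse one fixes a witness $a'\in\mathbb D'$ and uses surjectivity (clause (iii)(b)) to produce a preimage $a$, again closing by the induction hypothesis. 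Universal quantifiers reduce to this via $\forall x\equiv\neg\exists x\neg$.

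I expect no serious obstacle: the whole argument is routine once the induction hypothesis is phrased over arbitrary related tuples. The single point deserving care — and the reason the non-functional nature of $\zeta$ causes no trouble — is the interplay just noted between clauses (ii) and (iii): totality and surjectivity guarantee that every quantifier witness on either side has a $\zeta$-partner, while the universal form of clause (ii) guarantees that any such partner validates the atomic formulas, so the induction never requires a canonical choice of witness.
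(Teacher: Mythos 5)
Your proof is correct and takes essentially the same route as the paper, which gives only a one-line sketch: an induction on formulas with the hypothesis strengthened to arbitrary $\zeta$-related tuples, using clause (ii) of Definition~\ref{Def:S:Truth:Pres:Rel} for the atomic base case and clause (iii) (totality and surjectivity) for the quantifier step. Your fleshed-out version, including the observation that the universally quantified form of clause (ii) is what makes the non-functionality of $\zeta$ harmless, is a faithful expansion of exactly that argument.
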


\noindent Thus, to show that $FO + S \not \rightarrow r$ for a given $r\in\allr$, it is sufficient to find two concrete point-interval structures $\mathcal{F}$ and $\mathcal{F}'$ and a surjective $S$-truth preserving relation $\zeta$ between  $\mathcal{F}$ and $\mathcal{F}'$ which breaks $r$. For the readers' convenience, let us refer to surjective $S$-truth preserving relations as simply $S$-{\em relations}.

Although there are other constructions that could be used to show that relations are not definable in $FO + S$, e.g.\ elementary embedding or Ehrenfeucht-Fra\"{i}ss\'{e} games, we have found $S$-relations sufficient for our purposes in this paper.

\subsection{Strategy}

The main objective of this paper is to establish all expressively different subsets of $\allr$ (and, then, of $\alli,\allm$ or $\allp$) over the mentioned classes of linear orders. To this end, for each $r\in\allr$ we compute all expressively different minimally $r$-complete and all maximally $r$-incomplete subsets of $\allr$, from which we can easily deduce all expressively different minimally $r$-complete and maximally $r$-incomplete subsets of $\alli,\allm$ and $\allp$; minimally $\allr$- (resp., $\alli-,\allm-,\allp-$) complete and maximally incomplete subsets are, then, deduced as a consequence of the above results. The set $\allr$ contains, as we have mentioned, 26 different relations. This means that there are $2^{26}$ potentially different extensions of first-order logic to be studied. Clearly, unless we design a precise strategy that allows us to reduce the number of results to be proved, the task becomes cumbersome.

\medskip

As a first simplification principle observe that, since we are working within first-order logic, all inverses of relations are explicitly definable, and hence we only need to assume as primitive a set which contains all relation up to inverses, which implies that point-interval relations can be omitted if we consider all interval-point ones. Accordingly, let $\alli^{+}$  be the set of interval-interval relations given in Tab.~\ref{ii:relations} together with $=_i$, $\allm^{+}$ be the set of interval-point relations given in Tab.~\ref{ip:relations}, and let $\allp^{+}=\{<,=_p\}$. Lastly let $\allr^{+}=\alli^{+} \bigcup \allm^{+} \bigcup \allp^{+}$.

\medskip

In order to further reduce the number of results to be presented, consider what follows. The {\em order dual} of a structure $\mathcal{F} = \langle\mathbb D,\mathbb I(\mathbb D)\rangle$ is the structure $\mathcal{F}^{\partial} = \langle\mathbb D^\partial,\mathbb I(\mathbb D^{\partial})\rangle$  based on the order dual $\mathbb D^{\partial}$ (obtained by reversing the order) of the underlying linear order $\mathbb D$. All classes considered in this paper are closed under taking order duals.

\begin{definition}
The \emph{reversible relations} are exactly the members of the set $\{\ip 0,\ip1,$ $\ip3,\ip4,$ $\ii 14,\ii 03\}$. The relations belonging to the complement  $\allr^{+} \setminus \{\ip 0,\ip1,$ $\ip3,\ip4,\ii 14,\ii 03\}$ are called \emph{symmetric}; if, in addition, $r=\ip2$ or $r=\ii04$, then $r$ is said {\em self-symmetric}. If $r=\ip 0$ (resp., $r=\ip 1, r=\ii14$), its {\em reverse} is $r=\ip 4$ (resp., $r=\ip 3,r=\ii 03$), and the other way around. Finally, the {\em symmetric} $S'$ of a subset $S\subseteq\allr^{+}$ is obtained by replacing every reversible relation in $S$ with its reverse. We shall use the notation $S \sim S'$ to indicate that sets $S$ and $S'$ are symmetric.
\end{definition}

\begin{figure}[t]
\scriptsize
\centering
\begin{code}{50mm}
\FUNCTION{Undef}{r\in\allr^{+},def\_rules}
\BEGIN
\FORALL \ S\subset\allr^{+}\\
\BEGIN
S=Closure(S,def\_rules);\\
\IF ((r\notin S) \AND (S \ is \ maximal)) \THEN
list \ S
\END\\
\END \\
\end{code}
\begin{code}{60mm}
\FUNCTION{Closure}{S,def\_rules}
\BEGIN
\WHILE  (S \ changes)\\
\BEGIN
\FORALL 1\le i\le size(def\_rules)\\
\BEGIN
\IF (def\_rules[i] \ applies) \THEN S=Apply(S,def\_rule[i])\\
\END\\
\END\\
\RETURN\ S
\END \\
\end{code}
\caption{Pseudo-code to identify maximally $r$-incomplete sets.}\label{fig:undef}
\end{figure}

\noindent This definition is motivated by the following easily verifiable facts. Let $r \in \allr^+$, $\mathcal{F}$ be a structure, and $x$ and $y$ be elements of $\mathcal{F}$ of the appropriate sorts for $r$; then:
\begin{enumerate}[label=\emph{(\roman*)}]
\item if $r$ is a reversible relation, with reverse $r'$, then $\mathcal{F} \models r(x,y)$ if and only if $\mathcal{F}^\partial \models r'(x,y)$;
\item if $r$ is self-symmetric, then $\mathcal{F} \models r(x,y)$ if and only if $\mathcal F^{\partial} \models r(x,y)$;
\item if $r$ is a symmetric, but not self-symmetric, relation, then $\mathcal F \models r(x,y)$ if and only if $\mathcal F^{\partial} \models r(y,x)$.
\end{enumerate}

\begin{table}[t]
\begin{center}
\begin{tabular}{|p{0.22\textwidth}|p{0.60\textwidth}|}
\hline
                  $\mathbb D,\mathbb D',\ldots$            & (generic) linearly ordered sets     \\
                  $x_p,y_p,\ldots$                         & first-order variables for points \\
                  $x_i,y_i,\ldots$                         & first-order variables for intervals \\
                  $x,y,\ldots$                             & first-order variables of any sort \\
                  {\em before},\ldots                       & relations in text are {\em emphasized}\\
                  $\mathcal F,\mathcal F',\ldots$          & (generic) concrete (point-)interval structures     \\
                  $S,S',\ldots$                            & (generic) subsets of $\allr$-relations  \\
                  $\zeta~(\zeta_p,\zeta_i)$                & surjective relation (for points, for intervals) \\
                  $Id_p (Id_i)$                            & `identity' relation over points (intervals) \\
                  $\mathrm C,\mathrm C'$                   & (generic) class of linearly ordered sets\\
                  $FO+S\rightarrow_{\mathrm C} r$          & $S$ defines $r$ (w.r.t. the class $C$)\\
                  $S\sim S'$                               & $S$ and $S'$ are symmetric\\
                  $a\in\mathbb D$                          & $a$ is a point of $D$, where $\mathbb D=(D,<)$\\
                  $\underline S$                           & in the text, a new proof case is \underline{underlined}\\
                  $r$                                      & generic relation \\
                  $\mathsf{mcs}$ ($\mathsf{mcs}(r)$)                         & minimally complete set (minimally $r$-complete set)\\
                  $\mathsf{MIS}$ ($\mathsf{MIS}(r)$)                         & maximally incomplete set (maximally $r$-incomplete set)\\
\hline
\end{tabular}
\end{center}\caption{Notational conventions used in this paper.}\label{tab:conv}
\end{table}

\begin{table}[t]
\small
\begin{center}
\begin{tabular}{|l|l|l|l|l|l|}
\hline
$=_p$        & $=_i$            & $<$                     & $\ip 0$                     & $\ip 1$                     & $\ip 2$                  \\
\hline
$\{<\}$      & $\{\ip 0,\ip 2\}$&$\{\ip0, \ip1 \}$        & $\{\ip1, < \}$              & $\{\ip0, \ip3 \}$           &$\{\ip0, \ip4 \}$          \\
$\{\ip 1\}$  & $\{\ip 0,\ip 3\}$&$\{\ip0, \ip3 \}$        & $\{\ip1, \ip2 \}$           & $\{\ip2, \ip3 \}$           &$\{\ip1, \ip3 \}$          \\
$\{\ip 3\}$  & $\{\ip 0,\ip 4\}$&$\{\ip1, \ip2 \}$        & $\{\ip1, \ip3 \}$           & $\{\ip2, \ip4, < \}$        &$\{\ip1, \ii03\}$          \\
             & $\{\ip 1,\ip 2\}$&$\{\ip1, \ip3 \}$        & $\{\ip1, \ip4 \}$           & $\{\ip2, \ii14, <\}$        &$\{\ip3, \ii14\}$          \\
             & $\{\ip 1,\ip 3\}$&$\{\ip3, \ip4 \}$        & $\{\ip2, \ip3 \}$           & $\{\ip2, \ii03, <\}$        &$\{\ip0, \ip3 \}$            \\
             & $\{\ip 1,\ip 4\}$&$\{\ip1, \ip4 \}$        & $\{\ip2, \ip4 \}$           & $\{\ip2, \ii34, <\}$        &$\{\ip0, \ii14, \ii24, < \}$ \\
             & $\{\ip 2,\ip 3\}$&$\{\ip2, \ip3 \}$        & $\{\ip2, \ii14, < \}$       & $\{\ip3, \ii14 \}$          &$\{\ip0, \ii14, \ii04, < \}$ \\
             & $\{\ip 2,\ip 4\}$&$\{\ip3, \ii14 \}$       & $\{\ip2, \ii03, < \}$       & $\{\ip3, \ii24, \ii03 \}$   &$\{\ip0, \ii14, \ii44, < \}$ \\
             & $\{\ii 14\}$     &$\{\ip3, \ii34 \}$       & $\{\ip3, \ii14 \}$          & $\{\ip3, \ii03, \ii04 \}$   &$\{\ip0, \ii03, < \}$        \\
             & $\{\ii 03\}$     &$\{\ip1, \ii03 \}$       & $\{\ip4, \ii14, < \}$       & $\{\ip3, \ii03, \ii44 \}$   &$\{\ip0, \ii34, < \}$        \\
             & $\{\ii 34\}$     &$\{\ip1, \ii34 \}$       & $\{\ip1, \ii14, \ii24 \}$   & $\{\ip3, \ii34 \}$          &$\{\ip1, \ip4 \}$            \\
             &                  &$\{\ip1, \ii04, =_i \}$  & $\{\ip1, \ii14, \ii04 \}$   & $\{\ip4, \ii14, < \}$       &$\{\ip1, \ii14, \ii24 \}$    \\
             &                  &$\{\ip1, \ii24, =_i \}$  & $\{\ip1, \ii14, \ii44 \}$   & $\{\ip4, \ii24, \ii03, < \}$&$\{\ip1, \ii14, \ii04 \}$    \\
             &                  &$\{\ip1, \ii44, =_i \}$  & $\{\ip1, \ii24, =_i \}$     & $\{\ip4, \ii03, \ii04, < \}$&$\{\ip1, \ii14, \ii44 \}$    \\
             &                  &$\{\ip3, \ii04, =_i \}$  & $\{\ip1, \ii04, =_i \}$     & $\{\ip4, \ii03, \ii44, < \}$&$\{\ip1, \ii34 \}$           \\
             &                  &$\{\ip3, \ii24, =_i \}$  & $\{\ip1, \ii44, =_i \}$     & $\{\ip4, \ii34, < \}$       &$\{\ip3, \ii24, \ii03 \}$    \\
             &                  &$\{\ip3, \ii44, =_i \}$  & $\{\ip1, \ii03 \}$          &                             &$\{\ip3, \ii03, \ii04 \}$    \\
             &                  &$\{\ip1, \ii14, \ii24 \}$& $\{\ip1, \ii34 \}$          &                             &$\{\ip3, \ii03, \ii44 \}$    \\
             &                  &$\{\ip1, \ii14, \ii04 \}$& $\{\ip2, \ii34, < \}$       &                             &$\{\ip3, \ii34 \}$           \\
             &                  &$\{\ip1, \ii14, \ii44 \}$& $\{\ip3, \ii34 \}$          &                             &$\{\ip4, \ii14, < \}$        \\
             &                  &$\{\ip3, \ii24, \ii03 \}$& $\{\ip3, \ii24, \ii03 \}$   &                             &$\{\ip4, \ii24, \ii03, < \}$ \\
             &                  &$\{\ip3, \ii03, \ii04 \}$& $\{\ip3, \ii03, \ii04 \}$   &                             &$\{\ip4, \ii03, \ii04, < \}$ \\
             &                  &$\{\ip3, \ii03, \ii44 \}$& $\{\ip3, \ii03, \ii44 \}$   &                             &$\{\ip4, \ii03, \ii44, < \}$ \\
             &                  &                         & $\{\ip4, \ii34, < \}$       &                             &                             \\
             &                  &                         & $\{\ip4, \ii24, \ii03, < \}$&                             &                             \\
             &                  &                         & $\{\ip4, \ii03, \ii04, < \}$&                             &                             \\
             &                  &                         & $\{\ip4, \ii03, \ii44, < \}$&                             &                             \\
\hline
\end{tabular}
\caption{The spectrum of the $\mathsf{mcs}(r)$, for each $r\in\allm^+\cup\{=_p,=_i,<\}$. - Class: $\Lin$ (review).}\label{tab:alllin1}
\end{center}
\end{table}

\noindent The following crucial lemma capitalizes on these facts.

\begin{lemma}\label{lem:symm}
Let $S, S' \subset\allr^{+}$ be such that $S \sim S'$. If $r$ is a symmetric relation, then $FO+S\rightarrow r$ if and only if  $FO+S'\rightarrow r$. Moreover, if $r$ is a reversible relation with reverse $r'$, then  $FO+S\rightarrow r$ if and only if  $FO+S'\rightarrow r'$.
\end{lemma}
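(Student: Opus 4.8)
The plan is to turn the order-dual operation on structures into a syntactic operation on formulas, driven by the three facts (i)--(iii) that precede the lemma. Concretely, I would define by recursion a map $\varphi\mapsto\varphi^{\partial}$ sending $FO+S$-formulas to $FO+S'$-formulas. On atomic formulas it acts according to the type of the relation symbol: a reversible $r$ with reverse $r'$ is replaced by $r'(x,y)$; a self-symmetric $r$ is kept as $r(x,y)$; and a symmetric but not self-symmetric $r$ is replaced by $r(y,x)$, with its two arguments swapped. Boolean connectives are left untouched, and each quantifier is kept together with its sort ($\exists x_p$ or $\exists x_i$). Since $S\sim S'$ replaces exactly the reversible relations of $S$ by their reverses and leaves every symmetric relation fixed, every relation symbol occurring in $\varphi^{\partial}$ lies in $S'$, so $\varphi^{\partial}$ is genuinely an $FO+S'$-formula.

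Next I would establish the semantic correspondence: for every concrete point-interval structure $\mathcal{F}$ and every assignment $\bar a$ to the free variables, $\mathcal{F}\models\varphi[\bar a]$ if and only if $\mathcal{F}^{\partial}\models\varphi^{\partial}[\bar a]$, where $\bar a$ denotes the same elements read in $\mathcal{F}$ and in $\mathcal{F}^{\partial}$ under the canonical identification of their domains. This is proved by induction on $\varphi$. The atomic case is exactly facts (i), (ii), and (iii) for the three kinds of atoms, the argument swap in the clause for symmetric, non-self-symmetric relations being precisely what fact (iii) demands; the Boolean cases are immediate; and the quantifier cases hold because $\partial$ preserves sorts and $\mathcal{F}$ and $\mathcal{F}^{\partial}$ share the same underlying domains, so $\exists x_p$ and $\exists x_i$ range over the same sets in both. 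Since $\partial$ is an involution on structures and the dualization is an involution on formulas (a double reverse and a double variable-swap both cancel), the correspondence is symmetric in $\mathcal{F}$ and $\mathcal{F}^{\partial}$.

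With this in hand I would transfer definability over $\Lin$, which is closed under order duals. Suppose $FO+S\rightarrow r$, witnessed by $\varphi(x,y)$, so that $\varphi(x,y)\leftrightarrow r(x,y)$ holds in every concrete structure based on a linear order. Fix an arbitrary linear order $\mathbb{E}$; then $\mathbb{E}^{\partial}$ is again a linear order, and applying the hypothesis that $\varphi$ defines $r$ to the order $\mathbb{E}^{\partial}$, together with the correspondence taken at $\mathcal{F}=\mathbb{E}^{\partial}$ (so that $\mathcal{F}^{\partial}=\mathbb{E}$), gives $\mathbb{E}\models\varphi^{\partial}(a,b)$ if and only if $\mathbb{E}^{\partial}\models r(a,b)$. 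One then rewrites the right-hand side using the relevant fact about $r$: if $r$ is self-symmetric, fact (ii) yields $\mathbb{E}\models r(a,b)$, so $\varphi^{\partial}$ defines $r$; if $r$ is symmetric but not self-symmetric, fact (iii) yields $\mathbb{E}\models r(b,a)$, so $\varphi^{\partial}(y,x)$ defines $r$; and if $r$ is reversible with reverse $r'$, fact (i) yields $\mathbb{E}\models r'(a,b)$, so $\varphi^{\partial}$ defines $r'$. In each case this is the required forward implication. The converse implication in each equivalence is obtained by re-running the same argument with $S'$ and $S$ interchanged, which is legitimate because $\sim$ is symmetric and reversal satisfies $(r')'=r$.

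The only delicate point is the bookkeeping in the first two steps: one must check that the three-way case analysis on atoms simultaneously (a) produces a formula whose signature is exactly $S'$ rather than something larger -- this is where the hypothesis $S\sim S'$ is used essentially -- and (b) matches facts (i)--(iii) on the nose, in particular the argument swap dictated by fact (iii) for symmetric, non-self-symmetric relations. Once the atomic clauses are correctly aligned with the duality facts, both the induction and the definability transfer are mechanical.
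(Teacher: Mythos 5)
Your proposal is correct and follows essentially the same route as the paper's own proof: the identical syntactic dualization $\varphi\mapsto\varphi'$ on atoms (reverse reversibles, swap arguments of symmetric non-self-symmetric relations, fix self-symmetric ones), the same induction establishing $\mathcal F\models\varphi$ iff $\mathcal F^{\partial}\models\varphi'$ from the three facts preceding the lemma, and the same transfer of definability via closure of $\Lin$ under order duals. If anything, you are slightly more careful than the paper on two points it leaves implicit --- that the translated formula has signature contained in $S'$ (where $S\sim S'$ is used), and that in the symmetric non-self-symmetric case one must take $\varphi'(y,x)$ rather than $\varphi'(x,y)$ as the defining formula.
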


\proof Let $S, S' \subset\allr^{+}$ such that $S \sim S'$. For any $FO+S$ formula $\varphi$ that defines a given relation (and, therefore, with exactly two free variables), let $\varphi'$ denote the formula obtained from $\varphi$ by replacing every occurrence of a reversible relation with its reverse, and by swapping the arguments of every symmetric, but not self-symmetric, relation (occurrences of every self-symmetric relation are left unchanged). Induction on formulas then shows that $\mathcal F \models \varphi(x,y)$ (after substituting $x,y$ with elements of the appropriate sorts) if and only if $\mathcal F^\partial \models \varphi'(x,y)$, for any structure $\mathcal F$. The base case of the induction is taken care of by the three observations preceding this lemma. Now, suppose that a $FO+S$ formula $\varphi(x,y)$ defines a symmetric relation $r$. We claim that $\varphi'$ also defines $r$. Let $\mathcal F$ be an arbitrary structure of signature $S\cup\{r\}$. Then $\mathcal F^{\partial} \models \varphi(x,y) \leftrightarrow r(x,y)$, and hence $\mathcal F \models \varphi'(x,y) \leftrightarrow r(y,x)$ if $r$ is not self-symmetric, and $\mathcal F \models \varphi'(x,y) \leftrightarrow r(x,y)$ otherwise. Next, suppose that the $FO+S$ formula $\varphi(x,y)$ defines a reversible relation $r$. We claim that $\varphi'$ defines its reverse $r'$. Let $\mathcal F$ be an arbitrary structure of signature $S\cup\{r\}$. Then $\mathcal F^{\partial} \models \varphi(x,y) \leftrightarrow r(x,y)$, and, hence, $\mathcal F \models \varphi'(x,y) \leftrightarrow r'(x,y)$.
\qed

In conclusion, we can limit our attention to 14 out of 26 relations by disregarding the inverses of relations in $\allr^{+}$, and we do not need to explicitly
analyze complete and incomplete sets for $\ip 3$, $\ip 4$, and $\ii 03$ as those correspond exactly to symmetric of complete and incomplete sets for $\ip 0$, $\ip 1$, and $\ii 14$, respectively. This means that only 11 relations are to be analyzed (which we can refer to as {\em explicit} relations).

\medskip

Even under the mentioned simplifications, there is a huge number of results to be presented and displayed. Let $r$ be anyone of the explicit relations.
In order to correctly identifying all minimally $r$-complete sets ($\mathsf{mcs}(r)$), we need to know all maximally $r$-incomplete sets ($\mathsf{MIS}(r)$) over the same class, and the other way around. To this end, we proceed in the following way:

\begin{enumerate}
\item fixed a class of linearly ordered sets and an explicit relation $r$, we first guess the $r$-complete subsets of $\allr^{+}$, obtaining
a first approximation of the definability rules for $r$ (here, denoted by $def\_rules$);
\item then, we apply the algorithm in Fig.~\ref{fig:undef}, which uses the set of $r$-complete subsets of $\allr^+$ (the parameter $def\_rules$) to obtain a first approximation of the maximally $r$-incomplete sets (the procedure $Closure()$ returns the transitive closure of a set of relations, obtained by systematically applying the definition rules contained in $def\_rules$ - function $Apply()$);
\item after that, we prove that every $R_1,R_2,\ldots,R_k$ listed as a maximally $r$-incomplete set is actually $r$-incomplete, and, if not, we repeat from step 1, using the acquired knowledge to update the set of $r$-complete subsets of $\allr^+$;
\item at this point, the sets $S_1,S_2,\ldots,S_{k'}$ listed at step 1 are, actually, all minimally $r$-complete. To see this, observe that, for each $i$, $S_i$ is $r$-complete by definition, and if there was a $r$-complete set $S\subset S_i$, then for some $R_j$ listed as maximally $r$-incomplete set we could not prove its $r$-incompleteness. Therefore, $S_1,S_2,\ldots,S_{k'}$ are, in fact, all minimally $r$-complete, and, as a consequence, $R_1,R_2,\ldots,R_k$ are all maximally $r$-incomplete.
\end{enumerate}

\noindent For example, suppose that we are interested in minimal $=_i$-complete (resp., maximal $=_i$-incomplete) sets of relations. First, we guess that having the relations $\ip 1,\ip2,$ and $\ip 3$ is sufficient to define equality between intervals, because we can easily define the latter by asserting that two intervals are the same if they {\em start} with the same point, {\em end} with the same point, and {\em contain} the same points; so, $S=\{\ip 1,\ip2,\ip 3\}$ is our first guess. Now, by computing all possible subsets, we find that if $S$ were minimal, then $\allr^{+}\setminus S$ should be maximally $=_i$-incomplete. Since we cannot find a proof of the latter, we realize that, in fact, $S'=\{\ip 1,\ip 3\}$ is sufficient to define $=_i$. This back-and-forth procedure continues until it stabilizes; clearly, one cannot find one minimal $r$-complete subset without finding, at the same time, all maximal $r$-incomplete subsets, and the other way around. In the following, for each completeness result, its minimality is left unproven; after the corresponding undefinability results are proven, then, by observing that our sets are systematically computed, their respective minimality and maximality will be a consequence.


\medskip

The most common notational conventions used in the paper are listed in Tab.~\ref{tab:conv}.

\section{Completeness Results in the Class $\Den$}\label{sec:den}

\begin{table}[t]
\small
\begin{center}
\begin{tabular}{|p{0.17\textwidth}|p{0.17\textwidth}|p{0.17\textwidth}|p{0.17\textwidth}|p{0.17\textwidth}|}
\hline
$\ii 34$            & $\ii 14$          & $\ii 24$                & $\ii 04$                 & $\ii 44$                   \\
\hline
$\{\ip 1,\ip 3\}$   & $\{\ip 0,\ip2 \}$ &$\{\ip 2,<\}$            & $\{\ip 0,\ii 24,\ii44\}$ & $\{\ip 0,\ii 24,\ii04\}$    \\
$\{\ip 2,\ii 14 \}$ & $\{\ip 0,\ip4 \}$ &$\{\ip 0,\ii 44,\ii04\}$ & $\{\ip 1,\ii 24,\ii44\}$ & $\{\ip 1,\ii 24,\ii04\}$    \\
$\{\ip 4,\ii 14 \}$ & $\{\ip0, \ip3 \}$ &$\{\ip 1,\ii 44,\ii04\}$ & $\{\ip 2,\ii 24,\ii44\}$ & $\{\ip 4,\ii 24,\ii04\}$    \\
$\{\ip0, \ip2 \}$   & $\{\ip0, \ii03\}$ &$\{\ip 4,\ii 44,\ii04\}$ & $\{\ip 4,\ii 24,\ii44\}$ & $\{\ip 3,\ii 24,\ii04\}$    \\
$\{\ip2, \ii03 \}$  & $\{\ip1, \ip2 \}$ &$\{\ip 3,\ii 44,\ii04\}$ & $\{\ip 3,\ii 24,\ii44\}$ & $\{\ip0, \ip2 \}$             \\
$\{\ip0, \ii03 \}$  & $\{\ip1, \ip3 \}$ &$\{\ip0,\ip2 \}$         & $\{\ip0, \ip2 \}$        & $\{\ip0, \ip3 \}$           \\
$\{\ip2, \ip4 \} $  & $\{\ip1, \ip4 \}$ &$\{\ip0,\ip3 \}$         & $\{\ip0, \ip3 \}$        & $\{\ip0, \ip4 \}$           \\
$\{\ii 24,\ii 14 \}$& $\{\ip1, \ii03\}$ &$\{\ip0,\ip4 \}$         & $\{\ip0, \ip4 \}$        & $\{\ip0, \ii03\}$           \\
$\{\ii 14,\ii 44 \}$& $\{\ip2, \ip3 \}$ &$\{\ip0,\ii03\}$         & $\{\ip0, \ii03\}$        & $\{\ip1, \ip2\}$               \\
$\{\ii 14,\ii 03 \}$& $\{\ip2, \ip4 \}$ &$\{\ip1,\ip2 \}$         & $\{\ip1, \ip2 \}$        & $\{\ip1, \ip3 \}$              \\
$\{\ii 14, \ii04 \}$& $\{\ip2, \ii03\}$ &$\{\ip1,\ip3 \}$         & $\{\ip1, \ip3 \}$        & $\{\ip1, \ip4 \}$              \\
$\{\ii24, \ii03 \}$ & $\{\ii24, \ii03\}$&$\{\ip1,\ip4 \}$         & $\{\ip1, \ip4 \}$        & $\{\ip1, \ii03 \}$             \\
$\{\ii03, \ii04 \}$ & $\{\ii03, \ii04\}$&$\{\ip1,\ii03\}$         & $\{\ip1, \ii03 \}$       & $\{\ip2, \ip3 \}$              \\
$\{\ii03, \ii44 \}$ & $\{\ii03, \ii44\}$&$\{\ip2,\ip3 \}$         & $\{\ip2, \ip3 \}$        & $\{\ip2, \ip4 \}$              \\
$\{\ip0, \ip3 \}$   & $\{\ii34 \}$      &$\{\ip2,\ip4 \}$         & $\{\ip2, \ip4 \}$        & $\{\ip2, \ii14 \}$             \\
$\{\ip0, \ip4 \}$   &                   &$\{\ip2,\ii14\}$         & $\{\ip2, \ii14 \}$       & $\{\ip2, \ii03 \}$             \\
$\{\ip1, \ip2 \}$   &                   &$\{\ip2,\ii03\}$         & $\{\ip2, \ii03 \}$       & $\{\ip3, \ii14 \}$             \\
$\{\ip1, \ip4 \}$   &                   &$\{\ip3,\ii14\}$         & $\{\ip2, \ii44,< \}$     & $\{\ip4, \ii14 \}$             \\
$\{\ip1, \ii03 \}$  &                   &$\{\ip4,\ii14 \}$        & $\{\ip3, \ii14 \}$       & $\{\ii14, \ii24 \}$            \\
$\{\ip2, \ip3 \}$   &                   &$\{\ii14,\ii03\}$        & $\{\ip4, \ii14 \}$       & $\{\ii14, \ii03 \}$            \\
$\{\ip3, \ii14 \}$  &                   &$\{\ii14,\ii04\}$        & $\{\ii14, \ii24 \}$      & $\{\ii14, \ii04 \}$          \\
                    &                   &$\{\ii14,\ii44\}$        & $\{\ii14, \ii03 \}$      & $\{\ii24, \ii03 \}$          \\
                    &                   &$\{\ii03,\ii44\}$        & $\{\ii14, \ii44 \}$      & $\{\ii03, \ii04 \}$          \\
                    &                   &$\{\ii04,\ii 03\}$       & $\{\ii24, \ii03 \}$      & $\{\ii34 \}$                    \\
                    &                   &$\{\ii34\}$              & $\{\ii03, \ii44 \}$      &                                 \\
                    &                   &                         & $\{\ii34 \}$             &                                 \\
\hline
\end{tabular}
\caption{The spectrum of the $\mathsf{mcs}(r)$, for each $r\in\alli^+\setminus\{=_i\}$. - Class: $\Lin$ (review).}\label{tab:alllin2}
\end{center}
\end{table}

As it turns out, only a few of the results that appear in Tab.~\ref{tab:alllin1} and
Tab.~\ref{tab:alllin2} are to be refined to obtain all minimally complete sets under the assumption of density. In the following, we show in boldface those complete sets that are stricter than some complete set for a given relation, or completely new, and we prove them explicitly modulo symmetry.

\subsection{Definability in  $\alli^+$ and in $\allm^+$.}\label{subsec:allimlin}

We begin our study of definability over the Class $\Den$ by considering minimal definability of $\alli^+$ relations by sets of $\alli^+$ relations, and then of $\allm^+$ relations by sets of $\allm^+$ relations.

\begin{table}[t]
	\small
	\begin{center}
		\begin{tabular}{|p{0.14\textwidth}|p{0.14\textwidth}|p{0.14\textwidth}|p{0.14\textwidth}|p{0.14\textwidth}|p{0.14\textwidth}|}
			\hline
			$\ii 34$           &$\ii 14$           &$\ii 24$             &$\ii 04$            &$\ii 44$ 		     &$=_i$\\
			\hline
			$\{\ii14,\ii03\}$   &$\{\ii34\}$  &$\{\ii34\}$		     &$\{\ii34\}$		  &$\{\ii34\}$ 		&$\{\ii34\}$\\
			$\{\ii14,\ii04\}$          &$\{\ii14,\ii04\}$  &${\{\ii44\}}$	 &$\boldsymbol{\{\ii24\}}$   &$\boldsymbol{\{\ii24\}}$&$\{\ii 14\}$\\
			$\boldsymbol{\{\ii24\}}$          &$\boldsymbol{\{\ii24\}}$  &$\{\ii14,\ii03\}$    &$\boldsymbol{\{\ii44\}}$   &$\{\ii14,\ii03\}$&$\{\ii 03\}$\\
			$\{\ii03,\ii04\}$          &$\{\ii03,\ii04\}$  &$\{\ii14,\ii04\}$    &$\{\ii14,\ii03\}$   &$\{\ii14,\ii04\}$&\\
			$\boldsymbol{\{\ii44\}}$          &$\boldsymbol{\{\ii44\}}$  &$\{\ii03,\ii04\}$    &                    &$\{\ii03,\ii04\}$&\\
			\hline
			
		\end{tabular}
		\caption{The spectrum of the $\mathsf{mcs}(r)$, for each $r\in\alli^+$. - Class: $\Den$.}\label{tab:mcsIntIntDen} (Lemma~\ref{lem:tab:mcsIntIntDen}) 
	\end{center}
\end{table}

\begin{lemma}\label{lem:tab:mcsIntIntDen}
	Tab.~\ref{tab:mcsIntIntDen} is correct.
\end{lemma}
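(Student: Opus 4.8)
The plan is to leverage the inclusion $\Den\subseteq\Lin$ together with Lemma~\ref{lem:symm}, so that only the genuinely new (boldface) completeness claims of Tab.~\ref{tab:mcsIntIntDen} require an argument. First I would invoke the observation made right after the definition of $\rightarrow_{\mathrm C}$, namely that definability is inherited by subclasses; hence every $\mathsf{mcs}(r)$ already recorded over $\Lin$ in Tab.~\ref{tab:alllin1} and Tab.~\ref{tab:alllin2} remains $r$-complete over $\Den$, which disposes of all the non-boldface entries. By Lemma~\ref{lem:symm} it then suffices to treat the explicit relations and to establish each new completeness result modulo symmetry. Concretely, the new content is that over $\Den$ the singletons $\{\ii{2}{4}\}$ (overlaps) and $\{\ii{4}{4}\}$ (before) become complete for $\ii{3}{4}$, $\ii{1}{4}$ and $\ii{0}{4}$, and that $\{\ii{2}{4}\}$ is complete for $\ii{4}{4}$; as per the stated strategy, the minimality of these sets is left to be read off later, once the matching undefinability results are in place.

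The core of the argument is a single density-driven gadget for overlaps. I would set $\beta(x,y) \equiv \exists z\,(x\,\ii{2}{4}\,z \wedge z\,\ii{2}{4}\,y) \wedge \neg(x\,\ii{2}{4}\,y)$ and show that, over $\Den$, $\beta([a,b],[c,d])$ holds exactly when $b\le c$. The point is that $x\,\ii{2}{4}\,z \wedge z\,\ii{2}{4}\,y$ forces $a<c$ and $b<d$ while placing both $b$ and $c$ strictly inside $z$; density makes the existential over $z$ equivalent to $a<c\wedge b<d$, and conjoining $\neg(x\,\ii{2}{4}\,y)$ then collapses the condition to $b\le c$. Given $\beta$, meets and before are immediate: $\ii{3}{4}(x,y) \equiv \beta(x,y)\wedge\neg\exists z\,(\beta(x,z)\wedge\beta(z,y))$ and $\ii{4}{4}(x,y)\equiv\exists z\,(\beta(x,z)\wedge\beta(z,y))$, since by density the witnessing interval (namely $[b,c]$, respectively one lying strictly in the gap between $b$ and $c$) exists precisely when $b=c$, respectively $b<c$. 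This yields $\{\ii{2}{4}\}\rightarrow_{\Den}\ii{3}{4}$ and $\{\ii{2}{4}\}\rightarrow_{\Den}\ii{4}{4}$.

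For the before-column I would argue directly that $\{\ii{4}{4}\}$ defines meets. With before in hand, the left endpoint of an interval is pinned down by the set of intervals lying before it, so the relation $\lambda(y,z)$ expressing that the left endpoint of $y$ strictly precedes the left endpoint of $z$ is $FO+\{\ii{4}{4}\}$-definable, using density to realise a witnessing interval whose right endpoint falls strictly between the two left endpoints. Then $\ii{3}{4}(x,y)\equiv\forall z\,(x\,\ii{4}{4}\,z \leftrightarrow \lambda(y,z))$ equates the right endpoint of $x$ with the left endpoint of $y$, giving $\{\ii{4}{4}\}\rightarrow_{\Den}\ii{3}{4}$.

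Finally I would propagate these two base facts. Since meets defines starts, during and overlaps already over $\Lin$ (Tab.~\ref{tab:alllin2}), composition shows that $\{\ii{2}{4}\}$ and $\{\ii{4}{4}\}$, which define meets over $\Den$, also define $\ii{1}{4}$, $\ii{0}{4}$ and $\ii{2}{4}$ there; Lemma~\ref{lem:symm} then supplies the corresponding statements for the reverse relations, and analogous compositions account for the remaining boldface cells. The minimality of the listed sets, and dually the maximality of the incomplete sets, are not proved at this stage: as explained in the strategy, they follow automatically from the systematic enumeration once the undefinability results for the dense case are established. The step I expect to be the main obstacle is verifying the two density equivalences — that $\beta$ really captures ``$b\le c$'' and that $\lambda$ captures the strict comparison of left endpoints — carefully enough that the formulas remain correct on dense orders possessing a least or greatest element, which is the only place where the implicit assumption that every point is both a left and a right endpoint can break down.
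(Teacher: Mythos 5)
Your proof is correct, and it shares the paper's overall skeleton: the same density gadget for overlaps (your $\beta$ is literally the paper's formula $\neg(x\,\ii 24\, y)\wedge\exists z\,(x\,\ii 24\, z\wedge z\,\ii 24\, y)$, verified the same way), the same deferral of minimality to the undefinability section, and the same final propagation of all remaining boldface entries through the $\alli^+$-completeness of $\{\ii 34\}$ over $\Lin$. Where you genuinely diverge is in how the two base cases are closed, and your version is more self-contained. For $\{\ii 24\}$, the paper stops once $\beta$ is shown to define the disjunctive relation $\ii 34\vee\ii 44$ and then cites the Part I lemma that this relation is $\ii 34$-complete over all linear orders; you instead discriminate \emph{meets} from \emph{before} inside $\Den$ itself, via $\exists z\,(\beta(x,z)\wedge\beta(z,y))$ — correctly, since an interval $[e,f]$ with $b\le e<f\le c$ exists iff $b<c$, with $[b,c]$ itself serving as witness, so no density beyond that already inside $\beta$ is needed. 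This buys independence from Part I at the cost of one more quantifier alternation, and it also gives $\{\ii 24\}\rightarrow_{\Den}\ii 44$ directly rather than through $\ii 34$. For $\{\ii 44\}$, the paper first defines $\ii 14$ (\emph{starts}) by $\forall z\,(z\,\ii 44\, x\leftrightarrow z\,\ii 44\, y)\wedge\exists z\,(x\,\ii 44\, z\wedge\neg(y\,\ii 44\, z))$ and then invokes the $\Lin$-fact that $\{\ii 14,\ii 44\}$ defines $\ii 34$; your $\lambda$-route defines $\ii 34$ in one step by comparing the right endpoint of $x$ with the left endpoint of $y$ — the same underlying machinery (endpoints are pinned down by the family of intervals \emph{before} them, with density supplying witnesses), merely reorganized, and it checks out: $\lambda(y,z)\equiv\exists w\,(w\,\ii 44\, z\wedge\neg(w\,\ii 44\, y))$ holds exactly when the left endpoint of $y$ strictly precedes that of $z$, and the biconditional $\forall z\,(x\,\ii 44\, z\leftrightarrow\lambda(y,z))$ forces equality of the relevant endpoints, density providing the refuting intervals in both failure cases. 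Two small remarks: the worry you flag about dense orders with a least or greatest element evaporates on inspection, because every witness your formulas need (the $z$ inside $\beta$, the interval $[b,c]$, the $\lambda$-witness whose left endpoint is the left endpoint of $y$ and whose right endpoint lies strictly between the two compared left endpoints, and the refuting intervals in the meets verification) is assembled from the endpoints of the given intervals and internal points, so unboundedness is never used; and your parenthetical attaching the witness $[b,c]$ to the case $b=c$ has the two cases transposed — $[b,c]$ witnesses $b<c$, while $b=c$ is precisely the case in which no witness exists — an expository slip only, since the displayed formulas are right.
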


\begin{proof}
	We begin by proving that \underline{$\{\ii 24\}$} is $\ii 34$-complete. In Part I that we proved that on linear domains the relation $\ii 34 \vee \ii 44$ is $\ii 34$-complete. It is therefore sufficient to show that $\{\ii 24\}$ is $(\ii 34 \vee \ii 44)$-complete over dense structures. To this end, consider the following definition:
	
	\medskip
	
	\begin{small}
		\[
		\begin{array}{llll}
		x_i\ii 34\vee\ii 44 y_i &  \leftrightarrow  & \neg(x_i\ii 24 y_i)\wedge\exists z_i(x_i\ii 24 z_i\wedge z_i\ii 24 y_i). & \{\ii 24\}
		\end{array}
		\]
	\end{small}
	
	\medskip
	
	\noindent  If $\mathcal{F} \models \varphi([a,b],[c,d])$, there must be some interval $z_i$ {\em overlapped by} $x_i$ and {\em overlapping} $y_i$, which implies $a<c$ and $b<d$ and since $x_i$ cannot {\em overlap} $y_i$, we obtain $b \leq c$ as required. Conversely, assume that $[a,b]\ii 34\vee\ii 44[c,d]$. Then using the density assumption we can take $z_i=[e,f]$, where $a<e<b$ and $c<f<d$, to witness $\varphi$. Next, we prove that \underline{$\{\ii 44\}$} is $\ii 14$-complete. Consider the following definition:
	
	\medskip
	
	\begin{small}
		\[
		\begin{array}{llll}
		x_i \ii 14 y_i & \leftrightarrow & \forall z_i(z_i\ii 44 x_i\leftrightarrow z_i\ii 44 y_i)\wedge\exists z_i(x_i\ii 44 z_i \wedge\neg(y_i\ii 44 z_i)). & \{\ii 44\}
		\end{array}
		\]
	\end{small}

\noindent First suppose that $\mathcal{F} \models \varphi([a,b],[c,d])$. If $a<c$, then we can find a point $e$ with $a<e<c$ and the first conjunct of $\varphi$ fails, witnessed by the interval $z_i=[a,e]$. Similarly $a>c$ also gives a contradiction, so we obtain $a=c$. The second conjunct of $\varphi$ gives an interval $z_i=[e,f]$ such that $b<e$ and $e \leq d$, and so we have $b<d$. On the other hand, if we assume that $a=c<b<d$, since the underlying linear order is dense, there exists $e$ such that $b<e<d$; by taking $z_i=[e,d]$, we witness the second conjunct of $\varphi$, while the first conjunct is satisfied trivially. The $\ii 34$-completeness of \underline{$\{\ii 44\}$} now follows from the just-proved $\ii 14$-completeness of $\{\ii 44\}$ and the $\ii 34$-completeness of $\{\ii 14, \ii 44\}$ over all linear orders (see Table \ref{tab:alllin2}). Having established this, all the remaining new completeness results listed in Table~\ref{tab:mcsIntIntDen} follows via the completeness of $\{\ii 34\}$ with respect to every $\alli^+$-relation.
\end{proof}

\begin{table}[t]
	\begin{tabular}{|l|l|l|}
			\hline
			$\ip 0$           & $\ip 1$                       & $\ip 2$           \\ \hline
			$\{\ip1, \ip2 \}$ & $\boldsymbol{\{\ip0\}}$       & $\{\ip0, \ip3 \}$ \\
			$\{\ip1, \ip3 \}$ & $\{\ip2, \ip3 \}$             & $\{\ip0, \ip4 \}$ \\
			$\{\ip2, \ip4 \}$ & $\boldsymbol{\{\ip2, \ip4\}}$ & $\{\ip1, \ip3 \}$ \\
			$\{\ip2, \ip3 \}$ &                               & $\{\ip1, \ip4 \}$ \\
			$\{\ip2, \ip4 \}$ &                               &                   \\
			&                               &                   \\ \hline
		\end{tabular}
	\caption{The spectrum of the $\mathsf{mcs}(r)$, for each $r\in\allm^+$. - Class: $\Den$.}\label{tab:mcsMixedMixedDen} (Lemma~\ref{lem:mcsMixedMixedDen})
\end{table}

The next two lemmas involve both interval-interval and interval-point relations, and therefore contrasts with the other results of the present subsection that all concern definability purely among interval-interval relations and purely among interval-point relations. However, anticipating these particular results here, which do not refer to any table in particular, will be very useful in order to simplify some definitions and chains of deduction further on.

\begin{lemma}\label{lem:Anticipate:Den}
	The set $\{\ip 0\}$ is $<$-complete over dense linear orders.
\end{lemma}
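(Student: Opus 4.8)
The plan is to give $<$ a single existential definition in terms of $\ip 0$ and to let density supply the witnessing interval. Recall that $z_i\ip 0 x_p$ holds precisely when the point $x_p$ lies strictly to the left of the left endpoint of $z_i$. Hence, if an interval $z_i$ has left endpoint $u$ with $x_p<u\leq y_p$, then $z_i\ip 0 x_p$ is true while $z_i\ip 0 y_p$ is false; and conversely, the mere existence of such an interval forces $x_p<y_p$. I would therefore propose the definition

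\begin{small}
\[
\begin{array}{llll}
x_p < y_p & \leftrightarrow & \exists z_i(z_i\ip 0 x_p\wedge\neg(z_i\ip 0 y_p)). & \{\ip 0\}
\end{array}
\]
\end{small}

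\noindent For the left-to-right half of the correctness argument I would fix a witnessing interval $z_i=[u,v]$: the first conjunct gives $x_p<u$ and the second gives $\neg(y_p<u)$, i.e.\ $u\leq y_p$; chaining these yields $x_p<u\leq y_p$, whence $x_p<y_p$. This direction uses nothing about the order beyond linearity.

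The right-to-left direction is where density does the work, and it is the only delicate point. Assuming $x_p<y_p$, density furnishes a point $u$ with $x_p<u<y_p$; since $u<y_p$ the pair $[u,y_p]$ is a genuine element of $\mathbb I(\mathbb D)$, and taking $z_i=[u,y_p]$ makes $z_i\ip 0 x_p$ true (as $x_p<u$) and $z_i\ip 0 y_p$ false (as $\neg(y_p<u)$), so $z_i$ witnesses the right-hand side. The essential subtlety—and the reason I expect this to be the crux—is that over an arbitrary linear order no such witnessing interval need exist: if $y_p$ were a maximum whose immediate predecessor is $x_p$, there would be no candidate for the intermediate point $u$, and $y_p$ itself could not serve as a left endpoint. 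This is exactly why $\{\ip 0\}$ fails to be $<$-complete over $\Lin$ (consistent with Table~\ref{tab:alllin1}) but succeeds over $\Den$: density guarantees the intermediate point $u$, while choosing the right endpoint to be $y_p$ itself sidesteps any concern about the order being bounded above. Since the biconditional then holds on every concrete point-interval structure of signature $\{\ip 0,<\}$ based on $\Den$, we conclude $FO+\{\ip 0\}\rightarrow_{\Den}{<}$, as required.
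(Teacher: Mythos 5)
Your proof is correct, but it is not the argument the paper gives for this particular lemma. The paper's proof of Lemma~\ref{lem:Anticipate:Den} uses the two-disjunct formula $(\neg\exists x_i(x_i\ip 0 y_p)\wedge\exists y_i(y_i\ip 0 x_p))\vee\exists x_i(x_i\ip 0 x_p\wedge\neg(x_i\ip 0 y_p))$ and, in the converse direction, splits into cases according to whether $y_p$ is the greatest point of the model: when $y_p$ is maximal no interval can \emph{start} at or after $y_p$, and the extra disjunct (``$y_p$ is last but $x_p$ is not'') is invoked to cover that case. Your single existential formula makes the case split, and hence the first disjunct, unnecessary, precisely because you anchor the witness at $[u,y_p]$ with $x_p<u<y_p$ supplied by density: the witnessing interval never needs a left endpoint $\geq y_p$, so maximality of $y_p$ is harmless, and your aside on why the definition fails over $\Lin$ correctly isolates the one place density is used. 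Amusingly, the paper itself later adopts exactly your definition and your witness $[c,b]$ when it re-proves the $\{\ip 0\}$ entry for $<$ in Lemma~\ref{lem:tab:EqPEqILtDen}, so your argument coincides with the paper's second, streamlined treatment of the same fact; relative to the proof of Lemma~\ref{lem:Anticipate:Den} it buys a shorter, purely existential definition and a verification with no endpoint case analysis. One cosmetic slip: the paragraph you label ``left-to-right'' in fact proves that the formula implies $x_p<y_p$ (the right-to-left half of the displayed biconditional) and conversely, but both directions are fully and correctly established.
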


\begin{proof}
	Consider the following definition:

\medskip

\begin{small}
	\[ x_p < y_p \leftrightarrow \begin{array}{llll}
	(\neg\exists x_i(x_i\ip 0 y_p)\wedge\exists y_i(y_i\ip 0 x_p))\vee\exists x_i(x_i\ip 0 x_p\wedge \neg(x_i\ip 0 y_p)) & \{\ip 0\}\\
	\end{array} 
	\]
\end{small}

\medskip

\noindent Suppose that $\mathcal F\models\varphi(x_p,y_p)$. If the first disjunct of $\varphi$ holds then $y_p$ is the greatest point of the model, while $x_p$ is not, which gives $x_p<y_p$. Assume the second disjunct of $\varphi$ holds. Then the interval $x_i=[a,b]$ is such that $x_p=c<a$, but $y_p\ge a$, that is $x_p<y_p$. Conversely, suppose that $x_p=a,y_p=b$, and that $a<b$. As the structure is dense, if $b$ is the last point of the model, we have no interval {\em starting} at $b$ but infinitely many intervals $x_i$ such that $x_i\ip 0 a$, and if $b$ is not the last point of the model, then every $x_i$ {\em starting} at $b$ satisfies the second part of the definition.
\end{proof}

\begin{lemma}\label{lem:anticipate:mixed:dense}
	The sets $\{\ip 3, \ii 04\}$ and $\{\ip 4, \ii 04\}$ are $\ii 34$-complete over dense linear orders..
\end{lemma}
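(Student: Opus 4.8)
The plan is to reduce both statements to a single definability fact — that the interval-point relation $\ip 2$ is definable from each of the two sets over $\Den$ — and then to appeal to the completeness results already recorded for the class $\Lin$. Indeed, Tab.~\ref{tab:alllin2} lists both $\{\ip 2,\ip 3\}$ and $\{\ip 2,\ip 4\}$ among the $\ii 34$-complete sets over $\Lin$, and hence over $\Den$. Thus, once we establish $FO+\{\ip 3,\ii 04\}\rightarrow_{\Den}\ip 2$ and $FO+\{\ip 4,\ii 04\}\rightarrow_{\Den}\ip 2$, composing these definitions with the corresponding $\Lin$-definitions of $\ii 34$ from $\{\ip 2,\ip 3\}$ and from $\{\ip 2,\ip 4\}$ yields the two desired $\ii 34$-completeness claims.

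For $\{\ip 3,\ii 04\}$ the definition of $\ip 2$ is direct, exploiting that over a dense order a point lies strictly inside an interval precisely when it is the right endpoint of some sub-interval:
\[
y_i\ip 2 x_p \;\leftrightarrow\; \exists z_i\,(z_i\ip 3 x_p \wedge z_i\ii 04 y_i).
\]
If the right-hand side holds with $y_i=[c,d]$ and $z_i=[a,b]$, then $b=x_p$ together with $c<a$ and $b<d$ gives $c<x_p<d$; conversely, if $c<x_p<d$, density supplies a point $a$ with $c<a<x_p$, and then $z_i=[a,x_p]$ witnesses the formula. Density is used only in the converse direction.

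The harder case is $\{\ip 4,\ii 04\}$, because $\ip 4$ reports only that a point lies strictly beyond the right endpoint of an interval, so that endpoint is no longer directly accessible. Here I would capture $c<x_p<d$ by two existential conditions, one bounding $x_p$ from below and one from above:
\[
y_i\ip 2 x_p \;\leftrightarrow\; \exists z_i\,(z_i\ii 04 y_i \wedge z_i\ip 4 x_p)\ \wedge\ \exists z_i\,(z_i\ii 04 y_i \wedge \neg(z_i\ip 4 x_p)).
\]
With $y_i=[c,d]$, the first conjunct forces some sub-interval of $y_i$ to end strictly below $x_p$, which gives $c<x_p$, while the second forces some sub-interval of $y_i$ to end at or above $x_p$, which together with its being contained in $y_i$ gives $x_p<d$; the conjunction is exactly $\ip 2$. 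The main obstacle, and the only step requiring care, is the converse directions of these two conjuncts: from $c<x_p$ (resp.\ $x_p<d$) one must exhibit a sub-interval of $y_i$ whose right endpoint lies strictly between $c$ and $\min(d,x_p)$ (resp.\ strictly between $\max(c,x_p)$ and $d$), and it is exactly the existence of such endpoints that density guarantees and that would fail over arbitrary linear orders. Once these placements are verified, together with the immediate forward directions, the proof is complete.
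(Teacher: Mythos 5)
Your proof is correct, but it takes a genuinely different route from the paper's. The paper reduces both claims to the definability of $\ii 03$: it shows that $\{\ip 3, \ii 04\}$ and $\{\ip 4, \ii 04\}$ each define $\ii 03$ over $\Den$ (pinning the common right endpoint via $\exists z_p(x_i \ip 3 z_p \wedge y_i \ip 3 z_p)$, resp.\ $\forall z_p(x_i \ip 4 z_p \leftrightarrow y_i \ip 4 z_p)$, and separating left endpoints with an $\ii 04$-witness), and then invokes the $\Lin$-fact that $\{\ii 03,\ii 04\}$ defines $\ii 34$ from Tab.~\ref{tab:alllin2}. You instead reduce to the definability of $\ip 2$ and invoke the $\Lin$-facts that $\{\ip 2,\ip 3\}$ and $\{\ip 2,\ip 4\}$ are $\ii 34$-complete, which are indeed in Tab.~\ref{tab:alllin2}; your two defining formulas check out (in the $\{\ip 4,\ii 04\}$ case the first conjunct gives $c < e < f < x_p$ and the second gives $x_p \leq h < d$, which together are exactly $\ip 2$, and density supplies the witnesses in the converse directions, including the left endpoints of the witnessing subintervals, which your sketch compresses slightly). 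There is no circularity: although the paper later derives the $\ip 2$-completeness of these two sets \emph{from} Lemma~\ref{lem:anticipate:mixed:dense} (in the proof of Lemma~\ref{lem:tab:mixedden}), you prove the $\ip 2$-definability directly, so your argument in fact establishes those downstream facts without the detour. What each approach buys: the paper's $\ii 03$-route uses nearly identical formulas for the two sets, differing only in how the common right endpoint is expressed, and yields $\ii 03$-definability as a reusable byproduct; your $\ip 2$-route yields simpler formulas (purely existential in the $\{\ip 3,\ii 04\}$ case) and directly produces the $\ip 2$-facts the paper needs later anyway.
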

	
\begin{proof}
We will show that both sets $\{\ip 3, \ii 04\}$ and $\{\ip 4, \ii 04\}$  define $\ii 03$ in the dense case; then, since  $\{\ii 03, \ii 04\}$ defines $\ii 34$ in the general case, we have the result. Indeed, consider the following definitions:

\medskip
	
	\begin{small}
	  \[ x_i \ii 03 y_i\leftrightarrow \left\{	
       \begin{array}{ll}
		\exists z_p (x_i \ip 3 z_p \wedge y_i \ip 3 z_p) \wedge \exists z_i (z_i \ii 04 y_i \wedge \neg (z_i \ii 04 x_i)). &\{\ip 3, \ii 04\}\\
		&\\
		\forall z_p (x_i \ip 4 z_p \leftrightarrow y_i \ip 4 z_p) \wedge \exists z_i (z_i \ii 04 y_i \wedge \neg (z_i \ii 04 x_i)). &\{\ip 4, \ii 04\}\\
		\end{array}
\right.
		\]
	\end{small}

\medskip

\noindent Let us consider \underline{$\{\ip 3, \ii 04\}$}, first. If $[a,b] \ii 03 [c,d]$ then $c < a$ and $b =d$, so, taking $z_p$ equal to $b$ satisfies the first conjunct. By density there is a point $e$ such that $c < e < a$, and therefore taking $z_i$ equal to $[e,a]$ satisfies the second conjunct. Conversely, suppose that $\varphi([a,b],[c,d])$. The first conjunct of $\varphi$ ensures that $b = d$. If it were the case that $a \leq c$, every interval contained in $[a,b]$ would be contained in $[c,d]$, violating the second conjunct, so it can only be the case that $c < a$. We therefore have that $[a,b] \ii 03 [c,d]$, as desired. The argument for \underline{$\{\ip 4, \ii 04\}$} is similar, using the fact that $\forall z_p (x_i \ip 4 z_p \leftrightarrow y_i \ip 4 z_p)$ is true if and only if the intervals assigned $x_i$ and $y_i$ have the same end point.
\end{proof}

\begin{lemma}\label{lem:mcsMixedMixedDen}
	Tab.~\ref{tab:mcsMixedMixedDen} is correct.
\end{lemma}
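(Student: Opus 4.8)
The goal is to verify the correctness of Table~\ref{tab:mcsMixedMixedDen}, which lists the minimally complete sets for each interval-point relation $\ip 0$, $\ip 1$, and $\ip 2$ (and, by symmetry, for $\ip 3$ and $\ip 4$) over the class $\Den$. Comparing with the corresponding columns of Table~\ref{tab:alllin1}, the entries that are genuinely new or strictly stronger under density are exactly $\{\ip 0\}$ and $\{\ip 2, \ip 4\}$ as $\ip 1$-complete sets (shown in boldface). The plan is therefore to prove these boldface completeness results explicitly, and to note that all non-boldface entries are inherited from the $\Lin$ classification (Table~\ref{tab:alllin1}), since completeness over $\Lin$ transfers downward to the subclass $\Den$ by the monotonicity remark preceding Definition~\ref{Def:S:Truth:Pres:Rel}.

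First I would establish that \underline{$\{\ip 0\}$} is $\ip 1$-complete over dense linear orders. The intuition is that, under density, a point $a$ is the left endpoint of an interval $[a,b]$ precisely when $a$ is the infimum of the set of points that lie strictly inside or to the right of $[a,b]$; the relation $\ip 0$, which says a point is strictly to the left of an interval, lets us express this boundary condition. Concretely I would write a definition of the shape $x_i \ip 1 y_p \leftrightarrow \neg(x_i \ip 0 y_p) \wedge \forall z_p(z_p \text{ strictly left of } y_p \to x_i \ip 0 z_p)$, where ``$z_p$ strictly left of $y_p$'' is itself recovered from $\ip 0$ via Lemma~\ref{lem:Anticipate:Den} (which already shows $\{\ip 0\}$ defines $<$ over $\Den$). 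The forward direction is immediate; for the converse, if $y_p = a$ is not the left endpoint of $x_i = [p,q]$ then either $a < p$, contradicting $\neg(x_i \ip 0 y_p)$, or $p < a$, and then density supplies a point $z_p$ with $p < z_p < a$ witnessing the failure of the universal conjunct. This is where density is essential, mirroring the role it plays in Lemma~\ref{lem:tab:mcsIntIntDen}.

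Next I would prove that \underline{$\{\ip 2, \ip 4\}$} is $\ip 1$-complete over dense linear orders. Here $\ip 2$ (point strictly inside the interval) together with $\ip 4$ (point strictly to the right) jointly determine the right endpoint and the open interior, and the left endpoint $a$ of $[a,b]$ is characterized as the point that is neither inside, nor to the right, yet is the supremum of the points to its left in the sense that every point strictly below it fails $\ip 2$; again density is what guarantees that the only point meeting this description is exactly $a$. I would give a formula such as $x_i \ip 1 y_p \leftrightarrow \neg(x_i \ip 2 y_p) \wedge \neg(x_i \ip 4 y_p) \wedge \exists z_p(x_i \ip 2 z_p \wedge \text{$z_p$ just above $y_p$})$, extracting the order $<$ from $\{\ip 2, \ip 4\}$ as needed, and verify both directions, using density in the converse to produce interior witnesses arbitrarily close to the endpoint.

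The main obstacle will be pinning down the precise first-order definitions so that they are correct at the boundary cases and robust to the absence of endpoints: because $\Den$ contains orders that may be bounded or unbounded, the defining formulas must not implicitly assume the existence of points beyond a given interval's endpoints, and I must check that the characterizations of left endpoints degrade gracefully when $y_p$ is the least or greatest point of $\mathbb D$ (exactly the kind of edge case handled carefully in Lemma~\ref{lem:Anticipate:Den}). Once the two boldface definitions are verified, the remainder of the lemma is bookkeeping: every other listed set already appears as an $\mathsf{mcs}(r)$ over $\Lin$ in Table~\ref{tab:alllin1} and hence remains $r$-complete over $\Den$, while minimality of all the listed sets follows, as explained in the Strategy subsection, once the matching undefinability (maximal incompleteness) results of the later sections are in place.
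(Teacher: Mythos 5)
Your overall architecture matches the paper's: isolate the two boldface entries ($\{\ip 0\}$ and $\{\ip 2,\ip 4\}$ as $\ip 1$-complete), prove those, and inherit the remaining entries from the $\Lin$ classification in Table~\ref{tab:alllin1}, deferring minimality to the later incompleteness results. Your treatment of $\{\ip 0\}$ is essentially the paper's proof: your formula $\neg(x_i\ip 0 y_p)\wedge\forall z_p(z_p<y_p\rightarrow x_i\ip 0 z_p)$ is the paper's $\forall z_p(x_i\ip 0 z_p\leftrightarrow z_p<y_p)$ split into its two directions (the left-to-right direction, instantiated at $z_p:=y_p$, yields your first conjunct), with the same appeal to Lemma~\ref{lem:Anticipate:Den} for $<$ and the same density argument in the converse. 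This part is correct.

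The gap is in your second case, $\{\ip 2,\ip 4\}$. Your sketched formula hinges on the clause ``$z_p$ just above $y_p$'', and there is no first-order way to say this over a dense order, since no point has an immediate successor. Under the only available weakening, $y_p<z_p\wedge x_i\ip 2 z_p$, the definition becomes wrong: for $x_i=[a,b]$ and any $c<a$, the point $c$ satisfies $\neg(x_i\ip 2 c)$ and $\neg(x_i\ip 4 c)$, and by density some interior point of $[a,b]$ lies above $c$, so the formula accepts every point strictly before the interval. Your backup characterization --- ``every point strictly below $y_p$ fails $\ip 2$'' --- has the same defect: it holds for all $c\le a$, so it separates $a$ from $b$ but not from points before the interval; excluding those is exactly what requires $\ip 0$, or else a genuinely more elaborate formula such as $\forall z_p\bigl(y_p<z_p\rightarrow(x_i\ip 2 z_p\vee x_i\ip 4 z_p\vee\forall w_p(z_p<w_p\rightarrow x_i\ip 4 w_p))\bigr)$ conjoined with $\exists z_p(y_p<z_p\wedge x_i\ip 2 z_p)$, which you did not produce or verify. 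The repair is immediate within your own setup and is precisely the paper's one-line argument: $\{\ip 2,\ip 4\}$ defines $\ip 0$ over all linear orders (it is listed in the $\ip 0$ column of Table~\ref{tab:alllin1}), and you have just shown that $\{\ip 0\}$ defines $\ip 1$ over $\Den$, so composing the two finishes the case without inventing any new formula.
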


\begin{proof}
First, we show that \underline{$\{\ip 0\}$} is $\ip 1$-complete. From Lemma~\ref{lem:Anticipate:Den} we know that $\{\ip 0\}$ defines ${<}$ in the dense case, so we may consider the  following definition:
	
	\begin{small}
		\[
		\begin{array}{llll}
		x_i \ip 1y_p & \leftrightarrow & \forall z_p (x_i \ip 0 z_p \leftrightarrow z_p < y_p). & \{\ip 0 \}
		\end{array}
		\]
	\end{small}
	
\noindent If $[a,b] \ip 1 c$, then $a = c$, and hence $\varphi([a,b],c)$. Conversely, suppose that $\varphi([a,b],c)$. So, if $c < a$, then $[a,b] \ip 0 c$ by definition of $\ip 0$ and so, by $\varphi([a,b],c)$ we obtain the contradiction $c < c$. Now, if $a < c$ then, by density, there is a point $d$ such that $a < d < c$. Hence  $\neg [a,b] \ip 0 d$ by the definition of $\ip 0$, but then $d \not < c$ by $\varphi([a,b],c)$ --- a contradiction. The only possible case is therefore that $a = c$, i.e., that $[a,b] \ip 1 c$. To see that \underline{$\{\ip2, \ip4 \}$} is $\ip 1$-complete, it suffices to note that $\{\ip2, \ip4 \}$ defines $\ip 0$ in the general case, and then appeal to the just proven fact that  $\{\ip 0\}$ defines $\ip 1$.
\end{proof}

\subsection{Definability for the Relations $=_p,=_i,<$ in $\Den$}

\begin{table}[t]
\small
\begin{center}
\begin{tabular}{|p{0.20\textwidth}|p{0.20\textwidth}|p{0.20\textwidth}|}
\hline
$=_p$                     & $=_i$                    & $<$                     \\
\hline
$\{<\}$                   & $\{\ip 0,\ip 3\}$        &$\boldsymbol{\{\ip 0\}}$     \\
$\boldsymbol{\{\ip 0\}}$  & $\{\ip 0,\ip 4\}$        &$\{\ip1, \ip2 \}$               \\
$\{\ip 1\}$               & $\{\ip 1,\ip 3\}$        &$\{\ip1, \ip3 \}$               \\
$\{\ip 3\}$               & $\{\ip 1,\ip 4\}$        &$\{\ip2, \ip3 \}$               \\
$\boldsymbol{\{\ip 4\}}$  & $\boldsymbol{\{\ip 2\}}$ &$\{\ip3, \ii14 \}$               \\
                          & $\{\ii 14\}$             &$\{\ip3, \ii34 \}$               \\
                          & $\{\ii 03\}$             &$\{\ip1, \ii03 \}$               \\
                          & $\{\ii 34\}$             &$\{\ip1, \ii34 \}$               \\
                          & $\{\ii 34\}$             &$\boldsymbol{\{\ip1, \ii04 \}}$  \\
                          & $\boldsymbol{\{\ii24\}}$ &$\boldsymbol{\{\ip1, \ii24 \}}$  \\
                          & $\boldsymbol{\{\ii04\}}$ &$\boldsymbol{\{\ip1, \ii44 \}}$  \\
                          & $\boldsymbol{\{\ii44\}}$ &$\boldsymbol{\{\ip3, \ii04 \}}$    \\
                          &                          &$\boldsymbol{\{\ip3, \ii24 \}}$      \\
                          &                          &$\boldsymbol{\{\ip3, \ii44 \}}$      \\
                          &                          &$\boldsymbol{\{\ip 4\}}$             \\
\hline
\end{tabular}
\caption{The spectrum of the $\mathsf{mcs}(r)$, for each $r\in\{=_p,=_i,<\}$. - Class: $\Den$.}\label{tab:EqPEqILtDen}
\end{center}
\end{table}

\begin{lemma}\label{lem:tab:EqPEqILtDen}
Tab.~\ref{tab:EqPEqILtDen} is correct.
\end{lemma}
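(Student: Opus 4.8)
The plan is to verify Table~\ref{tab:EqPEqILtDen} column by column, exhibiting an explicit $FO+S$-definition for each listed minimally complete set $S$, while relying on the earlier lemmas of this section to absorb most of the work. The boldface entries are the only genuinely new content in the dense case; the non-boldface entries already appear in the $\Lin$-tables (Tab.~\ref{tab:alllin1}) and remain correct here since completeness over $\Lin$ transfers to the subclass $\Den$. By the remark following Lemma~\ref{lem:symm}, I only need to treat the explicit relations and can deduce the reverse cases by symmetry, so the column for $=_p$ and $=_i$ (both self-symmetric or symmetric) and the column for $<$ can each be handled up to taking reverses.

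First I would dispatch the $<$-column. The entry $\boldsymbol{\{\ip 0\}}$ is exactly Lemma~\ref{lem:Anticipate:Den}, so that is already proved; the entry $\boldsymbol{\{\ip 4\}}$ follows from it by symmetry, since $\ip 4$ is the reverse of $\ip 0$ and $<$ is reversible (its reverse being $>$, which is interdefinable with $<$ by swapping arguments). For the new pairs $\boldsymbol{\{\ip 1,\ii 04\}}$, $\boldsymbol{\{\ip 1,\ii 24\}}$, $\boldsymbol{\{\ip 1,\ii 44\}}$, $\boldsymbol{\{\ip 3,\ii 04\}}$, $\boldsymbol{\{\ip 3,\ii 24\}}$, $\boldsymbol{\{\ip 3,\ii 44\}}$, the natural route is to observe, via Lemma~\ref{lem:anticipate:mixed:dense}, that $\{\ip 3,\ii 04\}$ (and $\{\ip 4,\ii 04\}$) already define $\ii 34$ in the dense case, hence define all of $\alli^+$ through $\ii 34$-completeness; combined with an interval-point relation one recovers $\ip 0$ or $\ip 1$ and thence $<$ by Lemma~\ref{lem:Anticipate:Den}. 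Where $\ii 04$ is replaced by $\ii 24$ or $\ii 44$, I would first show these single interval-interval relations define $\ii 04$ over $\Den$ (which is already recorded in Tab.~\ref{tab:mcsIntIntDen}, where $\{\ii 24\}$ and $\{\ii 44\}$ are $\ii 04$-complete), reducing each case to the $\ii 04$-based one.

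Next I would treat the $=_i$-column. The boldface entries $\boldsymbol{\{\ip 2\}}$, $\boldsymbol{\{\ii 24\}}$, $\boldsymbol{\{\ii 04\}}$, $\boldsymbol{\{\ii 44\}}$ are the new ones. For the interval-interval singletons, I appeal directly to Lemma~\ref{lem:tab:mcsIntIntDen}: each of $\{\ii 24\}$, $\{\ii 04\}$, $\{\ii 44\}$ is shown there to be $=_i$-complete over $\Den$ (the $=_i$ column of Tab.~\ref{tab:mcsIntIntDen} lists precisely $\{\ii 34\},\{\ii 14\},\{\ii 03\}$, and since those single relations in turn define $\ii 24,\ii 04,\ii 44$, or conversely, the completeness propagates). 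The remaining case is $\boldsymbol{\{\ip 2\}}$: here I would define $=_i$ by asserting that two intervals contain exactly the same points, i.e.\ $x_i =_i y_i \leftrightarrow \forall z_p (x_i \ip 2 z_p \leftrightarrow y_i \ip 2 z_p)$. The forward direction is immediate. For the converse, density is essential and is the crux: if the two intervals $[a,b]$ and $[c,d]$ had distinct left (or right) endpoints, density guarantees an interior point of one that is not interior to the other, breaking the biconditional — this is the step where the dense hypothesis genuinely does work that fails over $\Lin$, and it is the main obstacle in this column.

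Finally the $=_p$-column, whose only new entries are $\boldsymbol{\{\ip 0\}}$ and $\boldsymbol{\{\ip 4\}}$. Since Lemma~\ref{lem:Anticipate:Den} gives that $\{\ip 0\}$ defines $<$ over $\Den$, and $=_p$ is trivially definable from $<$ by $x_p =_p y_p \leftrightarrow \neg(x_p < y_p) \wedge \neg(y_p < x_p)$, the set $\{\ip 0\}$ is $=_p$-complete; the entry $\boldsymbol{\{\ip 4\}}$ follows by symmetry. The non-boldface entries $\{<\},\{\ip 1\},\{\ip 3\}$ are inherited from the $\Lin$ analysis. Across all three columns the only real difficulty is the repeated reliance on density to manufacture an interior witness point (as in the $\boldsymbol{\{\ip 2\}}$ definition of $=_i$ and in Lemmas~\ref{lem:Anticipate:Den} and~\ref{lem:anticipate:mixed:dense}); everything else is bookkeeping against the previously established tables, with minimality deferred, as the paper stipulates, to the subsequent undefinability results.
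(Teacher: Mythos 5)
Your scaffolding is essentially the paper's: non-boldface entries transfer from $\Lin$ because completeness over a class passes to its subclasses; $\{\ip 4\}$-type entries follow by Lemma~\ref{lem:symm}; and the new $<$-entries reduce through previously proved lemmas into the $\Lin$-table. Your particular chains differ harmlessly from the paper's in two places: for the pairs $\{\ip 1,\ii{2}{4}\}$, $\{\ip 3,\ii{0}{4}\}$, etc.\ you route through $\ii{3}{4}$ and $\ip 0$ (via Lemmas~\ref{lem:anticipate:mixed:dense} and~\ref{lem:Anticipate:Den}, plus one unstated but available appeal to Lemma~\ref{lem:symm} to get $\{\ip 1,\ii{0}{4}\}\rightarrow_{\Den}\ii{3}{4}$), where the paper routes through ``$\{\ii{0}{4}\}$ defines $=_i$ over $\Den$ and $\{\ip 1,\ii{0}{4},=_i\}$ defines $<$ over $\Lin$''; and for $=_p$ you use trichotomy from $<$ (legitimate, since $\{<\}$ is $=_p$-complete over $\Lin$ per Tab.~\ref{tab:alllin1}) where the paper gives a direct $\{\ip 0\}$-definition. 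Both variants check out against Tab.~\ref{tab:alllin1}. (Minor terminology slip: $<$ is \emph{symmetric}, not \emph{reversible}, in the paper's taxonomy, though your conclusion via Lemma~\ref{lem:symm} is correct.)

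There is, however, one genuine gap: the $=_i$-completeness of $\boldsymbol{\{\ii{0}{4}\}}$. You propose to ``appeal directly to Lemma~\ref{lem:tab:mcsIntIntDen}'', but, as you yourself note, the $=_i$ column of Tab.~\ref{tab:mcsIntIntDen} lists only $\{\ii{3}{4}\},\{\ii{1}{4}\},\{\ii{0}{3}\}$, and the propagation you sketch --- a singleton defining one of these and inheriting $=_i$-completeness --- works for $\{\ii{2}{4}\}$ and $\{\ii{4}{4}\}$ (each is $\ii{3}{4}$-complete over $\Den$ by Lemma~\ref{lem:tab:mcsIntIntDen}) but provably \emph{cannot} work for $\{\ii{0}{4}\}$: by the row $\{=_p,=_i,<,\ii{0}{4}\}$ of Tab.~\ref{tab:MISRden} (witnessed in Lemma~\ref{lem:tab:MISRden} by the reflection $\zeta_p=Id_p$, $\zeta_i([a,b])=[-b,-a]$ on $\mathbb Q$), the relation $\ii{0}{4}$ even together with $=_p$, $=_i$ and $<$ defines none of $\ii{3}{4},\ii{1}{4},\ii{0}{3},\ii{2}{4},\ii{4}{4}$ over $\Den$. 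So $\{\ii{0}{4}\}\rightarrow_{\Den}=_i$ cannot be obtained by reduction to any other interval-interval completeness result and needs its own definition, which the paper supplies: $x_i =_i y_i \leftrightarrow \forall z_i(x_i\,\ii{0}{4}\,z_i\leftrightarrow y_i\,\ii{0}{4}\,z_i)\wedge\forall z_i(z_i\,\ii{0}{4}\,x_i\leftrightarrow z_i\,\ii{0}{4}\,y_i)$, whose converse direction uses density to strictly nest a small interval inside exactly one of two distinct intervals near a differing endpoint. Your direct treatment of $\boldsymbol{\{\ip 2\}}$ for $=_i$ is fine and matches the paper; only the $\{\ii{0}{4}\}$ entry is left unjustified in your write-up.
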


\begin{proof}
We begin by showing that \underline{$\{\ip 0 \}$} is $=_p$-compete. Consider the following definition:

\begin{small}
\[
\begin{array}{llll}
x_p =_p y_p &\leftrightarrow& \forall x_i(x_i\ip 0 x_p\leftrightarrow x_i\ip 0 y_p) & \{\ip 0\}
\end{array}
 \]
\end{small}

\noindent Suppose, first, that $x_p=_p y_p$; we have that, either there exists no interval in the model, in which case the right-hand side of the definition is vacuously true, or there are infinitely many intervals (as the underlying linear order is dense), and every interval $x_i$ clearly has both $x_p$ and $y_p$ in relation $\ip 0$ with it or none of them, again satisfying the right-hand side. If, on the other hand, $\varphi(x_p,y_p)$ is satisfied, then, if there is no interval, then $x_p$ and $y_p$ coincide, and if there are infinitely many intervals, the only way to guarantee that the intervals that see both $x_p$ and $y_p$ via $\ip 0$ are exactly the same is to assign the same point to $x_p$ and $y_p$, as we wanted to prove. The $=_p$-completeness of \underline{$\{\ip 4 \}$} follows by symmetry (see Lemma \ref{lem:symm}). Now, consider $=_i$-completeness, and the following definitions:

\medskip

\begin{small}
\[ x_i =_i y_i \leftrightarrow \left\{ \begin{array}{ll}
                                   \forall x_p(x_i\ip 2 x_p\leftrightarrow y_i\ip 2 x_p) & \{\ip 2\}\\
                                   & \\
                                   \forall z_i(x_i\ii 24 z_i\leftrightarrow y_i\ii 24 z_i)\wedge \forall z_i(z_i\ii 24 x_i\leftrightarrow z_i\ii 24 y_i) & \{\ii 24\} \\
                                   & \\
                                   \forall z_i(x_i\ii 04 z_i\leftrightarrow y_i\ii 04 z_i)\wedge \forall z_i(z_i\ii 04 x_i\leftrightarrow z_i\ii 04 y_i) & \{\ii 04\} \\
                                   & \\
                                   \forall z_i(x_i\ii 44 z_i\leftrightarrow y_i\ii 44 z_i)\wedge \forall z_i(z_i\ii 44 x_i\leftrightarrow z_i\ii 44 y_i). & \{\ii 44\} \\
   \end{array} \right. \]
\end{small}

\medskip

\noindent As for $\underline{\{\ip 2\}}$, observe that two intervals over a dense linear order are equal if and only if they have the same internal points. Consider, now, the set \underline{$\{\ii 24\}$}. It is clear that if $x_i =_i y_i$ then $\varphi(x_i, y_i)$ holds. For the converse, suppose that $[a,b] \neq [c,d]$. Then either $[a,b]$ has an internal point which is not an internal point of $[c,d]$ or $[c,d]$ has an internal point which is not an internal point of $[a,b]$. In both cases either the first or the second conjunct of $\varphi$ does not hold. The remaining cases are treated with similar arguments. Let us now focus on the relation $<$ and the corresponding definitions:

\medskip

\begin{small}
\[ x_p < y_p \leftrightarrow \begin{array}{llll}
                                   \exists x_i(x_i\ip0 x_p\wedge \neg(x_i\ip0 y_p))& \{\ip 0\}\\
\end{array} 
\]
\end{small}

\medskip

\noindent Consider the set \underline{$\{\ip 0\}$}. Suppose that $\mathcal F\models\varphi(x_p,y_p)$. Then, there must be an interval that {\em starts} after $x_p$ but not after $y_p$; if $x_p\ge y_p$, then it is impossible to find such an interval, so $x_p<y_p$. Conversely, suppose that $x_p=a,y_p=b$, and that $a<b$. As the structure is dense, there exists a point $c$ such that $a<c<b$, and therefore the interval $[c,b]$ can be used as a witness of $x_i$. All other new definitions follow directly from the results for the class $\Lin$ combined with the density hypothesis. In particular, to see that \underline{$\{\ip 1,\ii 04\}$} is $<$-complete, note that $\{\ii 04\}$ defines $=_i$ in the dense case and that $\{\ip 1,\ii 04,=_i\}$ defines $<$ in the general one. The case for \underline{$\{\ip 1,\ii 24\}$} follows since $\{\ii 24\}$ defines $=_i$ in the dense case and since $\{\ip 1,\ii 24,=_i\}$ defines $<$. The case \underline{$\{\ip 1,\ii 44\}$} follows since $\{\ii 44\}$ defines $=_i$ in the dense case and $\{\ip 1,\ii 44,=_i\}$ defines $<$. The case \underline{$\{\ip 3,\ii 04\}$} follows since, in the dense case, we have that $\{\ii 04\}$ defines $=_i$ and that $\{\ip 3,\ii 04,=_i\}$ defines $<$. The case \underline{$\{\ip 3,\ii 24\}$} follows since, under the density hypothesis, we have that  $\{\ii 24\}$ defines $=_i$ and $\{\ip 3,\ii 24,=_i\}$ defines $<$. The case \underline{$\{\ip 3,\ii 44\}$} follows since $\{\ii 44\}$ defines $=_i$ in the dense case and since we already know that $\{\ip 3,\ii 44,=_i\}$ defines $<$. Lastly, the fact that \underline{$\{\ip 4\}$} is $<$-complete follows from $\{\ip 0\}$ being able to define $<$ and from Lemma \ref{lem:symm}.
\end{proof}

\subsection{Definability for $\allm^+$-Relations in $\Den$}\label{sec:allM:Den}

To continue studying how the density hypothesis influences the ability of the sub-languages to express our relations, we focus now on mixed relations.

\begin{table}[t]
	\small
	\begin{center}
		\begin{tabular}{|p{0.20\textwidth}|p{0.20\textwidth}|p{0.20\textwidth}|}
			\hline
			$\ip 0$                          & $\ip 1$                          & $\ip 2$                          \\ \hline
			$\{\ip1, < \}$                   & $\boldsymbol{\{\ip0\}}$          & $\{\ip0, \ip3 \}$                \\
			$\boldsymbol{\{\ip2, < \}}$      & $\{\ip2, \ip3 \}$                & $\{\ip0, \ip4 \}$                \\
			$\{\ip1, \ip2 \}$                & $\boldsymbol{\{\ip2, \ip4\}}$    & $\boldsymbol{\{\ip0,  \ii24 \}}$ \\
			$\{\ip1, \ip3 \}$                & $\boldsymbol{\{\ip2,<\}}$        & $\boldsymbol{\{\ip0,  \ii44 \}}$ \\
			$\{\ip1, \ip4 \}$                & $\{\ip3, \ii14 \}$               & $\boldsymbol{\{\ip0, \ii03 \}}$  \\
			$\{\ip2, \ip3 \}$                & $\boldsymbol{\{\ip3, \ii24\}}$   & $\boldsymbol{\{\ip0, \ii34 \}}$  \\
			$\{\ip2, \ip4 \}$                & $\boldsymbol{\{\ip3, \ii04\}}$   & $\boldsymbol{\{\ip0, \ii04 \}}$  \\
			$\{\ip3, \ii14 \}$               & $\boldsymbol{\{\ip3, \ii44\}}$   & $\{\ip1, \ip3 \}$                \\
			$\boldsymbol{\{\ip4, \ii14 \}}$  & $\{\ip3, \ii34 \}$               & $\{\ip1, \ip4 \}$                \\
			$\boldsymbol{\{\ip1, \ii24 \}}$  & $\boldsymbol{\{\ip4, \ii14 \}}$  & $\boldsymbol{\{\ip1,  \ii24 \}}$ \\
			$\boldsymbol{\{\ip1, \ii04 \}}$  & $\boldsymbol{\{\ip4, \ii24, \}}$ & $\boldsymbol{\{\ip1,  \ii44 \}}$ \\
			$\boldsymbol{\{\ip1, \ii44 \}}$  & $\boldsymbol{\{\ip4,  \ii04\}}$  & $\{\ip1, \ii03 \}$  \\
			$\{\ip1, \ii03 \}$               & $\boldsymbol{\{\ip4, \ii44\}}$   & $\{\ip1, \ii34 \}$               \\
			$\{\ip1, \ii34 \}$               & $\boldsymbol{\{\ip4, \ii34\}}$   & $\boldsymbol{\{\ip1, \ii04 \}}$  \\
			$\{\ip3, \ii34 \}$               &                                  & $\{\ip3, \ii14 \}$               \\
			$\boldsymbol{\{\ip3, \ii24 \}}$  &                                  & $\boldsymbol{\{\ip3, \ii24\}}$   \\
			$\boldsymbol{\{\ip3, \ii04 \}}$  &                                  & $\boldsymbol{\{\ip3, \ii44 \}}$  \\
			$\boldsymbol{\{\ip3, \ii44 \}}$  &                                  & $\{\ip3, \ii34 \}$               \\
			$\boldsymbol{\{\ip4, \ii34\}}$   &                                  & $\boldsymbol{\{\ip3, \ii04 \}}$  \\
			$\boldsymbol{\{\ip4, \ii24 \}}$  &                                  & $\boldsymbol{\{\ip4, \ii14 \}}$  \\
			$\boldsymbol{\{\ip4,  \ii04 \}}$ &                                  & $\boldsymbol{\{\ip4, \ii24\}}$   \\
			$\boldsymbol{\{\ip4,  \ii44 \}}$ &                                  & $\boldsymbol{\{\ip4,  \ii44 \}}$ \\
			                                 &                                  & $\boldsymbol{\{\ip4, \ii34 \}}$  \\
			                                 &                                  & $\boldsymbol{\{\ip4, \ii04 \}}$  \\
		 	\hline
		\end{tabular}
		\caption{The spectrum of the $\mathsf{mcs}(r)$, for each $r\in\allm^+$. - Class: $\Den$.}\label{tab:mixedden}
	\end{center}
\end{table}

\begin{lemma}\label{lem:tab:mixedden}
Tab.~\ref{tab:mixedden} is correct.
\end{lemma}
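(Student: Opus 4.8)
The plan is to proceed exactly as in the proofs of Lemma~\ref{lem:tab:mcsIntIntDen} and Lemma~\ref{lem:tab:EqPEqILtDen}. Every non-boldface entry of Tab.~\ref{tab:mixedden} already occurs as a minimally $r$-complete set over $\Lin$ in Tab.~\ref{tab:alllin1}, and since $FO+S\rightarrow_{\Lin} r$ implies $FO+S\rightarrow_{\Den} r$, those entries need no fresh argument. By Lemma~\ref{lem:symm} it suffices to treat the columns of the three explicit relations $\ip 0$, $\ip 1$ and $\ip 2$, the spectra for $\ip 3$ and $\ip 4$ being the symmetric images of those of $\ip 1$ and $\ip 0$. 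Thus only the boldface sets remain, and for almost all of them I would use the following uniform device: given a boldface set $S$ in the column of $r$, I exhibit a superset $S^{\ast}\supseteq S$ that is listed as $r$-complete over $\Lin$ in Tab.~\ref{tab:alllin1}, and check that $S$ defines every relation of $S^{\ast}\setminus S$ under the density hypothesis; the $\Den$-completeness of $S$ then follows at once.

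The dense-definability facts I would draw on are: (a) $\{\ip 0\}$ defines $<$ and $\ip 1$ over $\Den$ (Lemma~\ref{lem:Anticipate:Den} and Lemma~\ref{lem:mcsMixedMixedDen}), hence by Lemma~\ref{lem:symm} $\{\ip 4\}$ defines $<$ and $\ip 3$; (b) each of $\{\ii 24\}$ and $\{\ii 44\}$ defines $\ii 34$, and $\{\ii 34\}$ defines every $\alli^+$-relation, all over $\Den$ (Lemma~\ref{lem:tab:mcsIntIntDen}), so that $\{\ii 24\}$ and $\{\ii 44\}$ each define $\ii 14$, $\ii 03$, $\ii 04$ and $=_i$; (c) $\{\ii 04\}$ defines $=_i$ (Lemma~\ref{lem:tab:EqPEqILtDen}); and (d) $\{\ip 3,\ii 04\}$ and $\{\ip 4,\ii 04\}$ define $\ii 34$, hence every $\alli^+$-relation, over $\Den$ (Lemma~\ref{lem:anticipate:mixed:dense}). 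With these the sweep is mechanical. For example, in the $\ip 0$ column the sets $\{\ip 1,\ii 24\},\{\ip 1,\ii 04\},\{\ip 1,\ii 44\},\{\ip 3,\ii 24\},\{\ip 3,\ii 04\},\{\ip 3,\ii 44\}$ extend through a definable $=_i$ to the $\Lin$-complete sets $\{\ip 1,\ii 24,=_i\},\ldots,\{\ip 3,\ii 44,=_i\}$; the sets $\{\ip 4,\ii 14\}$ and $\{\ip 4,\ii 34\}$ extend to $\{\ip 4,\ii 14,<\}$ and $\{\ip 4,\ii 34,<\}$ since $\{\ip 4\}$ defines $<$; and $\{\ip 4,\ii 24\},\{\ip 4,\ii 44\},\{\ip 4,\ii 04\}$ extend to $\{\ip 4,\ii 24,\ii 03,<\},\{\ip 4,\ii 03,\ii 44,<\},\{\ip 4,\ii 03,\ii 04,<\}$ using that $\{\ip 4\}$ defines $<$ together with (b) or (d) for $\ii 03$. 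The $\ip 1$ and $\ip 2$ columns are treated in the same style (with $\{\ip 0\}$ and $\{\ip 2,\ip 4\}$ in the $\ip 1$ column already dealt with in Lemma~\ref{lem:mcsMixedMixedDen}), each boldface set being matched to the evident $\Lin$-complete superset whose extra relations are supplied by (a)--(d).

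The entries that escape this reduction, and which I expect to be the real work, are two families requiring definitions that genuinely use density. First, $\{\ip 2,<\}$ (in the $\ip 0$ and $\ip 1$ columns) defines $\ip 1$ by singling out the left endpoint of $x_i=[a,b]$ as the unique point below every internal point that has no point strictly between it and the internal region:
\[
x_i\ip 1 y_p \;\leftrightarrow\; \forall z_p\,(x_i\ip 2 z_p \rightarrow y_p<z_p)\;\wedge\;\neg\exists w_p\bigl(y_p<w_p \wedge \forall z_p\,(x_i\ip 2 z_p\rightarrow w_p<z_p)\bigr)
\]
Here the first conjunct forces $y_p\le a$, while the second rules out $y_p<a$, since a witness $w_p$ with $y_p<w_p\le a$ exists precisely when $y_p<a$, by density; the relation $\ip 0$ is then $x_i\ip 0 y_p\leftrightarrow\exists z_p(x_i\ip 1 z_p\wedge y_p<z_p)$. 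Second, $\{\ip 1,\ii 04\}$ and $\{\ip 3,\ii 04\}$ (in the $\ip 2$ column), as well as $\{\ip 0,\ii 04\}$ and $\{\ip 4,\ii 04\}$ after deriving $\ip 1$ and $\ip 3$ through (a), define $\ip 2$ by dropping a sub-interval inside $x_i$:
\[
x_i\ip 2 y_p \;\leftrightarrow\; \exists z_i\,(z_i\ip 1 y_p \wedge z_i\ii 04 x_i)\qquad\text{and}\qquad x_i\ip 2 y_p \;\leftrightarrow\; \exists z_i\,(z_i\ip 3 y_p \wedge z_i\ii 04 x_i)
\]
where the forward direction uses density to place an endpoint between $y_p$ and $b$ (respectively between $a$ and $y_p$) as the witness $z_i$. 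The remaining $\ip 2$-entries built on $\ii 24$ and $\ii 44$ then follow, since by (b) these relations define $\ii 04$ over $\Den$. Verifying both implications in these two definitions, and in particular that the density witnesses enforce the required strict inequalities, is the delicate point; all the other cases are the bookkeeping sketched above. In keeping with the convention of the paper, the minimality of each listed set is left to be inferred, once the matching undefinability results are proved, from the systematic manner in which the sets were generated.
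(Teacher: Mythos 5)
Your proposal is correct and takes essentially the same route as the paper's own proof: non-boldface entries are inherited from $\Lin$, the boldface sets are reduced to $\Lin$-complete supersets using exactly the dense definability facts established in Lemmas~\ref{lem:tab:mcsIntIntDen}--\ref{lem:tab:EqPEqILtDen}, and the genuinely new density-based definitions you isolate (the $\{\ip 2,<\}$ family and the $\{\ip 1,\ii 04\}$-style containment trick for $\ip 2$) are precisely the ones the paper introduces, with your verifications sound. Your two local deviations --- deriving $\ip 1$ from $\{\ip 2,<\}$ first and then $\ip 0$, where the paper defines $\ip 0$ directly and obtains $\ip 1$ from $\{\ip 0\}$, and giving a direct mirror definition $\exists z_i(z_i\ip 3 y_p\wedge z_i\ii 04 x_i)$ for $\{\ip 3,\ii 04\}$ instead of routing through Lemma~\ref{lem:anticipate:mixed:dense} --- are cosmetic and equally valid.
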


\begin{proof}
As for the $\{\ip 0\}$-completeness, only one new definition is necessary:

\medskip

\begin{small}
\[
\begin{array}{llll}
x_i \ip 0 y_p & \leftrightarrow & \exists k_p(\forall z_p(x_i\ip 2 z_p\rightarrow k_p<z_p)\wedge y_p<k_p) & \{\ip 2,<\}
\end{array}
\]
\end{small}

\medskip

\noindent Let us prove that the set \underline{$\{\ip 2,<\}$} is $\ip 0$-complete. If $\mathcal F\models\varphi(x_i,y_p)$, then there exists a point $k_p$ smaller than every point {\em contained} in $x_i$ (there are infinitely many such points because we are in a dense structure); and $y_p$ is smaller than $k_p$, so it must be {\em before} $x_i$. If, on the other hand $x_i=[a,b]$ and $y_p=c<a$, we take $k_p=a$ to satisfy all requirements. The $\ip 0$-completeness of the remaining new sets is now a consequence of the $\ip 0$-completeness and previously established results. We will give the details of the deductions chains involved. For \underline{$\{\ip 4, \ii 14 \}$}, note that $\{\ip 4\} \rightarrow_{\Den} \: <$ (Table \ref{tab:EqPEqILtDen}) and that $\{\ip 4, \ii 14, < \} \rightarrow_{\Lin} \ip 0$ (Table \ref{tab:alllin1}).
%
%
For the cases \underline{$\{\ip 1, \ii 24 \}$}, \underline{$\{\ip 1, \ii 04 \}$} and \underline{$\{\ip 1, \ii 44 \}$}, we note that $\{\ii 24\}$, $\{\ii 04\}$ and $\{\ii 44\}$ all define $=_i$ over $\Den$ (Table \ref{tab:EqPEqILtDen}), and then that  $\{\ip 1, \ii 24, =_i \}$, $\{\ip 1, \ii 04, =_i \}$ and $\{\ip 1, \ii 44, =_i \}$ all define $\ip 0$ over $\Lin$ (Table \ref{tab:alllin1}). The cases for \underline{$\{\ip 3, \ii 24 \}$}, \underline{$\{\ip 3, \ii 04 \}$} and \underline{$\{\ip 3, \ii 44 \}$} follow similarly, using that  $\{\ip 3, \ii 24, =_i \}$, $\{\ip 3, \ii 04, =_i \}$ and $\{\ip 3, \ii 44, =_i \}$ all define $\ip 0$ over $\Lin$ (Table \ref{tab:alllin1}). Considering the $\ip 0$-completeness of \underline{$\{\ip 4, \ii 34 \}$}, we again use the fact that $\{\ip 4\} \rightarrow_{\Den} \: <$ (Table \ref{tab:EqPEqILtDen}) and that $\{\ip 4, \ii 34, < \} \rightarrow_{\Lin} \ip 0$ (Table \ref{tab:alllin1}). In the case of \underline{$\{\ip 4, \ii 24 \}$} the chain of deductions goes as follows: $\{\ip 4\} \rightarrow_{\Den} \: <$ (Table \ref{tab:EqPEqILtDen}), $\{\ii 24\} \rightarrow_{\Den} \: \ii 14$ (Table \ref{tab:mcsIntIntDen}) and hence, by Lemma \ref{lem:symm},  $\{\ii 24\} \rightarrow_{\Den} \: \ii 03$, and lastly  $\{\ip4, \ii 24, \ii 03, < \} \rightarrow_{\Lin} \ip 0$ (Table \ref{tab:alllin1}). In the case of \underline{$\{\ip 4, \ii 04 \}$}, recall that $\{\ip 4, \ii 04 \} \rightarrow_{\Den} \ii 34$ (Lemma \ref{lem:anticipate:mixed:dense}), and then appeal to the just proved $\ip 0$-completeness of $\{\ip 4, \ii 34 \}$. In the case of \underline{$\{\ip 4, \ii 44 \}$} we use the fact that $\{ \ii 44 \} \rightarrow_{\Den} \ii 34$ and again appeal to the case for $\{\ip 4, \ii 34 \}$. As for proving that \underline{$\{\ip 0\}$} is $\ip 1$-complete, observe that by the previous lemma this set defines $<$, and consider the following simple definition:

\medskip

\begin{small}
\[
\begin{array}{llll}
x_i \ip 1 y_p & \leftrightarrow  & \forall z_p(x_i\ip 0 z_p\leftrightarrow z_p<y_p) & \{\ip 0\}\\
\end{array}
\]
\end{small}

\medskip

\noindent The left to right direction of the above definition is immediate. For the right to left direction, suppose, on the
contrary, that the formula holds and it is not the case that $x_i\ip 1 y_p$. Then, assume that $x_i=[a,b]$: if $y_p<a$, then, since the underlying linear order is dense, we could find $c$ such that $y_p<c<a$, and contradict the right-side by instantiating $z_p$ with $c$, and if $a<y_p$, then since the underlying linear order is dense, we could find $c$ such that $a<c<y_p$, and, again, contradict the right-side by instantiating $z_p$ with $c$. Now, the remaining new definitions for the relation $\ip 1$ all follow straightforwardly by noting that these sets all define $\ip 0$.  Finally, we prove that $\{\ip 1,\ii04\}$ is $\ip 2$-complete by means of the following definability equation:

\medskip

\begin{small}
\[
\begin{array}{llll}
x_i \ip 2 y_p & \leftrightarrow  & \exists z_i(z_i\ip 1 y_p\wedge z_i\ii 04 x_i). & \{\ip 1,\ii 04\}
\end{array}
\]
\end{small}

\medskip

\noindent The $\ip 2$-completeness of  \underline{$\{\ip 1,\ii 04\}$} is extremely simple.
Assuming $\mathcal{F} \models \varphi([a,b],c)$, there must be an interval $z_i=[c,d]$ {\em contained} in $[a,b]$,
which implies $a<c<b$, as we wanted. Conversely, if $a<c<b$, since the structure is dense, we can find $d$ such that $a<c<d<b$, and $z_i=[c,d]$ is a witness for $\varphi$. For the $\ip 2$-completeness of  \underline{$\{\ip 0,\ii 24\}$},  \underline{$\{\ip 0,\ii 44\}$} and  \underline{$\{\ip 0,\ii 34\}$} we note that facts that $\{\ip 0\} \rightarrow_{\Den} \ip 1$, that $\{\ii 24\}$, $\{\ii 44\}$ and $\{\ii 34\}$ each define $\ii 04$ over $\Den$, and the just proved  $\ip 2$-completeness of $\{\ip 1,\ii 04\}$. For \underline{$\{\ip 0,\ii 03\}$} we note that $\{\ip 0\} \rightarrow_{\Den} \: <$ and that $\{\ip 0,\ii 03, < \} \rightarrow_{\Lin} \ip 2$. For \underline{$\{\ip 1,\ii 24\}$} and  \underline{$\{\ip 1,\ii 44\}$} we again use the fact that $\{\ii 24\}$ and $\{\ii 44\}$ each define $\ii 04$ over $\Den$ and that $\{\ip 1,\ii 04\} \rightarrow_{\Den} \ii 02$ as proved above. For  \underline{$\{\ip 3,\ii 24\}$} and  \underline{$\{\ip 3,\ii 44\}$} we use the facts that $\{\ii 24\}$ and $\{\ii 44\}$ each defines $\ii 14$ over $\Den$ and that $\{\ip 3, \ii 14 \} \rightarrow_{\Lin} \ip 2$. In the case of \underline{$\{\ip 3,\ii 04\}$} we use the facts that $\{\ip 3,\ii 04\} \rightarrow_{\Den} \ii 34$ (Lemma \ref{lem:anticipate:mixed:dense}) and that $\{\ip 3,\ii 34\} \rightarrow_{\Lin} \ip 2$. For \underline{$\{\ip 4,\ii 34\}$} we use the previously proven facts that $\{\ip 4,\ii 34\} \rightarrow_{\Den} \ip 1$ and $\{\ip 1,\ii 34\} \rightarrow_{\Lin} \ip 2$. From this the cases for \underline{$\{\ip 4,\ii 24\}$}, \underline{$\{\ip 4,\ii 44\}$} and \underline{$\{\ip 4,\ii 04\}$} now respectively follow since $\{\ii 24\}  \rightarrow_{\Den} \ii 34$, $\{\ii 44 \} \rightarrow_{\Den} \ii 34$ and $\{\ip 4,\ii 04\} \rightarrow_{\Den} \ii 34$ (Lemma \ref{lem:anticipate:mixed:dense}). Lastly the $\ip 2$-completeness of \underline{$\{\ip 4,\ii 14\}$} follows by noting that $\{\ip 4 \} \rightarrow_{\Den} \: <$ and that $\{\ip 4,\ii 14, < \} \rightarrow_{\Lin} \ip 2$.
\end{proof}

\subsection{Definability for Relations in $\alli^+\setminus\{=_i\}$ in $\Den$}\label{sec:All:i:den}

\begin{table}[t]
\begin{center}
\small
\begin{tabular}{|p{0.17\textwidth}|p{0.17\textwidth}|p{0.17\textwidth}|p{0.17\textwidth}|p{0.17\textwidth}|}
	\hline
	$\ii 14$                        & $\ii 34$                        & $\ii 24$                        & $\ii 04$                  & $\ii 44$                        \\ \hline
	$\{\ip0, \ip2 \}$               & $\{\ip0, \ip2 \}$               & $\{\ip0, \ip2 \}$               & $\{\ip0, \ip3 \}$         & $\{\ip0, \ip2 \}$               \\
	$\{\ip0, \ip3 \}$               & $\{\ip0, \ip3 \}$               & $\{\ip0, \ip3 \}$               & $\{\ip0, \ip4 \}$         & $\{\ip0, \ip3 \}$               \\
	$\{\ip0, \ip4 \}$               & $\{\ip0, \ip4 \}$               & $\{\ip0, \ip4 \}$               & $\{\ip0, \ii03 \}$        & $\{\ip0, \ip4 \}$               \\
	$\{\ip0, \ii03 \}$              & $\{\ip0, \ii03 \}$              & $\{\ip0, \ii03 \}$              & $\{\ip1, \ip3 \}$         & $\{\ip0, \ii03 \}$              \\
	$\{\ip0, \ii04 \}$              & $\boldsymbol{\{\ip0, \ii04 \}}$ & $\boldsymbol{\{\ip0, \ii04 \}}$ & $\{\ip1, \ip4 \}$         & $\boldsymbol{\{\ip0, \ii04 \}}$ \\
	$\{\ip1, \ip2 \}$               & $\{\ip1, \ip2 \}$               & $\{\ip1, \ip2 \}$               & $\{\ip1, \ii03 \}$        & $\{\ip1, \ip2 \}$               \\
	$\{\ip1, \ip3 \}$               & $\{\ip1, \ip3 \}$               & $\{\ip1, \ip3 \}$               & $\boldsymbol{\{\ip2 \}}$  & $\{\ip1, \ip3 \}$               \\
	$\{\ip1, \ip4 \}$               & $\{\ip1, \ip4 \}$               & $\{\ip1, \ip4 \}$               & $\{\ip3, \ii14 \}$        & $\{\ip1, \ip4 \}$               \\
	$\{\ip1, \ii03 \}$              & $\{\ip1, \ii03 \}$              & $\{\ip1, \ii03 \}$              & $\{\ip4, \ii14 \}$        & $\{\ip1, \ii03 \}$              \\
	$\boldsymbol{\{\ip1, \ii04 \}}$ & $\boldsymbol{\{\ip1, \ii04 \}}$ & $\{\ip1, \ii04 \}$              & $\{\ii14, \ii03 \}$       & $\boldsymbol{\{\ip1, \ii04 \}}$ \\
	$\{\ip2, \ip3 \}$               & $\{\ip2, \ip3 \}$               & $\{\ip2, \ip3 \}$               & $\boldsymbol{\{\ii24 \}}$ & $\{\ip2, \ip3 \}$               \\
	$\{\ip2, \ip4 \}$               & $\{\ip2, \ip4 \}$               & $\{\ip2, \ip4 \}$               & $\{\ii34 \}$              & $\{\ip2, \ip4 \}$               \\
	$\{\ip2, \ii03 \}$              & $\{\ip2, \ii14 \}$              & $\{\ip2, \ii14 \}$              & $\boldsymbol{\{\ii44 \}}$ & $\{\ip2, \ii14 \}$              \\
	$\boldsymbol{\{\ip2, < \}}$     & $\{\ip2, \ii03 \}$              & $\{\ip2, \ii03 \}$              &                           & $\{\ip2, \ii03 \}$              \\
	$\boldsymbol{\{\ip3, \ii04 \}}$ & $\boldsymbol{\{\ip2, < \}}$     & $\{\ip2, < \}$                  &                           & $\boldsymbol{\{\ip2, < \}}$     \\
	$\boldsymbol{\{\ip4, \ii04 \}}$ & $\{\ip3, \ii14 \}$              & $\{\ip3, \ii14 \}$              &                           & $\{\ip3, \ii14 \}$              \\
	$\{\ii24 \}$                    & $\boldsymbol{\{\ip3, \ii04 \}}$ & $\boldsymbol{\{\ip3, \ii04 \}}$ &                           & $\boldsymbol{\{\ip3, \ii04 \}}$ \\
	$\{\ii03, \ii04 \}$             & $\{\ip4, \ii14 \}$              & $\{\ip4, \ii14 \}$              &                           & $\{\ip4, \ii14 \}$              \\
	$\{\ii34 \}$                    & $\boldsymbol{\{\ip4, \ii04 \}}$ & $\boldsymbol{\{\ip4, \ii04 \}}$ &                           & $\boldsymbol{\{\ip4, \ii04 \}}$ \\
	$\boldsymbol{\{\ii44 \}}$       & $\{\ii14, \ii03 \}$             & $\{\ii14, \ii03 \}$             &                           & $\{\ii14, \ii03 \}$             \\
	                                & $\{\ii14, \ii04 \}$             & $\{\ii14, \ii04 \}$             &                           & $\{\ii14, \ii04 \}$             \\
	                                & $\boldsymbol{\{\ii24 \}}$       & $\{\ii03, \ii04 \}$             &                           & $\boldsymbol{\{\ii24 \}}$       \\
	                                & $\{\ii03, \ii04 \}$             & $\{\ii34 \}$                    &                           & $\{\ii03, \ii04 \}$             \\
	                                & $\boldsymbol{\{\ii44 \}}$       & $\boldsymbol{\{\ii44 \}}$       &                           & $\{\ii34 \}$                    \\
	                                &                                 &                                 &                           & $\boldsymbol{\{\ii44 \}}$       \\ \hline
\end{tabular}
\end{center}\caption{The spectrum of the $\mathsf{mcs}(r)$, for each $r\in\alli^+\setminus\{=_i\}$. - Class: $\Den$.}\label{tab:iden}
\end{table}

To conclude this analysis of dense linear orders, we prove that Tab.~\ref{tab:iden} is correct.

\medskip

\begin{lemma}\label{lem:tab:iden}
Tab.~\ref{tab:iden} is correct.
\end{lemma}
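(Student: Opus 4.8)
The plan is to lean on the inclusion $\Den\subseteq\Lin$: every minimally complete set over $\Lin$ (Table~\ref{tab:alllin2}) stays complete over $\Den$, so the non-boldface entries of Table~\ref{tab:iden} are inherited verbatim and only the boldface (new or stricter) witnesses need fresh argument, with minimality and maximality postponed to the undefinability section as the paper's protocol prescribes. The single organizing observation is that $\{\ii 34\}$ is already complete for every relation in $\alli^+\setminus\{=_i\}$ over $\Lin$, hence over $\Den$ (indeed $\{\ii 34\}$ appears in each column). Consequently, for any candidate set $S$ it suffices to prove $S\rightarrow_{\Den}\ii 34$, after which $S$-completeness for $\ii 14,\ii 24,\ii 04,\ii 44$ is automatic. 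The lemma thus collapses to a handful of $\ii 34$-completeness facts plus the lone base case $\{\ip 2\}\rightarrow_{\Den}\ii 04$, the only boldface witness not subsumed by a chain through $\ii 34$.

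First I would gather the ingredients already established. Lemma~\ref{lem:tab:mcsIntIntDen} gives $\{\ii 24\}\rightarrow_{\Den}\ii 34$ and $\{\ii 44\}\rightarrow_{\Den}\ii 14,\ii 34$, while Lemma~\ref{lem:anticipate:mixed:dense} gives $\{\ip 3,\ii 04\}\rightarrow_{\Den}\ii 34$ and $\{\ip 4,\ii 04\}\rightarrow_{\Den}\ii 34$. Because $\ii 04$ is self-symmetric while $\ip 3\sim\ip 1$ and $\ii 03\sim\ii 14$, Lemma~\ref{lem:symm} converts the finishes-style identity $\{\ip 3,\ii 04\}\rightarrow_{\Den}\ii 03$ into $\{\ip 1,\ii 04\}\rightarrow_{\Den}\ii 14$, and combining this with the inherited $\{\ii 14,\ii 04\}\rightarrow_{\Lin}\ii 34$ yields $\{\ip 1,\ii 04\}\rightarrow_{\Den}\ii 34$. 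Routing this through $\{\ip 0\}\rightarrow_{\Den}\ip 1$ (Lemma~\ref{lem:tab:mixedden}) gives $\{\ip 0,\ii 04\}\rightarrow_{\Den}\ii 34$, and routing through $\{\ip 2,<\}\rightarrow_{\Den}\ip 0$ (same lemma) together with the inherited $\{\ip 0,\ip 2\}\rightarrow\ii 34$ gives $\{\ip 2,<\}\rightarrow_{\Den}\ii 34$. At this point every boldface witness in the $\ii 14,\ii 34,\ii 24,\ii 44$ columns has been shown $\ii 34$-complete, and the organizing observation closes those four columns.

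The crux is the remaining base case $\{\ip 2\}\rightarrow_{\Den}\ii 04$: recovering interval containment from nothing but the point-in-interval incidence $\ip 2$. My plan is to first define weak betweenness $B(z_p;p_p,q_p):=\forall w_i\bigl((w_i\ip 2 p_p\wedge w_i\ip 2 q_p)\rightarrow w_i\ip 2 z_p\bigr)$, which on dense orders holds exactly when $z_p$ lies in the closed span of $p_p$ and $q_p$, density supplying an interval carrying $p_p,q_p$ but excluding any outside point. Since $\{\ip 2\}$ does not define $=_p$ over $\Den$ (the two endpoints of a closed dense order are $\ip 2$-indistinguishable), strict betweenness must itself be bootstrapped, for instance as $B(z_p;p_p,q_p)\wedge\neg B(p_p;z_p,q_p)\wedge\neg B(q_p;p_p,z_p)$; then $x_i\ii 04 y_i$ is expressed by asserting two points internal to $y_i$, neither internal to $x_i$, that strictly bracket some internal point of $x_i$, together with containment of the internal points of $x_i$ in those of $y_i$. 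The delicate points to verify are that this forces strictness on \emph{both} sides simultaneously, and that in the converse direction density furnishes the bracketing points; note also that confining $p_p,q_p$ to the interior of $y_i$ keeps them away from global endpoints, so the betweenness predicate behaves correctly.

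Everything else in Table~\ref{tab:iden} then falls out: the boldface $\{\ii 24\}$ and $\{\ii 44\}$ rows follow from their $\ii 34$-completeness via the organizing observation, the $\{\ip 3,\ii 04\},\{\ip 4,\ii 04\},\{\ip 0,\ii 04\},\{\ip 1,\ii 04\},\{\ip 2,<\}$ rows from the chains above, and any symmetric variants from Lemma~\ref{lem:symm}; minimality of the listed sets and maximality of their complements are then inherited from the systematic computation once the undefinability results of the next section are in place. The main obstacle I anticipate is exactly the reconstruction of a two-sided strict order relation from the unordered incidence data of $\ip 2$ alone, where the absence of both $=_p$ and any primitive order forces the nested-betweenness encoding and makes the endpoint analysis the one genuinely non-routine verification.
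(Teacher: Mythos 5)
Your proposal is correct, and its scaffolding is the paper's own: non-boldface entries are inherited because completeness transfers from $\Lin$ to the subclass $\Den$; the interval-interval-only witnesses $\{\ii 24\}$ and $\{\ii 44\}$ come from Lemma~\ref{lem:tab:mcsIntIntDen}, $\{\ip 3,\ii 04\}$ and $\{\ip 4,\ii 04\}$ from Lemma~\ref{lem:anticipate:mixed:dense}; and all remaining boldface entries in the $\ii 14,\ii 34,\ii 24,\ii 44$ columns are discharged by chaining to $\ii 34$-completeness and invoking the completeness of $\{\ii 34\}$ for every $\alli^+$-relation --- exactly as in the paper, with only minor rerouting: the paper gets $\{\ip 0,\ii 04\}$ and $\{\ip 1,\ii 04\}$ via their $\ip 2$-definability (Lemma~\ref{lem:tab:mixedden}) plus $\{\ip 0,\ip 2\}\rightarrow\ii 34$ and $\{\ip 1,\ip 2\}\rightarrow\ii 34$ over $\Lin$, whereas your route via Lemma~\ref{lem:symm} applied to the $\ii 03$-definability established inside Lemma~\ref{lem:anticipate:mixed:dense}, giving $\{\ip 1,\ii 04\}\rightarrow_{\Den}\ii 14$ and then using $\{\ii 14,\ii 04\}\rightarrow_{\Lin}\ii 34$, is equally valid (and applying Lemma~\ref{lem:symm} over $\Den$ is legitimate since $\Den$ is closed under order duals, as the paper itself does repeatedly). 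The one genuinely different step is the crux $\{\ip 2\}\rightarrow_{\Den}\ii 04$, which you correctly identify as the only witness not routable through $\ii 34$ (indeed $\ip 2$ is self-symmetric while $\ii 34$ is not). The paper works at the interval level: using the $=_i$-definability of $\{\ip 2\}$ over $\Den$, its formula's third conjunct posits two distinct intervals $z_i,z'_i$ whose interior points lie inside $y_i$ and outside $x_i$, such that any interval covering the interiors of both must share an interior point with $x_i$; this forces the two witnesses onto opposite sides of $x_i$ and hence two-sided strictness. You work at the point level, bootstrapping weak betweenness from $\ip 2$-incidence, strictifying it via the nested clauses $\neg B(p_p;z_p,q_p)\wedge\neg B(q_p;p_p,z_p)$, and asserting two interior points of $y_i$, not interior to $x_i$, strictly bracketing an interior point of $x_i$. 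This does check out: confining $p_p,q_p$ to the interior of $y_i$ (with $z_p$ interior to $x_i$) guarantees flanking points, so the closed-span characterization of $B$ and its strictification are sound; then $p\le a$ with $c<p$, and $b\le q$ with $q<d$, yield $c<a<b<d$ (your containment-of-interiors conjunct is in fact redundant, though harmless), while conversely $p_p=a$, $q_p=b$, and any density-supplied interior point of $x_i$ witness the formula. What each buys: your point-level encoding isolates the real difficulty --- recovering a two-sided strict relation from the self-symmetric $\ip 2$ --- and dispenses with $=_i$, at the cost of the endpoint analysis you rightly flag; the paper's interval-level covering trick sidesteps that analysis (density supplying nonempty interiors does the analogous work) and yields a more compact verification. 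As in the paper, minimality and maximality are correctly deferred to the undefinability results.
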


\begin{proof}
All new definitions involving only interval-interval relations have already been treated in Lemma \ref{lem:tab:mcsIntIntDen}. All new $\ii 14$-completeness results follow from previously proven facts, and we will give the deduction chains. For \underline{$\{\ip 2, < \}$}, we have that $\{\ip 2, < \}$ defines $\ip 0$ in the dense case and that $\{\ip 0, \ip 2\}$ defines $\ii 14$. The cases for \underline{$\{\ip 3, \ii 04 \}$} and \underline{$\{\ip 4, \ii 04 \}$} follow from Lemma \ref{lem:anticipate:mixed:dense} and the fact that $\{\ii 34 \}$ define $\ii 14$. The new $\ii 34$-completeness results are based on the following arguments. For \underline{$\{\ip 0, \ii 04 \}$}, we have that $\{\ip 0, \ii 04 \}$ defines $\ip 2$ in the dense case and that $\{\ip 0, \ip 2 \}$ defines $\ii 34$. For \underline{$\{\ip 1, \ii 04 \}$}, we know that $\{\ip 1, \ii 04 \}$ defines $\ip 2$ in the dense case and that $\{\ip 1, \ip 2 \}$ defines $\ii 34$. For \underline{$\{\ip 2, < \}$}, we have that $\{\ip 2, < \}$ defines $\ip 0$ in the dense case and that  $\{\ip 0, \ip 2\}$ is $\ii 34$-complete.  The cases for \underline{$\{\ip 3, \ii 04 \}$} and \underline{$\{\ip 4, \ii 04 \}$} were proven in Lemma \ref{lem:anticipate:mixed:dense}. The $\ii 24$-completeness of the sets \underline{$\{\ip 0, \ii 04 \}$}, \underline{$\{\ip 3, \ii 04 \}$}  and \underline{$\{\ip 4, \ii 04 \}$} follow from their already proven $\ii 34$-completeness over $\Den$ and the $\ii 24$-completeness of $\ii 34$ over $\Lin$. For the $\ii 04$-completeness of \underline{$\{\ip 2\}$}, assuming density, recall that $\{\ip 2\}$ is $=_i$-complete in the dense case, and consider the following definition:

\medskip

\begin{small}
	\[
	\begin{array}{lllll}
	x_i\ii 04 y_i   &\leftrightarrow   &&\forall z_p(x_i\ip 2 z_p\rightarrow y_i\ip 2 z_p)\wedge\exists z_p(y_i\ip 2 z_p\wedge \neg(x_i\ip 2 z_p)) & \{\ip 2\}\\
	& &\wedge &\exists z_i \exists z'_i \left[%
		\begin{array}{llll}
		&\neg z_i =_i z'_i\\
		\wedge &\forall u_p [(z_i \ip2 u_p \vee z'_i \ip2 u_p) \rightarrow (y_i \ip2 u_p \wedge \neg x_i \ip2 u_p)]\\
		\wedge &\forall v_i \left[%
			\begin{array}{lll}
			&(\forall u_p(z_i \ip2 u_p \vee z'_i \ip2 u_p \rightarrow v_i \ip2 u_p))\\
			\rightarrow &\exists t_p (x_i \ip2 t_p \wedge v_i \ip2 t_p)
			\end{array}\right]
		\end{array}
		\right]
	\end{array}
	\]
\end{small}

\medskip

\noindent First suppose that $\mathcal{F} \models \varphi([a,b],$ $[c,d])$. Then $[c,d]$ must {\em contain} every point {\em contained} by $[a,b]$, and there must be some point {\em contained} in $[c,d]$ which is not {\em contained} in $[a,b]$. So we have $c\le a$, $b\le d$ and $[a,b]\neq [c,d]$. (Note that here we are using the fact, which follows from the density of the order, that $[a,b]$ contain at least one point. If this were not the case, the first conjunct of $\phi$ would be vacuously satisfied.)  We need to show that $a\neq c$ and $d\neq b$. Suppose, to the contrary, that $a = c$. Then we must have that $b < d$. Let $\psi$ denote the third main conjunct of $\varphi$ and let $z_i = [e,f]$ and $z'_i = [g,h]$ be two intervals satisfying $\psi$. Then $[e,f] \neq [g,h]$ by the first conjunct of $\psi$, while $b \leq e < f \leq d$ and $b \leq g < h \leq d$. However now, taking $v_i = [b,d]$ falsifies the third conjunct of $\psi$, as $[b,d]$ contains all points contained in either $[e,f]$ or $[g,h]$ while not containing any point contained in $x_i = [a,b]$. This is a contradiction. Conversely, if $[a,b]\ii 04 [c,d]$, the the first conjunct of $\varphi$ holds trivially, while any point $e$ such that $c<e<a$ or $b<e<d$ (which exists because the structure is dense) witnesses the $z_p$ of the second conjunct. Taking $z_i = [c,a]$ and $z'_i = [b,d]$ witnesses the third conjunct. All that remains to be proved is the $\ii 44$-completeness of the sets \underline{$\{\ip 0, \ii 04\}$}, \underline{$\{\ip 1, \ii 04\}$}, \underline{$\{\ip 2, < \}$}, \underline{$\{\ip 3, \ii 04\}$} and \underline{$\{\ip 4, \ii 04\}$}. But all of these sets have already been shown to define $\ii 34$ over $\Den$, and we know that $\{\ii 34\}$ defines $\ii 44$ over $\Lin$.
\end{proof}

\section{Incompleteness Results in The Class $\Den$}\label{sec:den-incomp}

We can now turn our attention to the maximal incomplete sets for relations in $\allr^+$. Notice that for some $r\in\allr^+$ some $r$-incomplete
set in the class $\Lin$ is also maximally $r$-incomplete in the class $\Den$; nevertheless, all proofs in Part I must be revisited
here, as they were designed to work not only in $\Lin$ but also in $\Dis$, and therefore are not valid in $\Den$.

\begin{table}[t]
\small
\begin{center}
\begin{tabular}{p{0.34\textwidth}|p{0.022\textwidth}|p{0.022\textwidth}|p{0.022\textwidth}|p{0.022\textwidth}|p{0.022\textwidth}|p{0.022\textwidth}|p{0.022\textwidth}|p{0.022\textwidth}|p{0.022\textwidth}|p{0.022\textwidth}|p{0.022\textwidth}|p{0.022\textwidth}|p{0.022\textwidth}|p{0.022\textwidth}|p{0.022\textwidth}|}
{\em Proved}                            & $=_p$             & $=_i$             & $<$              & $\ip0$           & $\ip1$           & $\ip2$          & $\ip3$           & $\ip4$           & $\ii34$          & $\ii14$          & $\ii03$          & $\ii24$          & $\ii04$          & $\ii44$           \\
 \hline
$\{\ip 2\}\cup\alli^{+}$                & $\bullet$         &                   &                  &                  &                  &                 &                  &                  &                  &                  &                  &                  &                  &                   \\
 \hline
$\{=_p, <, \ip 0,\ip 1\}$               &                   & $\bullet$         &                  &                  &                  &                 &                  &                  &                  &                  &                  &                  &                  &                   \\
\hline
$\{=_p, =_i, \ip1, \ii 14\}$            &                   &                   & $\bullet$        & $\bullet$        &                  &                 &                  &                  &                  &                  &                  &                  &                  &                   \\
\hline
$\{=_p, \ip 2\}\cup\alli^+$             &                   &                   & $\bullet$        & $\bullet$        & $\bullet$        &                 &  $\bullet$                &  $\bullet$                &                  &                  &                  &                  &                  &                   \\
\hline
$\{=_p, =_i, <, \ip 3, \ip 4, \ii 03\}$ &                   &                   &                  & $\bullet$        & $\bullet$        & $\bullet$       &                  &                  & $\bullet$        & $\bullet$        &                  & $\bullet$        & $\bullet$        &  $\bullet$        \\
\hline
$\{=_p, <\}\cup\alli^+$                 &                   &                   &                  & $\bullet$        & $\bullet$        & $\bullet$       &     $\bullet$             & $\bullet$                 &                  &                  &                  &                  &                  &                   \\
\hline
$\{=_p, =_i, <, \ip 0, \ip 1\}$         &                   &                   &                  &                  &                  &                 &                  &                  &                  & $\bullet$        &                  &                  &                  &                   \\
\hline
$\{=_p, =_i, \ip 2, \ii 04\}$           &                   &                   &                  &                  &                  &                 &                  &                  & $\bullet$        & $\bullet$        &  $\bullet$       & $\bullet$        &                  & $\bullet$         \\
\hline
$\{=_p, =_i, <,   \ii 04\}$             &                   &                   &                  &                  &                  &                 &                  &                  & $\bullet$        & $\bullet$        &  $\bullet$       & $\bullet$        &                  & $\bullet$         \\
\hline
\\
{\em Symmetric}                          & $=_p$             & $=_i$              & $<$              & $\ip0$           & $\ip1$           & $\ip2$              & $\ip3$           & $\ip4$           & $\ii34$          & $\ii14$          & $\ii03$          & $\ii24$          & $\ii04$          & $\ii44$           \\
\hline
$\{=_p, <, \ip 3,\ip 4\}$                &                   &$\bullet$           &                  &                  &                  &                     &                  &                  &                  &                  &                  &                  &                  &                   \\
\hline
$\{=_p, =_i, \ip 3, \ii 03\}$            &                   &                    &$\bullet$         &                  &                  &                     &                  & $\bullet$        &                  &                  &                  &                  &                  &                   \\
\hline
$\{=_p, =_i, <,\ip 0, \ip 1, \ii 14\}$   &                   &                    &                  &                  &                  & $\bullet$           & $\bullet$        & $\bullet$        & $\bullet$        &                  & $\bullet$        & $\bullet$        & $\bullet$        & $\bullet$         \\
\hline
$\{=_p, =_i, <, \ip 3, \ip 4\}$          &                   &                    &                  &                  &                  &                     &                  &                  &                  &                  & $\bullet$        &                  &                  &                   \\
\hline

\end{tabular}                                                                                                                                                                                                                                         \caption{$\mathsf{MIS}(r)$, for each $r\in\allr^+$; upper part: sets for which we give an explicit construction; lower part: symmetric ones. - Class: $\Den$.}\label{tab:MISRden}
\end{center}
\end{table}

\begin{lemma}\label{lem:tab:MISRden}
Tab.~\ref{tab:MISRden} is correct.
\end{lemma}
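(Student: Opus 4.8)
The plan is to establish, for every set $S$ listed in Tab.~\ref{tab:MISRden}, both that $S$ is $r$-incomplete over $\Den$ for each $r$ marked in its row and that it is maximally so. Incompleteness is the genuinely new content; maximality will then follow from the completeness results already secured in Lemmas~\ref{lem:tab:mcsIntIntDen}--\ref{lem:tab:iden} together with the systematic procedure of Section~\ref{sec:basics}.

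First I would dispose of the lower (\emph{Symmetric}) block of the table. By Lemma~\ref{lem:symm}, if $S\sim S'$ then $FO+S\rightarrow r$ iff $FO+S'\rightarrow r$ for symmetric $r$, and $FO+S\rightarrow r$ iff $FO+S'\rightarrow r'$ for reversible $r$ with reverse $r'$; contrapositively the corresponding incompleteness statements are equivalent. Since $\Den$ is closed under order duals, the proof of Lemma~\ref{lem:symm} relativizes to $\Den$, and each row of the \emph{Symmetric} block is the image under $\sim$ of a row of the \emph{Proved} block. Hence it suffices to treat the nine sets of the upper part.

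For each such $S$ the task is to exhibit two concrete point-interval structures $\mathcal F=\langle\mathbb D,\mathbb I(\mathbb D),S\rangle$ and $\mathcal F'=\langle\mathbb D',\mathbb I(\mathbb D'),S\rangle$ with $\mathbb D,\mathbb D'\in\Den$, together with a surjective $S$-truth preserving relation $\zeta=\zeta_p\cup\zeta_i$ that \emph{breaks} every marked $r$; by Theorem~\ref{theo:truth} this gives $FO+S\not\rightarrow_{\Den} r$. Concretely, to break an equality ($=_p$ in the row $\{\ip2\}\cup\alli^{+}$, $=_i$ in $\{=_p,<,\ip0,\ip1\}$) I would take $\zeta$ non-functional on the relevant sort, relating a single element to two $S$-indistinguishable copies; to break the order and mixed relations $<,\ip0,\ip1$ in the rows built from $\ip1,\ii14,\ip2$ or from $\{=_p,<\}\cup\alli^{+}$ I would use dense orders agreeing on the left-endpoint and containment data visible to $S$ but disagreeing globally; and to break the interval-interval relations in the $\ii04$-rows I would use structures in which $S$ forces the critical endpoints to coincide while leaving them free to separate. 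Crucially --- and this is why the Part~I proofs must be redone rather than quoted --- every witness must itself be a \emph{dense} order, so the point-doubling by adjacency available in $\Lin$ and $\Dis$ is forbidden and must be replaced by genuinely dense ``blow-ups'' (e.g.\ replacing a point by a whole dense copy of indistinguishable points, or passing to lexicographic products such as $\mathbb Q\times\mathbb Q$) whose inner points stay hidden from the relations in $S$.

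Once the nine incompleteness statements are in hand, maximality follows by the back-and-forth fixpoint argument of the \emph{Strategy} subsection: each listed $S$ is $r$-incomplete, while every set obtained by adding one further relation to $S$ already contains an $\mathsf{mcs}(r)$ from the density tables and is therefore $r$-complete; dually each listed $\mathsf{mcs}(r)$ is $r$-complete and has no complete proper subset, precisely because its complements are the listed incomplete sets. I expect the main obstacle to be the simultaneous-breaking rows --- $\{=_p,\ip2\}\cup\alli^{+}$, $\{=_p,=_i,<,\ip3,\ip4,\ii03\}$, and $\{=_p,=_i,\ip2,\ii04\}$ --- where one dense $\zeta$ must respect a large family of interval relations yet separate several targets at once; reconciling these constraints, and checking surjectivity together with the sort-respecting clause of Definition~\ref{Def:S:Truth:Pres:Rel} in a dense setting, is the delicate part.
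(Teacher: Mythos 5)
Your overall architecture coincides with the paper's: dispose of the \emph{Symmetric} block via Lemma~\ref{lem:symm} (legitimate, since $\Den$ is closed under order duals), then for each of the nine sets in the \emph{Proved} block exhibit a surjective $S$-truth preserving relation between dense structures that breaks the marked relations, and delegate maximality/minimality to the systematic procedure of Section~\ref{sec:basics}. But as written the proposal is a plan, not a proof: the entire content of the paper's argument is the nine explicit witnesses together with the relation-by-relation check of clause (ii) of Definition~\ref{Def:S:Truth:Pres:Rel}, and none of these appears. Worse, the one technique you do commit to for the dense setting --- replacing a point by a dense copy of ``indistinguishable'' points, or passing to lexicographic products such as $\mathbb Q\times\mathbb Q$ --- fails for precisely the rows where the constraints are tightest. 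Three of the nine rows ($\{\ip 2\}\cup\alli^{+}$, $\{=_p,\ip 2\}\cup\alli^{+}$ and $\{=_p,<\}\cup\alli^{+}$) contain all of $\alli^{+}$, so $\zeta_i$ must respect every Allen relation and be surjective onto $\mathbb I(\mathbb D')$. If you blow a point $p$ up into a dense set $P'$ while matching the old intervals identically (which is what such a blow-up intends), any new interval $J'=[x',y']$ with both endpoints in $P'$ must be $\zeta_i$-related to some $J=[u,v]\in\mathbb I(\mathbb D)$; since $J'$ is \emph{during} the image of every $[a,b]$ with $a<p<b$, respecting $\ii 04$ forces $a<u<v<b$ for all such $a,b$, i.e.\ $u=v=p$, a contradiction. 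The new points can be hidden from the point relations, but the intervals they span cannot be hidden from the full Allen signature.

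The paper's witnesses are instead elementary and all live in $\mathbb Q$ or $\mathbb Q\cap[0,1]$, with $\zeta_i$ equal to $Id_i$ or a one-interval perturbation of it: for $=_p$ over $\{\ip 2\}\cup\alli^{+}$, take $\mathbb Q\cap[0,1]$ and let $\zeta_p$ relate $0$ to both $0'$ and $1'$ --- the two extremal points are interior to no interval, so $\ip 2$ is respected while $=_p$ breaks (it is the \emph{boundedness} of the dense order, not a blow-up, that hides the doubling; your $\{=_p,\ip 2\}\cup\alli^{+}$ row works the same way with the swap $(0,1'),(1,0')$). The remaining rows use: $\zeta_i=Id_i$ plus $([0,2],[0',1'])$ for $\{=_p,<,\ip 0,\ip 1\}$; the length-preserving reflection $(a,-a')$, $([a,b],[-a',-a'+|b'-a'|])$ for $\{=_p,=_i,\ip 1,\ii 14\}$; the length-doubling map $\zeta_i([a,b])=[a'-|b'-a'|,b']$ anchored at the right endpoint for $\{=_p,=_i,<,\ip 3,\ip 4,\ii 03\}$; the shift $\zeta_p(a)=a'+1$ with $\zeta_i=Id_i$ for $\{=_p,<\}\cup\alli^{+}$; $Id_i$ plus $([-1,0],[-1,1])$ for $\{=_p,=_i,<,\ip 0,\ip 1\}$; and the reflections $\zeta_p(a)=-a'$, $\zeta_i([a,b])=[-b',-a']$ (resp.\ $\zeta_p=Id_p$ with reflected intervals) for the two $\ii 04$ rows. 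Until you commit to witnesses of this kind and verify them against each relation of $S$, the ``delicate part'' you flag at the end is in fact the whole lemma, and it remains unproved.
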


\begin{proof}
Let $S$ be \underline{$\{\ip 2\}\cup\alli^{+}$}: proving that it is $=_p$-incomplete is almost immediate. Indeed, it suffices to take $\mathbb D$ and $\mathbb D'$ both equal to the subset of $\mathbb Q$ of all points between $0$ and $1$, $\zeta=(\zeta_p,\zeta_i)$, where $\zeta_i=Id_i$ (the identical relation on intervals), $\zeta_p=\{(0,1')\}$ plus the identical relation on points to have a surjective truth-preserving relation that breaks $=_p$. Proving that \underline{$\{=_p,<,\ip 0,\ip 1\}$} is $=_i$-incomplete is equally easy: it suffices to take $\mathbb D=\mathbb D'=\mathbb Q$, $\zeta_p=Id_p$, and $\zeta_i=Id_i$ plus $\zeta_i([0,2],[0',1'])$. Assume, now, $S$ to be \underline{$\{=_p, =_i, \ip 1, \ii 14\}$}; we need to prove its $<,\ip 0$-incompleteness in the dense case. Let $\mathbb D=\mathbb D'=\mathbb Q$, and define $\zeta=\zeta_p\cup\zeta_i$ as follows: $(a,-a')\in\zeta_p$ for every $a\in\mathbb Q$, and $([a,b],[-a',-a'+|b'-a'|])\in\zeta_i$ for every $[a,b]\in\mathbb I(\mathbb Q)$, so that the length of every interval is preserved while their beginning point are reflected over 0; as this relation breaks both $<$ and $\ip 0$, the latter cannot be expressed in this language. Let now $S$ be $\{=_p, \ip2 \}\cup\alli^{+}$: we can prove that it is $<,\ip 0,\ip 1,\ip3,\ip 4$-incomplete. To this end, it suffices to take, once again, $\mathbb D$ and $\mathbb D'$ both equal to the subset of $\mathbb Q$ of all points between $0$ and $1$, $\zeta=(\zeta_p,\zeta_i)$, where $\zeta_i=Id_i$ (the identical relation on intervals), $\zeta_p=\{(0,1'),(1,0')\}$ plus the identical relation on every other point to have a surjective truth-preserving relation that breaks the relations under analysis. As for S=\underline{$\{=_p, =_i, <, \ip 3, \ip 4, \ii 03\}$}, we can prove its $\ip 0,\ip 1,\ip 2,i$-incompleteness, where $i\in\alli^{+}\setminus\{=_i,\linebreak\ii 03\}$ by defining two $\mathbb Q$-based structures and a relation between them defined as the identity between points and as $\zeta_i([a,b])=[a'-|b'-a'|,b']$, obtaining (as we did for the same set of relations on $\Lin$, using, in that case, a pseudo-discrete structure) a relation that maps every interval to the interval with the same {\em ending} point but twice the length. In this way, all relations in $S$ are respected. The $m$-incompleteness of \underline{$\{=_p,<\}\cup\alli^{+}$} for each $m\in\mathfrak M^+$ can be proved by taking again $\mathbb D=\mathbb D'=\mathbb Q$, $\zeta=(\zeta_p,\zeta_i)$, where $\zeta_i=Id_i$ and $\zeta_p(a)=a'+1$ for every $a\in\mathbb Q$, which clearly respects all interval-interval relations, and both equality and relative ordering between points, but breaks every relation between points and intervals. When $S$ is \underline{$\{=_p, =_i, <, \ip 0,\ip 1\}$}, we have to prove that it is $\ii 14$-incomplete. Consider two structures based on $\mathbb Q$, and let $\zeta=(\zeta_p,\zeta_i)$ be defined as $\zeta_p=Id_p$, and $\zeta_i=Id_i$ except for the interval $[-1,0]$, which is mapped to $[-1,1]$.  When $S$ is \underline{$\{=_p, =_i, \ip 2, \ii 04\}$}, we have to prove that it is $r$-incomplete in the dense case for $r\in\alli^{+}\setminus\{\ii 04,=_i\}$. Consider two structures based on $\mathbb Q$, and let $\zeta=(\zeta_p,\zeta_i)$ be defined as $\zeta_p(a)=-a'$ for every point and $\zeta_i([a,b])=[-b',-a']$ for every interval. Clearly, {\em containment} is respected for both sorts; nevertheless, all other interval-interval relations are broken. Finally, when $S$ is \underline{$\{=_p, =_i, <, \ii 04\}$}, we have to prove that it is $r$-incomplete for $r\in\alli^{+}\setminus\{=_i,\ii 04\}$. Consider two structures based on $\mathbb Q$, and let $\zeta=(\zeta_p,\zeta_i)$ be defined as $\zeta_p=Id_p$, and $\zeta_i([a,b])=[-b,-a]$; again, we respect {\em containment}  between intervals, and the relative ordering between points is respected as well (since points are not affected by the construction), and we break every other interval-interval relation.
\end{proof}

\section{Completeness Results in the Class $\Unb$}\label{sec:unb}

The ability of fragments of our language to define relations when the underlying linear order is unbounded (but not necessarily discrete or dense) differs from the dense/discrete cases only slightly. Following the same schema, we now focus on the definability part, again, pointing out the differences with the linear case.

\subsection{Definability in  $\alli^+$ and in $\allm^+$.}

We begin our study of definability over the Class $\Unb$ by considering minimal definability of $\alli^+$ relations by sets of $\alli^+$ relations, and then of $\allm^+$ relations by sets of $\allm^+$ relations.

\begin{table}[t]
	\begin{center}
		\small
		\begin{tabular}{|p{0.17\textwidth}|p{0.17\textwidth}|p{0.17\textwidth}|p{0.17\textwidth}|p{0.17\textwidth}|}
			\hline
			$\ii 14$					&$\ii 34$						& $\ii 24$						&$\ii 04$		&$\ii44$\\
			\hline
			$\{\ii24,\ii03\}$ 			&$\{\ii14,\ii24\}$				&$\{\ii14,\ii03\}$				&$\{\ii14,\ii24\}$				&$\{\ii14,\ii24\}$		\\
			$\{\ii03,\ii04\}$ 			&$\{\ii14,\ii03\}$				&$\{\ii14,\ii04\}$				&$\{\ii14,\ii03\}$				&$\{\ii14,\ii03\}$		\\
			$\{\ii34\}$ 				&$\{\ii14,\ii04\}$				&$\{\ii03,\ii04\}$				&$\{\ii24,\ii03\}$				&$\{\ii14,\ii04\}$		\\
			$\boldsymbol{\{\ii44\}}$ 	&$\{\ii24,\ii03\}$				&$\{\ii34\}$					&$\{\ii34\}$				&$\{\ii24,\ii03\}$		\\
			 							&$\boldsymbol{\{\ii24,\ii04\}}$	&$\boldsymbol{\{\ii44\}}$		&$\boldsymbol{\{\ii44\}}$				&$\boldsymbol{\{\ii24,\ii04\}}$		\\
			 							&$\{\ii03,\ii04\}$				&				&				&$\{\ii03,\ii04\}$		\\
			 							&$\boldsymbol{\{\ii44\}}$ 		&				&				&$\{\ii34\}$		\\
			
			\hline
		\end{tabular}
	\end{center}\caption{The spectrum of the $\mathsf{mcs}(r)$, for each $r\in\alli^+\setminus\{=_i\}$. - Class: $\Unb$.}\label{tab:Int:Int:unb}
\end{table}

\begin{lemma}\label{lem:Pure:Int:Inte:Und}
	Tab.~\ref{tab:Int:Int:unb} is correct.
\end{lemma}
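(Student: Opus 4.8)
The plan is to verify completeness only for the boldface entries of Table~\ref{tab:Int:Int:unb}: every non-boldface set is already a complete set over $\Lin$ (Table~\ref{tab:alllin2}, from Part~I) and hence over $\Unb$, and the minimality of these complete sets — equivalently the maximality of the matching incomplete sets — is, following the global strategy of Section~\ref{sec:basics}, not argued now but read off once the incompleteness results for $\Unb$ are established. I would also use Lemma~\ref{lem:symm} to prove each result only up to symmetry, and produce explicit definitions only for the genuinely new phenomena, obtaining everything else by chaining through already-definable relations.

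All the boldface $\{\ii 44\}$-entries reduce to a single definition: that $\{\ii 44\}$ is $\ii 14$-complete over $\Unb$. I would keep the dense formula of Lemma~\ref{lem:tab:mcsIntIntDen},
\[
x_i \ii 14 y_i \leftrightarrow \forall z_i(z_i\ii 44 x_i\leftrightarrow z_i\ii 44 y_i)\wedge\exists z_i(x_i\ii 44 z_i \wedge\neg(y_i\ii 44 z_i)),
\]
and simply replace density by unboundedness in its verification. For the first conjunct, if the left endpoints of $x_i,y_i$ differ, a point below the smaller one (supplied by unboundedness) yields an interval $z_i$ whose right endpoint is that smaller left endpoint, and $z_i$ witnesses the failure of the biconditional; the second conjunct is witnessed by an interval starting at the right endpoint of $y_i$, using a point above it. Given $\{\ii 44\}\rightarrow_{\Unb}\ii 14$, the rest is deduction: $\{\ii 14,\ii 44\}\rightarrow_{\Lin}\ii 34$ yields $\ii 34$-completeness, and since $\{\ii 34\}$ defines every relation of $\alli^+\setminus\{=_i\}$ over $\Lin$ (Table~\ref{tab:alllin2}), $\{\ii 44\}$ also becomes $\ii 24$-, $\ii 04$- and $\ii 44$-complete.

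The core of the lemma is the boldface entry $\{\ii 24,\ii 04\}$ in the $\ii 34$ column; its companion $\ii 44$-entry then follows since $\{\ii 34\}\rightarrow_{\Lin}\ii 44$. I would reduce $\ii 34$-completeness to defining the auxiliary relation $R:=\ii 34\vee\ii 44$ (``$\mathrm{right}(x_i)\le\mathrm{left}(y_i)$'') from $\{\ii 24,\ii 04\}$. Once $R$ is available, meets and before separate without any appeal to density:
\[
x_i\ii 34 y_i \leftrightarrow R(x_i,y_i)\wedge\neg\exists z_i\big(R(x_i,z_i)\wedge R(z_i,y_i)\big),\qquad x_i\ii 44 y_i \leftrightarrow \exists z_i\big(R(x_i,z_i)\wedge R(z_i,y_i)\big),
\]
because $R(x_i,z_i)\wedge R(z_i,y_i)$ forces $\mathrm{right}(x_i)<\mathrm{left}(y_i)$ (as $\mathrm{left}(z_i)<\mathrm{right}(z_i)$), while conversely $\mathrm{right}(x_i)<\mathrm{left}(y_i)$ is witnessed by taking $z_i$ to be the interval from $\mathrm{right}(x_i)$ to $\mathrm{left}(y_i)$.

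The main obstacle is thus the definition of $R=\ii 34\vee\ii 44$ from overlaps and during over $\Unb$. In $\Den$ this relation is definable from $\ii 24$ alone, because one slips a witness that $x_i$ overlaps and that overlaps $y_i$ into the common interior; over $\Unb$ this collapses, since an atomic interval has empty interior, so for such $x_i$ neither $x_i\ii 24 z_i$ nor $z_i\ii 04 x_i$ ever holds and no overlap-witness exists. The delicate point — and exactly what makes $\{\ii 24,\ii 04\}$ $\ii 34$-complete over $\Unb$ but not over $\Lin$ — is to replace the interior witness by \emph{containing} intervals furnished by $\ii 04$ and to drive their outer endpoints off to infinity using unboundedness, so that the relative order of $\mathrm{right}(x_i)$ and $\mathrm{left}(y_i)$ is certified uniformly on dense and on discrete unbounded orders. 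I expect the definition of $R$ to be the one genuinely intricate formula; the two equivalences displayed above and the $\{\ii 44\}$ chain are then routine.
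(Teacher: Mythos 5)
Your treatment of the $\{\ii 44\}$ entries is complete and matches the paper's own proof: the paper reuses the dense-case formula of Lemma~\ref{lem:tab:mcsIntIntDen} and substitutes unboundedness for density, and your witnesses are correct --- indeed slightly more careful than the paper's remark, since the forward direction of the first conjunct also needs the unboundedness fix (the interval $[e,a]$ with $e$ below the smaller left endpoint), which you supply. Your two displayed equivalences recovering $\ii 34$ and $\ii 44$ from $R=\ii 34\vee\ii 44$ are also sound, over \emph{all} linear orders, and legitimately replace the paper's citation of Part~I at that point. The genuine gap is that you never produce the one formula on which the lemma turns: a definition of $R$ from $\{\ii 24,\ii 04\}$ over $\Unb$. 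The boldface $\{\ii 24,\ii 04\}$ entries in the $\ii 34$ and $\ii 44$ columns are precisely the new content of Table~\ref{tab:Int:Int:unb}, and announcing that you ``expect'' an intricate formula built from containing intervals with endpoints driven off to infinity is a plan, not a proof.

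Worse, that plan fails if implemented naively, because staggered large containers do not certify direction: over an unbounded order, for \emph{any} disjoint-or-abutting pair $x_i,y_i$, \emph{in either order}, there exist $z$ strictly containing $x_i$ and $k$ strictly containing $y_i$ with $z\ii 24 k$ --- just put the left endpoint of $z$ below that of $k$ and the right endpoint of $k$ above that of $z$. Concretely, with $x_i=[2,3]$ and $y_i=[0,1]$ (so $y_i$ is \emph{before} $x_i$), the choices $z=[-2,4]$, $k=[-1,5]$ satisfy $x_i\ii 04 z\wedge y_i\ii 04 k\wedge z\ii 24 k$; one checks that this configuration even satisfies every conjunct of the formula the paper itself displays for this step (including the separating containers $[1,4]$, $[-1,2]$ and the negated overlap/during clauses), so a formula of that shape pins down only the \emph{symmetric closure} of $R$ (disjointness-or-abutment, call it $D$), not $R$ --- the paper's verification claim that $z_i,k_i$ ``cannot be placed'' when $y_i$ ends before $x_i$ overlooks exactly these staggered containers. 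A completion of your argument must break the symmetry explicitly, for instance: first define $D$ (the formula above does this correctly on $\Unb$), then set $x_i\ii 44 y_i\leftrightarrow\exists z\exists k\,(x_i\ii 04 z\wedge y_i\ii 04 k\wedge z\ii 24 k\wedge D(z,y_i))$ --- the clause $D(z,y_i)$ forces the right endpoint of $z$ to lie at or below the left endpoint of $y_i$, since the other branch of $D(z,y_i)$ contradicts $z\ii 24 k\wedge y_i\ii 04 k$, and the converse is witnessed by $z=[e,c]$, $k=[a,f]$ using unboundedness --- and finally orient \emph{meets} by $x_i\ii 34 y_i\leftrightarrow D(x_i,y_i)\wedge\neg(x_i\ii 44 y_i)\wedge\neg(y_i\ii 44 x_i)\wedge\forall u\,(y_i\ii 44 u\rightarrow x_i\ii 44 u)$. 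Some such device is indispensable; without it your proposal omits, rather than proves, the heart of the lemma.
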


\begin{proof}
To prove that \underline{$\{\ii 44\}$} is $\ii 14$-complete we ue the same definition as was used in the dense case in Lemma \ref{lem:tab:mcsIntIntDen}:
	
	\medskip
	
	\begin{small}
		\[
		\begin{array}{llll}
		x_i \ii 14 y_i & \leftrightarrow & \forall z_i(z_i\ii 44 x_i\leftrightarrow z_i\ii 44 y_i)\wedge\exists z_i(x_i\ii 44 z_i \wedge\neg(y_i\ii 44 z_i)). & \{\ii 44\}
		\end{array}
		\]
	\end{small}

\medskip

\noindent The argument for the correctness of this definition is the same as in the dense case except that, when arguing that $[a,b] \ii14 [c,d]$ implies $\phi([a,b],[c,d])$, we appeal to unboundedness rather than density to justify the existence of the interval required by the second conjunct of the definition. As before, the $\ii34$-completeness of \underline{$\{\ii 44\}$} now follows via $\ii34$-completeness of $\{\ii14, \ii44 \}$ over all linear orders. Now, since $\ii34$ defines all interval-interval relations over all linear orders, it follows that $\ii 44$ is also complete with respect all other interval-interval relations over unbounded orders. To compete the proof, it suffices to show the $\ii 34$-completeness of \underline{$\{\ii 24,\ii 04\}$}. We do so by proving, as we did in Lemma 3.1 of Part I of this paper, that this set is able to express the weaker relation $\ii 34\cup\ii 44$:
	
	\medskip
	
	\begin{small}
		\[
		\begin{array}{llll}
		x_i\ii 34\cup\ii 44 y_i  &\leftrightarrow&  \exists z_i,k_i(x_i\ii 04 z_i\wedge y_i\ii 04 k_i\wedge z_i\ii 24 k_i)\wedge & \{\ii04,\ii 24\}\\
		&&\exists z_i(x_i\ii 04 z_i\wedge\neg(y_i\ii 04 z_i))\wedge \exists k_i(y_i\ii 04 k_i\wedge\neg(x_i\ii 04 k_i))\wedge & \\
		&&\neg(x_i\ii 24 y_i)\wedge\neg(y_i\ii 24 x_i)\wedge\neg(x_i\ii 04 y_i)\wedge\neg(y_i\ii 04 x_i).
		\end{array}
		\]
	\end{small}
	
	\medskip
	
\noindent If $\mathcal{F} \models \varphi([a,b],[c,d])$ over an unbounded structure, we can eliminate all possibilities for the relationship between $[a,b]$ and $[c,d]$ other than $\ii 34$ and $\ii 44$. Indeed, first observe that $x_i$ and $y_i$ cannot {\em  overlap} nor {\em contain} each other. Next, if $y_i$ {\em ends} before $x_i$ or at its {\em beginning} point, it would be impossible to place $z_i$ and $k_i$. Finally, if $x_i$ {\em starts} or {\em finishes} $y_i$, or the other way around, we have a contradiction with the second or the third requirement of $\varphi$. This implies that $x_i$ {\em meets} or is {\em before} $y_i$. Conversely, if $[a,b]\ii 34\vee\ii 44[c,d]$, then we can take $z_i=[e,d]$, where $e<a$, and $k_i=[a,f]$, where $d<f$, and the existence of $e$ and $f$ is guaranteed by the assumption of unboundedness.
\end{proof}

\subsection{Definability for the Relations $=_p,=_i,<$ in $\Unb$}

\begin{table}[t]
\small
\begin{center}
\begin{tabular}{|p{0.20\textwidth}|p{0.20\textwidth}|p{0.20\textwidth}|}
\hline
$=_p$                     & $=_i$                    & $<$                     \\
\hline
$\{<\}$                   & $\{\ip 0,\ip 2\}$              &  $\boldsymbol{\{\ip0 \}}$     \\
$\boldsymbol{\{\ip 0\}}$  & $\{\ip 0,\ip 3\}$              &  $\{\ip1, \ip2 \}$         \\
$\boldsymbol{\{\ip 2\}}$  & $\{\ip 0,\ip 4\}$              &  $\{\ip1, \ip3 \}$          \\
$\{\ip 1\}$                & $\boldsymbol{\{\ip 0,\ii24\}}$ &             $\boldsymbol{\{\ip1, \ii24 \}}$ \\
$\{\ip 3\}$                & $\boldsymbol{\{\ip 1,\ii24\}}$ &   $\{\ip1, \ii03 \}$     \\                     
$\boldsymbol{\{\ip 4\}}$   & $\{\ip 1,\ip 2\}$              &    $\{\ip1, \ii34 \}$     \\                    
                          & $\{\ip 1,\ip 3\}$ 			   &          $\boldsymbol{\{\ip1, \ii04 \}}$     \\        
                          & $\{\ip 1,\ip 4\}$	 		   &          $\boldsymbol{\{\ip1, \ii44 \}}$     \\        
                          & $\{\ip 2,\ip 3\}$	 		   &                    $\{\ip2, \ip3 \}$     \\
                          & $\{\ip 2,\ip 4\}$	 		   &                $\boldsymbol{\{\ip2, \ii14 \}}$   \\                          
                          & $\boldsymbol{\{\ip 3,\ii24\}}$ &          $\boldsymbol{\{\ip2, \ii24 \}}$    \\                        
                          & $\boldsymbol{\{\ip 4,\ii24\}}$ &            $\boldsymbol{\{\ip2, \ii03 \}}$      \\                    
                          & $\{\ii 14\}$	 			   &                        $\boldsymbol{\{\ip2, \ii34 \}}$        \\                  
                          & $\{\ii 03\}$                   &             $\boldsymbol{\{\ip 2,\ii44\}}$          \\                 
                          & $\{\ii 34\}$                   &           $\{\ip3, \ii14 \}$         \\                                
                          & $\boldsymbol{\{\ii04\}}$       &          $\boldsymbol{\{\ip3, \ii24 \}}$         \\                    
                          & $\boldsymbol{\{\ii44\}}$       &          $\{\ip3, \ii34 \}$           \\                               
                          &                                &            $\boldsymbol{\{\ip3, \ii04 \}}$         \\                                  
                          &                           	   &           $\boldsymbol{\{\ip3, \ii44 \}}$         \\                    
                          &                          	   &              $\boldsymbol{\{\ip4 \}}$          \\                         
\hline
\end{tabular}
\caption{The spectrum of the $\mathsf{mcs}(r)$, for each $r\in\{=_p,=_i,<\}$. - Class: $\Unb$.}\label{tab:EqPEqILtUnb}
\end{center}
\end{table}

\begin{lemma}
Tab.~\ref{tab:EqPEqILtUnb} is correct.
\end{lemma}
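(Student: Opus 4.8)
The plan is to follow the organization of the dense case (Lemma~\ref{lem:tab:EqPEqILtDen}): treat the three target relations $=_p$, $=_i$, and $<$ in turn, prove the genuinely new (boldface) complete sets explicitly modulo symmetry, and obtain every remaining entry either by inheritance from $\Lin$ (Tab.~\ref{tab:alllin1}, Tab.~\ref{tab:alllin2}), since completeness over $\Lin$ transfers downward to $\Unb$, or by a short chain of previously established definability facts, or by symmetry via Lemma~\ref{lem:symm}. The recurring theme is that most dense-case formulas survive verbatim, with unboundedness taking over the role that density played in producing witnessing intervals.

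For $=_p$, first I would check that the dense definitions $x_p =_p y_p \leftrightarrow \forall x_i(x_i \ip 0 x_p \leftrightarrow x_i \ip 0 y_p)$ and $x_p =_p y_p \leftrightarrow \forall x_i(x_i \ip 2 x_p \leftrightarrow x_i \ip 2 y_p)$ witness the $=_p$-completeness of $\{\ip 0\}$ and of $\{\ip 2\}$: the only change is that, when $x_p \neq y_p$, the distinguishing interval is produced from the existence of a point beyond a given one (unboundedness) rather than a point strictly between two given ones (density). The set $\{\ip 4\}$ then follows from $\{\ip 0\}$ by Lemma~\ref{lem:symm}, while $\{<\}, \{\ip 1\}, \{\ip 3\}$ are inherited from $\Lin$.

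For $=_i$, the singleton $\{\ii 44\}$ is handled by the chain $\{\ii 44\} \rightarrow_{\Unb} \ii 34$ (Lemma~\ref{lem:Pure:Int:Inte:Und}) together with the $\Lin$-completeness of $\{\ii 34\}$ with respect to every $\alli^+$-relation; the singleton $\{\ii 04\}$ is handled by re-running the dense ``same intervals contained / same intervals containing'' definition, again substituting unboundedness for density when constructing the witnessing $z_i$. The delicate entries are the pairs $\{\ip 0, \ii 24\}$ and $\{\ip 1, \ii 24\}$ (with $\{\ip 3, \ii 24\}$ and $\{\ip 4, \ii 24\}$ following by Lemma~\ref{lem:symm}). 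The point to stress is that, in contrast to $\Den$, the relation $\ii 24$ alone does \emph{not} define $=_i$ over $\Unb$ --- in a discrete unbounded order such as $\mathbb Z$ a short interval overlaps nothing --- so a mixed relation must be added to pin down the left endpoint. The plan is to use $\ip 0$ (resp.\ $\ip 1$) to force a common left endpoint via $\forall z_p(x_i \ip 0 z_p \leftrightarrow y_i \ip 0 z_p)$ (resp.\ $\exists z_p(x_i \ip 1 z_p \wedge y_i \ip 1 z_p)$), and then $\forall z_i(z_i \ii 24 x_i \leftrightarrow z_i \ii 24 y_i)$ to force a common right endpoint, a difference $b < d$ being exposed by a probe $z_i = [e,b]$ whose left endpoint $e$ lies below the common left endpoint, supplied by unboundedness.

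For $<$, the singleton $\{\ip 0\}$ (hence $\{\ip 4\}$ by Lemma~\ref{lem:symm}) is obtained by the dense definition $x_p < y_p \leftrightarrow \exists x_i(x_i \ip 0 x_p \wedge \neg(x_i \ip 0 y_p))$, the witnessing interval starting at $y_p$ and reaching beyond it by unboundedness. All entries $\{\ip 1, r\}$ and $\{\ip 3, r\}$ with $r \in \{\ii 24, \ii 04, \ii 44\}$ are then derived by first observing (from the previous paragraph) that the set defines $=_i$ over $\Unb$, and then appealing to the $\Lin$-facts that $\{\ip 1, r, =_i\}$ and $\{\ip 3, r, =_i\}$ define $<$ (Tab.~\ref{tab:alllin1}). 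The genuinely new constructions, and the main obstacle, are the families $\{\ip 2, \ii 14\}, \{\ip 2, \ii 24\}, \{\ip 2, \ii 34\}, \{\ip 2, \ii 44\}$ and $\{\ip 2, \ii 03\}$: here $\ip 2$ is orientation-blind and, since transitive-closure arguments are unavailable in first-order logic, each such set must define $<$ by a single formula that exploits the oriented interval relation. I expect to do this by exhibiting two intervals whose oriented interaction separates $x_p$ from $y_p$; for $\ii 24$ the formula
\[
x_p < y_p \leftrightarrow \exists x_i \exists y_i\,\bigl(x_i \ii 24 y_i \wedge x_i \ip 2 x_p \wedge \neg(y_i \ip 2 x_p) \wedge y_i \ip 2 y_p \wedge \neg(x_i \ip 2 y_p)\bigr)
\]
works, with witnesses $x_i = [a,y_p]$ and $y_i = [x_p,d]$ for suitable $a < x_p$ and $d > y_p$ obtained from unboundedness; an analogous formula built on a shared endpoint handles $\ii 14$ (and $\ii 03$ by Lemma~\ref{lem:symm}), while $\ii 34$ and $\ii 44$ reduce to the $\ii 24$ case since each defines $\ii 24$ over $\Unb$. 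As elsewhere in the paper, minimality and maximality are deferred to the companion incompleteness results.
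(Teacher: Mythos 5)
Your proposal is correct and takes essentially the same approach as the paper: the same table-driven decomposition, the same uniform $\forall x_i(x_i\,m\,x_p \leftrightarrow x_i\,m\,y_p)$ schema for $=_p$, mixed-plus-$\ii 24$ biconditionals for $=_i$ (your probe $[e,b]$ from left-unboundedness mirrors the paper's $[b,e]$ from right-unboundedness), the identical $\{\ip 2,\ii 24\}$ formula for $<$ (your displayed formula is the paper's with conjuncts permuted), and the same symmetry and $=_i$-chain reductions. Your few local deviations --- the one-conjunct $\{\ip 0\}$ definition of $<$ with witness $[y_p,d]$ in place of the paper's two-conjunct variant, and discharging $\{\ii 44\}$ (for $=_i$) and $\{\ip 2,\ii 44\}$ (for $<$) via $\{\ii 44\}\rightarrow_{\Unb}\ii 34$ from Lemma~\ref{lem:Pure:Int:Inte:Und} instead of the paper's explicit formulas --- are all valid and, if anything, slightly shorter.
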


\begin{proof}
Starting with $=_p$, we have now that every mixed relation is $=_p$-complete, as follows:

\medskip

\begin{small}
\[
\begin{array}{llll}
x_p=_p y_p & \leftrightarrow & \forall x_i(x_i~m~x_p\leftrightarrow x_i~m~y_p). & \{m\}, ~m\in\mathfrak M^+
\end{array}
\]
\end{small}

\medskip

\noindent Observe that this definition is the same that we have used in the dense case for $\ip 0$ and $\ip 4$; the difference is that now, because of the unboundedness hypothesis, the argument also works for $\underline{\{\ip 2\}}$, as we now prove. The left to right direction if immediate. We argue contrapositively for the other implication. Suppose that $x_p=a$ and $y_p=b$ are not equal. Without loss of generality we can assume that $x_p<y_p$. Since the underlying domain is unbounded, there must be a point $c<a$, and therefore, the interval $[c,b]$ is such that $[c,b]\ip 2 a$ but it is not the case that $[c,b]\ip 2 b$, falsifying the right-hand side. Notice that this argument does not work on dense structures that are left/right bounded, such as $[0,1]\subset\mathbb Q$. Now, consider the relation $=_i$, and the following definitions:

\medskip

\begin{small}
\[ x_i =_i y_i \leftrightarrow \left\{ \begin{array}{ll}
                                                                                                                                                         \forall z_p (x_i \ip 0 z_p \leftrightarrow y_i \ip 0 z_p) \wedge  \forall u_i (x_i \ii 24 u_i \leftrightarrow y_i \ii 24 u_i)& \{\ip0,\ii 24\}\\
                                  & \\
                                                                                                                                                         \forall z_p (x_i \ip 1 z_p \leftrightarrow y_i \ip 1 z_p) \wedge  \forall u_i (x_i \ii 24 u_i \leftrightarrow y_i \ii 24 u_i) & \{\ip1,\ii 24\}\\
                                   & \\                                                                                                                      
                                   \forall z_i(x_i\ii 04 z_i\leftrightarrow y_i\ii 04 z_i)\wedge \forall z_i(z_i\ii 04 x_i\leftrightarrow z_i\ii 04 y_i) & \{\ii 04\} \\
                                   & \\
                                   \forall z_i(x_i\ii 44 z_i\leftrightarrow y_i\ii 44 z_i)\wedge \forall z_i(z_i\ii 44 x_i\leftrightarrow z_i\ii 44 y_i). & \{\ii 44\} \\
   \end{array} \right. \]
\end{small}

\medskip

\noindent First take the set $\underline{\{\ip0,\ii 24\}}$. If $x_i =_i y_i$ we immediately have $\varphi(x_i, y_i)$. Conversely, suppose $x_i = [a,b] \neq_i [c,d] = y_i$, so either $a \neq c$ or $b \neq d$. In the first case the first conjunct of $\varphi$ is falsified. Assume therefore that $a = c$ but that $b \neq d$ and, w.l.o.g., that $b < d$. Since the order is unbounded, there is a point $e > d$. Then $[c,d] \ii 24 [b,e]$ but $\neg ([a,b] \ii 24 [b,e])$, falsifying the second conjunct. The argument in the case of $\underline{\{\ip1,\ii 24\}}$ is virtually identical. The $=_i$-completeness of $\underline{\{\ip4,\ii 24\}}$ and $\underline{\{\ip3,\ii 24\}}$ follows by symmetry (Lemma \ref{lem:symm}).  Consider the set \underline{$\{\ii 04\}$}. If $x_i =_i y_i$ we immediately have $\varphi(x_i, y_i)$. Conversely, suppose that $[a,b] \neq_i [c,d]$, so either $a \neq c$ or $b \neq d$. Consider, w.l.o.g., the case $a \neq c$, specifically $a < c$.  We then choose a point $e>\max\{b,d\}$, which does exist because the underlying domain is unbounded. Now, we have that $[c,d]\ii 04 [a,e]$, but it is not the case that $[a,b]\ii 04 [a,e]$, falsifying the right-hand side. The other case is treated with a similar argument. Let us now focus on $<$. Five new definitions are needed:

\medskip

\begin{small}
\[ x_p < y_p \leftrightarrow \left\{ \begin{array}{ll}
                                   \forall x_i(x_i\ip 0 y_p\rightarrow x_i\ip 0 x_p)\wedge\exists x_i(x_i\ip 0 x_p\wedge \neg(x_i\ip 0 y_p)) & \{\ip 0\}\\
                                   & \\
                                  
                                   \exists x_iy_i(x_i\ip1 x_p\wedge y_i\ip 1 y_p\wedge x_i\ii 24 y_i) & \{\ip 1,\ii 24\}\\
                                   & \\
                                   \exists x_iy_i(x_i\ip2 x_p\wedge y_i\ip 2 y_p\wedge \neg(x_i\ip 2 y_p)\wedge \neg(y_i\ip 2 x_p)\wedge x_i\ii 24 y_i) & \{\ip 2,\ii 24\}\\
                                   & \\
                                   \exists x_iy_i(x_i\ip2 x_p\wedge y_i\ip 2 y_p \wedge \neg(x_i\ip 2 y_p)\wedge x_i\ii 14 y_i) & \{\ip 2,\ii 14\}\\
                                   &\\
                                   \exists x_iy_i(x_i\ip2 x_p\wedge y_i\ip 2 y_p \wedge \neg(x_i\ip 2 y_p)\wedge\neg(y_i\ip 2 x_p)\wedge  & \{\ip 2,\ii 44\}\\
                                   \hspace{0.9cm}\exists z_i,t_i(z_i\ii 44 y_i\wedge\neg(z_i\ii 44 x_i)\wedge x_i\ii44 t_i\wedge \neg(y_i\ii 44 t_i)))\\ 
\end{array} \right. \]
\end{small}

\medskip

\noindent Consider, first the set \underline{$\{\ip 0\}$}. Suppose that $\mathcal F\models\varphi(a,b)$, and, for the sake of contradiction, that $a\ge b$. If $a=b$, then every interval $[c,d]$ such that $[c,d]\ip 0 a$ must be such that $[c,d]\ip 0 b$ as well, contradicting the second conjunct, and, if $b<a$ then, by unboundedness, there exists an interval $[a,c]$ such that $[a,c]\ip 0 b$ but it is not the case that $[a,c]\ip 0 a$, contradicting the first conjunct. On the other hand, suppose that $x_p = a < b$. Then if $[c,d]\ip 0 b$, we have $b<c$ and hence $a<c$, therefore $[c,d]\ip 0 a$, satisfying the first conjunct. For the sake of the second conjunct, consider any interval $[b,c]$: such an interval exists by unboundedness and clearly $[b,c]\ip 0 a$ while it is not the case that $[b,c]\ip 0 b$. Consider, now, the set \underline{$\{\ip 1,\ii 24\}$}. If $x_p = a < b = y_p$ then, by unboundedness, we may choose points $c$ and $d$ such that $a < b < c < d$. Then the intervals $x_i = [a,c]$ and $y_i = [b,d]$ witness $\varphi$. Conversely, if $\mathcal F\models\varphi(a,b)$, then there must be intervals $[a,c]$ and $[b,d]$ such that $[a,c] \ii24 [b,d]$, i.e.\ such that $a<b<c<d$. So, in particular, $a < b$. As for \underline{$\{\ip 2,\ii 24\}$}, suppose that $\mathcal F\models\varphi(a,b)$, and, by contradiction, that $a\ge b$. If $a=b$, then for any interval {\em containing} $a$ also {\em contains} $b$, and vice versa, making the first three conjuncts not simultaneously satisfiable. If, on the other hand, $b<a$, any interval {\em containing} $b$ must {\em start} before $a$, and therefore no interval {\em containing} $a$ can {\em overlap} an interval {\em containing} $b$, making $\varphi(a,b)$ false. Conversely, assume that $a<b$. By unboundedness we can take two intervals $[c,b]$ such that $c<a$ and $[a,d]$ such that $b<d$ to witness $\varphi(a,b)$, as we wanted. As for \underline{$\{\ip 2,\ii 14\}$}, suppose that $\mathcal F\models\varphi(a,b)$, and, by contradiction, that $a\ge b$. If $a=b$, then $a$ and $b$ are contained by the same intervals, making the first three conjuncts not simultaneously satisfiable. If, on the other hand, $b<a$, then any interval {\em containing} $a$ but not $b$ must {\em start} at $b$ or after, and therefore it cannot {\em start} an interval containing $b$, falsifying $\varphi(a,b)$. Conversely, suppose that $a<b$: then, the intervals $[a,b]$ and $[a,c]$ for some $c>b$ witness $\varphi(a,b)$, as we wanted. Finally, as for \underline{$\{\ip 2,\ii 44\}$}, suppose that $\mathcal F\models\varphi(a,b)$, and, by contradiction, that $a\ge b$. If $a=b$, then $a$ and $b$ are contained by the same intervals, making the first four conjuncts not simultaneously satisfiable. If, on the other hand, $b<a$, then any interval {\em containing} $a$ but not $b$ must {\em start} at $b$ or after it, and every interval {\em containing} $b$ but not $a$ must {\em end} at $a$ or before it; thus, $z_i$ and $t_i$ cannot be witnessed by any concrete interval, as $z_i$ should {\em end} before $y_i$ {\em starts} but not before $x_i$ does, and, symmetrically, $x_i$ should {\em end} before $t_i$ {\em starts} but not before $y_i$ does. Conversely, suppose that $a<b$: then, any interval of the type $[c,b]$ ($c<a$) serves as witness of $x_i$, and any interval of the type $[a,d]$ ($d>b$) serves as witness of $y_i$, while $z_i$ and $t_i$ are witnessed by any intervals $[e,c]$ ($e < c$) and $[d,f]$ ($d < f$), respectively. Such points $c$, $d$, $e$ and $f$ exist thanks to the unboundedness of the order. Every other definition is now straightforward. The cases for $\underline{\{\ip 2, \ii 34\}}$ follows since $\{\ii 34\}$ defines $\ii 14$ in the general case and the just-proved $<$-completeness of $\{\ip 2,\ii 14\}$. The cases for $\underline{\{\ip 1, \ii 04\}}$ and $\underline{\{\ip 1, \ii 44\}}$ follow since both $\{\ii 04\}$ and $\{\ii44\}$ define $=_i$ in the unbounded case and $\{\ip 1, \ii 04,  =_i\}$ define $<$ in the general case. The remaining cases, viz.\  $\underline{\{\ip 2, \ii 03\}}$, $\underline{\{\ip 3, \ii 24\}}$, $\underline{\{\ip 3, \ii 04\}}$, $\underline{\{\ip 3, \ii 44\}}$  and $\underline{\{\ip 4\}}$, now follow by symmetry and Lemma \ref{lem:symm} from the already-proved cases for $\{\ip 2, \ii 14 \}$, $\{\ip 1, \ii 24 \}$, $\{\ip 1, \ii 04 \}$, $\{\ip 1, \ii 44 \}$ and $\{\ip 0\}$, respectively.
\end{proof}

\subsection{Definability for $\allm^+$-Relations in $\Unb$}

Because the relationship between definability of $\allm^+$ and $\alli^+$ over unbounded orders is somewhat involved, the corresponding results cannot be kept as neatly apart as e.g.\ in the dense case. We therefore first present and prove two lemmas (Lemmas \ref{lem:first:allm:comp:unb} and \ref{lem:unb:anticipate}) which collect some key definability results for $\allm^+$ and $\alli^+$, respectively. These results are then used to prove the full collection of minimal definability results for $\allm^+$ and $\alli^+$ (Lemmas \ref{lem:allm:unb} and \ref{lem:alli:unb}).

\begin{lemma}\label{lem:first:allm:comp:unb}
	Over unbounded linear orders the sets $\{\ip 2,\ii 04, <\}$ and $\{\ip 1, \ii 24\}$ are $\ip 0$-complete, the set $\{\ip 0\}$ is $\ip 1$-complete, while the sets $\{\ip 1,\ii 24\}$, $\{\ip 1,\ii 04\}$, $\{\ip 0,\ii 24\}$ and $\{\ip 0,\ii 04\}$ are $\ip 2$-complete.
\end{lemma}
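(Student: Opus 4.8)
The plan is to prove the six claims by combining a handful of genuinely new definitions with several reductions, exploiting the fact that $\ip 0$ and $\ip 1$ become interdefinable once a small amount of extra structure is present. Throughout I would freely use the facts established in the preceding lemma (Tab.~\ref{tab:EqPEqILtUnb}) that, over $\Unb$, the set $\{\ip 0\}$ defines $<$, that $\{\ip 1,\ii 24\}$ and $\{\ip 1,\ii 04\}$ each define $<$, and that $\{\ii 04\}$ defines $=_i$.

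First the straightforward point-relation results. For $\{\ip 0\}$ being $\ip 1$-complete I would use the defined $<$ and take $x_i \ip 1 y_p \leftrightarrow \forall z_p(x_i \ip 0 z_p \leftrightarrow z_p < y_p)$; since $x_i \ip 0 z_p$ holds exactly when $z_p$ lies below the left endpoint of $x_i$, the right-hand side forces that left endpoint to equal $y_p$, with no appeal to density. For $\{\ip 1,\ii 24\}$ being $\ip 0$-complete I would again invoke the defined $<$ and set $x_i \ip 0 y_p \leftrightarrow \exists z_p(x_i \ip 1 z_p \wedge y_p < z_p)$, the existential being witnessed by the always-present left endpoint of $x_i$. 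For $\{\ip 1,\ii 24\}$ being $\ip 2$-complete I would use $x_i \ip 2 y_p \leftrightarrow \exists z_i(x_i \ii 24 z_i \wedge z_i \ip 1 y_p)$: the left endpoints of the intervals that $x_i$ overlaps are exactly the internal points of $x_i$, and right-unboundedness guarantees that for every internal $y_p$ a witnessing overlapping interval $[y_p,d]$ exists.

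The two substantive new definitions come next. For $\{\ip 2,\ii 04,<\}$ being $\ip 0$-complete the idea is to recover the closed interval $[a,b]$ underlying $x_i$ as the set of points internal to \emph{every} super-interval of $x_i$, a characterisation that survives even when $x_i$ has no internal points of its own: I would take $x_i \ip 0 y_p \leftrightarrow \forall z_p[(\forall w_i(x_i \ii 04 w_i \to w_i \ip 2 z_p)) \to y_p < z_p]$. The inner universal pins $z_p$ down to $a \le z_p \le b$ (here two-sided unboundedness is essential, both to ensure super-intervals exist and to force the endpoints of $x_i$ to lie strictly inside suitable super-intervals), so the whole formula asserts that $y_p$ lies strictly below $\min[a,b]=a$. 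The other new step, which I expect to be the main obstacle, is showing that $\{\ip 1,\ii 04\}$ is $\ip 2$-complete: the difficulty is that $\ip 1$ and $\ii 04$ only ever expose left endpoints and strict containment, so detecting the right endpoint of $x_i$ (hence ``being inside from the right'') is not directly available. My plan is to first define $\ii 14$ over $\Unb$ from $\{\ip 1,\ii 04\}$, using the derived $=_i$, via $z_i \ii 14 x_i \leftrightarrow \exists z_p(z_i \ip 1 z_p \wedge x_i \ip 1 z_p) \wedge \neg(z_i =_i x_i) \wedge \exists w_i(z_i \ii 04 w_i \wedge \neg(x_i \ii 04 w_i))$; the last conjunct separates two intervals sharing a left endpoint by detecting which is the shorter, with the required witness (a super-interval of $z_i$ ending exactly at the right endpoint of $x_i$) supplied by left-unboundedness. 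Once $\ii 14$ is in hand, I would invoke the $\Lin$-level fact (Tab.~\ref{tab:alllin1}) that $\{\ip 1,\ii 14,\ii 04\}$ is $\ip 2$-complete.

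Finally, the remaining two sets follow by reduction: by the $\ip 1$-completeness of $\{\ip 0\}$ just established, both $\{\ip 0,\ii 24\}$ and $\{\ip 0,\ii 04\}$ define $\ip 1$, and hence inherit the $\ip 2$-completeness of $\{\ip 1,\ii 24\}$ and $\{\ip 1,\ii 04\}$ respectively. The bulk of the care in writing this up will go into the correctness of the two new definitions: verifying the endpoint-by-endpoint case analysis for the $\ii 14$ definition, and confirming that it is unboundedness, rather than density, that makes the super-interval characterisation of $[a,b]$ go through.
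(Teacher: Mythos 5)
Your proposal is correct, and its scaffolding matches the paper's: both rely on the previously established facts (Tab.~\ref{tab:EqPEqILtUnb}) that over $\Unb$ the sets $\{\ip 0\}$, $\{\ip 1,\ii 24\}$ and $\{\ip 1,\ii 04\}$ define $<$ and $\{\ii 04\}$ defines $=_i$, and both dispatch $\{\ip 0,\ii 24\}$ and $\{\ip 0,\ii 04\}$ by the same reduction through the $\ip 1$-completeness of $\{\ip 0\}$; your definition of $\ip 2$ from $\{\ip 1,\ii 24\}$ is literally the paper's. Three sub-arguments genuinely diverge. (i) For $\ip 0$ from $\{\ip 2,\ii 04,<\}$ the paper uses a single existential witness, $x_i\ip 0 y_p\leftrightarrow\exists z_i(x_i\ii 04 z_i\wedge\forall k_p(z_i\ip 2 k_p\rightarrow y_p<k_p))$, whereas you characterize the closed hull $\{z_p : a\le z_p\le b\}$ as the points internal to \emph{every} strict super-interval and assert $y_p$ lies below all of them; your version checks out (the failure witnesses $[z_p,d]$ and $[c,z_p]$ need exactly one point on each side, so two-sided unboundedness is used in the same way as in the paper's converse direction), and both survive the absence of internal points of $x_i$ itself. (ii) For $\ip 0$ from $\{\ip 1,\ii 24\}$ you define it directly via the derived $<$ through the always-present left endpoint, where the paper routes through $\ip 2$ and the $\Lin$-fact that $\{\ip 1,\ip 2\}$ defines $\ip 0$; yours is shorter. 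Similarly your $\ip 1$-definition from $\{\ip 0\}$ is the biconditional $\forall z_p(x_i\ip 0 z_p\leftrightarrow z_p<y_p)$ (the dense-case formula, which indeed needs no density since instantiating $z_p$ at $y_p$ or at the left endpoint suffices), while the paper uses a slightly different two-conjunct formula invoking $=_p$; both are fine. (iii) The most substantive divergence is $\ip 2$ from $\{\ip 1,\ii 04\}$: the paper gives the direct definition $\neg(x_i\ip 1 y_p)\wedge\exists z_i(z_i\ip 1 y_p\wedge\forall k_i(x_i\ii 04 k_i\rightarrow z_i\ii 04 k_i))$, whereas you first define $\ii 14$ over $\Unb$ and then cite the entry $\{\ip 1,\ii 14,\ii 04\}\rightarrow_{\Lin}\ip 2$ from Tab.~\ref{tab:alllin1}. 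Your $\ii 14$-definition is sound: a shared left endpoint plus a strict super-interval of $z_i$ that is not a strict super-interval of $x_i$ forces, since its left endpoint is already below the common start, its right endpoint into the range $(d,b]$, giving $d<b$; conversely $[e,b]$ with $e<a$ exists by left-unboundedness. (A minor remark: the conjunct $\neg(z_i=_i x_i)$ is redundant, since the third conjunct already yields $d<b$, so you do not actually need the derived $=_i$ there.) Your detour is less self-contained than the paper's direct formula, since it imports a $\Lin$-table entry, but it buys the reusable fact $\{\ip 1,\ii 04\}\rightarrow_{\Unb}\ii 14$, which the paper only obtains indirectly later (Lemma~\ref{lem:unb:anticipate} gives $\ii 34$-completeness of this set by a different chain).
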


\begin{proof}
	Consider the following definition of $\ip 0$ in terms of $\ip 2$,$\ii 04$ and $<$:
	
	\medskip
	
	\begin{small}
		\[
		\begin{array}{llll}
		x_i\ip 0 y_p &\leftrightarrow &\exists z_i(x_i\ii 04 z_i \wedge \forall k_p(z_i\ip 2 k_p\rightarrow y_p < k_p)). & \{\ip 2,\ii 04, <\}
		\end{array}
		\]
	\end{small}
	
	\medskip
	
	\noindent Proceeding as always, suppose that $\mathcal F\models\varphi([a,b],c)$. Let $z_i = [d,e]$ the interval witnessing $\varphi$, so that  $d < a < b < e$. Since $z_i \ip 2 a$ by definition it follows that $c < a$, i.e.\ that $[a,b] \ip 0 c$. Conversely, suppose that $[a,b] \ip 0 c$. Since the underlying order is unbounded there is a point $d$ such that $b < d$. Then the interval $[c,d]$ witnesses the definition. As for $\ip 1$, we show that \underline{$\{\ip 0\}$} is $\ip 1$-complete by noticing that \underline{$\{\ip 0\}$} is $<,=_p$-complete and by using the following, straightforward definition:
	
	\medskip
	
	\begin{small}
		\[
		\begin{array}{llll}
		x_i\ip 1 y_p &\leftrightarrow &\forall z_p(x_i\ip 0 z_p\rightarrow z_p<y_p)\wedge\forall z_p(\neg(x_i\ip 0 z_p)\rightarrow (y_p<z_p\vee y_p=_p z_p)). & \{\ip 0\}
		\end{array}
		\]
	\end{small}
	
	\medskip
	
	\noindent Finally, consider the following two new definitions for $\ip 2$:
	
	\medskip
	
	\begin{small}
		\[ x_i \ip 2 y_p \leftrightarrow \left\{ \begin{array}{ll}
		\exists z_i(z_i\ip 1 y_p\wedge x_i\ii 24 z_i) & \{\ip 1,\ii 24\} \\
		& \\
		\neg(x_i\ip 1 y_p)\wedge\exists z_i(z_i\ip 1 y_p\wedge\forall k_i(x_i\ii 04 k_i\rightarrow z_i\ii 04 k_i)). & \{\ip 1,\ii 04\} \\
		\end{array} \right. \]
	\end{small}
	
	\medskip
	
	\noindent The first one, namely \underline{$\{\ip 1,\ii 24\}$}, is straightforward. As for \underline{$\{\ip 1,\ii 04\}$},
	suppose first that $\mathcal F\models\varphi([a,b],c)$. If $c<a$, then we have the interval $[c,a]$ that contradicts the definition; $c$ cannot be $a$, and if $c\ge b$, then every interval $[c,d]$, where $d>c\ge b$ exists by hypothesis, again, leads to a contradiction. Thus, $a<c<b$. If, on the other hand, $a<c<b$ then $z_i=[c,b]$ witnesses the existential quantifier in the definition. The remaining sets are $\{\ip 1, \ii 24\}$, $\{\ip 0,\ii 24\}$ and $\{\ip 0,\ii 04\}$. These do not require new definitions. Indeed, the $\ip 0$-completeness of $\underline{\{\ip 1, \ii 24\}}$ follows since $\{\ip 1, \ii 24\}$ defines $\ip 2$ over unbounded orders (as we proved above) and $\{\ip 1, \ip 2 \}$ defines  $\ip 0$ over all linear orders. That $\underline{\{\ip 0,\ii 24\}}$ and $\underline{\{\ip 0,\ii 04\}}$ are $\ip 2$-complete follows since $\{\ip 0 \}$ defines $\ip 1$ over unbounded order while, as we have just proved, $\{\ip 1,\ii 24\}$ and $\{\ip 1,\ii 04\}$ both define $\ip 2$ over these order.
\end{proof}

\begin{lemma}\label{lem:unb:anticipate}
	The sets $\{\ip 0, \ii 24 \}$, $\{\ip 1, \ii 24 \}$, $\{\ip 0, \ii 04 \}$, $\{\ip 1, \ii 04 \}$,  $\{\ip 4, \ii 24 \}$, $\{\ip 3, \ii 24 \}$, $\{\ip 4, \ii 04 \}$,  $\{\ip 3, \ii 04 \}$ are $\ii 34$-complete over unbounded linear orders.
\end{lemma}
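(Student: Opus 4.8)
The plan is to treat the eight sets in two groups. The first four, $\{\ip 0,\ii 24\}$, $\{\ip 1,\ii 24\}$, $\{\ip 0,\ii 04\}$ and $\{\ip 1,\ii 04\}$, I will handle directly by showing that each of them defines \emph{both} $\ii 24$ and $\ii 04$ over $\Unb$; since $\{\ii 24,\ii 04\}$ is $\ii 34$-complete over $\Unb$ (this is the last claim of Lemma~\ref{lem:Pure:Int:Inte:Und}, recorded in the $\ii 34$-column of Table~\ref{tab:Int:Int:unb}), transitivity of definability then gives the $\ii 34$-completeness of all four. The remaining four sets $\{\ip 4,\ii 24\}$, $\{\ip 3,\ii 24\}$, $\{\ip 4,\ii 04\}$ and $\{\ip 3,\ii 04\}$ are exactly the symmetric images of the first four (the reversible $\ip 0,\ip 1$ have reverses $\ip 4,\ip 3$, while $\ii 24$ and $\ii 04$ are symmetric), so they will follow from the first four by Lemma~\ref{lem:symm}.

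The crux is that over $\Unb$ each of the first four sets recovers the interval-point relations $\ip 0$, $\ip 1$ and $\ip 2$. Indeed, Lemma~\ref{lem:first:allm:comp:unb} already gives $\ip 2$-completeness of all four. For $\ip 0$: the two $\ip 0$-sets contain it as a primitive; $\{\ip 1,\ii 24\}$ is $\ip 0$-complete by Lemma~\ref{lem:first:allm:comp:unb}; and $\{\ip 1,\ii 04\}$ defines $<$ (Table~\ref{tab:EqPEqILtUnb}), whence $\ip 0$ via $x_i\ip 0 y_p \leftrightarrow \exists z_p(x_i\ip 1 z_p\wedge y_p<z_p)$. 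For $\ip 1$: it is primitive in the $\ip 1$-sets, and $\{\ip 0\}$ is $\ip 1$-complete over $\Unb$ (Lemma~\ref{lem:first:allm:comp:unb}) for the $\ip 0$-sets. Having $\ip 0$ and $\ip 2$ in hand, I invoke the review tables for the linear case: the set $\{\ip 0,\ip 2\}$ is both $\ii 24$-complete and $\ii 04$-complete over $\Lin$ (Table~\ref{tab:alllin2}), and definability over $\Lin$ transfers to the subclass $\Unb$. Composing these definitions, each of the four sets defines $\{\ii 24,\ii 04\}$ over $\Unb$, which is what the reduction above requires.

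If an explicit witness is preferred to citing the $\Lin$ table, the definitions are short once one notes that, for $x_i=[a,b]$, a point $p$ satisfies $p\ge b$ precisely when $\neg(x_i\ip 0 p)\wedge\neg(x_i\ip 1 p)\wedge\neg(x_i\ip 2 p)$; using this, $x_i\ii 04 y_i$ is expressed by ``the left endpoint of $y_i$ lies $\ip 0$-before $x_i$, and some point $p\ge b$ is $\ip 2$-inside $y_i$'', i.e.\ $\exists z_p(y_i\ip 1 z_p\wedge x_i\ip 0 z_p)\wedge\exists z_p(\neg(x_i\ip 0 z_p)\wedge\neg(x_i\ip 1 z_p)\wedge\neg(x_i\ip 2 z_p)\wedge y_i\ip 2 z_p)$, and $x_i\ii 24 y_i$ by the variant in which the left endpoint of $y_i$ is required to be $\ip 2$-inside $x_i$ rather than $\ip 0$-before it. In both cases the second conjunct forces the right endpoint of $y_i$ strictly to the right of that of $x_i$, with $p=b$ serving as the witness in the converse direction.

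The last four sets are then dispatched by symmetry: $\ii 34$ is a symmetric relation and $\{\ip 3,\ii 24\},\{\ip 4,\ii 24\},\{\ip 3,\ii 04\},\{\ip 4,\ii 04\}$ are the symmetric images of $\{\ip 1,\ii 24\},\{\ip 0,\ii 24\},\{\ip 1,\ii 04\},\{\ip 0,\ii 04\}$, so Lemma~\ref{lem:symm}, whose order-dual argument applies verbatim to any class closed under order duals and $\Unb$ is such a class, carries $\ii 34$-completeness across. The main obstacle, and the sole place where unboundedness is genuinely used, is the recovery of $\ip 2$ (and the $\ip 0$ step for $\{\ip 1,\ii 04\}$) from the mixed relations; this is precisely the content of Lemma~\ref{lem:first:allm:comp:unb}, and as remarked there it fails on bounded dense orders such as $[0,1]\subset\mathbb Q$. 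Everything downstream is a routine composition of already-tabulated definability facts with the symmetry lemma.
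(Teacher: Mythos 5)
Your proof is correct and non-circular: every fact you cite (the $\ip 2$- and $\ip 0$-completeness results of Lemma~\ref{lem:first:allm:comp:unb}, the $\ii 34$-completeness of $\{\ii 24,\ii 04\}$ over $\Unb$ from Lemma~\ref{lem:Pure:Int:Inte:Und}, the $\Lin$-facts in Table~\ref{tab:alllin2}, and the $<$-completeness of $\{\ip 1,\ii 04\}$ over $\Unb$ from Table~\ref{tab:EqPEqILtUnb}) is established before this lemma in the paper, your explicit definitions of $\ii 04$ and $\ii 24$ in terms of $\ip 0,\ip 1,\ip 2$ check out (with $z_p=b$ witnessing the second conjunct), and your handling of the last four sets via Lemma~\ref{lem:symm}, including the remark that its order-dual argument applies to $\Unb$ since that class is closed under duals, matches the paper exactly. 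The skeleton is the paper's own --- reduce the first four sets through Lemma~\ref{lem:first:allm:comp:unb} and tabulated $\Lin$-facts, then finish by symmetry --- but your endgame is a genuine detour: where the paper closes each case in a single step by citing that $\{\ip 0,\ip 2\}$ and $\{\ip 1,\ip 2\}$ are $\ii 34$-complete over \emph{all} linear orders (Table~\ref{tab:alllin2}), you instead pass through $\{\ip 0,\ip 2\}\rightarrow_{\Lin}\{\ii 24,\ii 04\}$ and then $\{\ii 24,\ii 04\}\rightarrow_{\Unb}\ii 34$, and you additionally recover $\ip 0$ and $\ip 1$ in every case. That extra work is harmless but unnecessary; in particular, your derivation of $<$ and then $\ip 0$ for $\{\ip 1,\ii 04\}$ is dead weight, since once that set defines $\ip 2$ (Lemma~\ref{lem:first:allm:comp:unb}) the fact that $\{\ip 1,\ip 2\}$ defines $\ii 34$ over $\Lin$ finishes the case without ever mentioning $\ip 0$. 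What your version buys is a self-contained fallback --- explicit first-order definitions replacing two table look-ups, and a sharp isolation of where unboundedness is actually used (only in Lemma~\ref{lem:first:allm:comp:unb}) --- at the cost of one extra lemma in the dependency chain and a longer composition.
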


\begin{proof}
	No new definitions are required, as all these definability results follow from previous results. Indeed, we begin by recalling that $\{\ip 0, \ip 2\}$ defines $\ii 34$ over all linear orders. From this the cases of $\underline{\{\ip 0, \ii 24 \}}$ and $\underline{\{\ip 0, \ii 04 \}}$ follow via the fact that both of them define $\ip 2$ over unbounded orders (Lemma \ref{lem:first:allm:comp:unb}). The cases for $\underline{\{\ip 1, \ii 24 \}}$ and $\underline{\{\ip 1, \ii 04 \}}$ follow similarly since both define $\ip 2$ over unbounded orders (Lemma \ref{lem:first:allm:comp:unb}) and $\{\ip 1, \ip 2\}$ defines $\ii 34$ over all linear orders. Lastly, the cases for  $\underline{\{\ip 4, \ii 24 \}}$, $\underline{\{\ip 3, \ii 24 \}}$, $\underline{\{\ip 4, \ii 04 \}}$,  $\underline{\{\ip 3, \ii 04 \}}$ follow by symmetry (Lemma \ref{lem:symm}) from the previous ones.	
\end{proof}

\begin{table}[t]
	\small
	\begin{center}
		\begin{tabular}{|p{0.20\textwidth}|p{0.20\textwidth}|p{0.20\textwidth}|}
		\hline
		$\ip 0$                                & $\ip 1$                                & $\ip 2$                          \\ \hline
		$\{\ip1, \ip2 \}$                      & $\boldsymbol{\{\ip0 \}}$               & $\{\ip0, \ip3 \}$                \\
		$\{\ip1, \ip3 \}$                      & $\{\ip2, \ip3 \}$                      & $\{\ip0, \ip4 \}$                \\
		$\{\ip1, \ip4 \}$                      & $\boldsymbol{\{\ip2, \ip4 \}}$         & $\boldsymbol{\{\ip0, \ii24\}}$   \\
		$\boldsymbol{\{\ip1, \ii24 \}}$        & $\boldsymbol{\{\ip2, \ii14 \}}$        & $\boldsymbol{\{\ip0,  \ii44 \}}$ \\
		$\{\ip1, \ii03 \}$                     & $\boldsymbol{\{\ip2, \ii24, \ii04 \}}$ & $\boldsymbol{\{\ip0, \ii03\}}$   \\
		$\{\ip1, \ii34 \}$                     & $\boldsymbol{\{\ip2,  \ii44 \}}$       & $\boldsymbol{\{\ip0, \ii34 \}}$  \\
		$\boldsymbol{\{\ip1, \ii04 \}}$        & $\boldsymbol{\{\ip2, \ii03 \}}$        & $\boldsymbol{\{\ip0, \ii04 \}}$  \\
		$\boldsymbol{\{\ip1, \ii44 \}}$        & $\boldsymbol{\{\ip2, \ii34 \}}$        & $\{\ip1, \ip3 \}$                \\
		$\{\ip1, < \}$                         & $\boldsymbol{\{\ip2, \ii04, < \}}$     & $\{\ip1, \ip4 \}$                \\
		$\{\ip2, \ip3 \}$                      & $\{\ip3, \ii14 \}$                     & $\boldsymbol{\{\ip1, \ii24 \}}$  \\
		$\{\ip2, \ip4 \}$                      & $\boldsymbol{\{\ip3, \ii24 \}}$        & $\boldsymbol{\{\ip1,  \ii44 \}}$ \\
		$\boldsymbol{\{\ip2, \ii14 \}}$        & $\boldsymbol{\{\ip3,  \ii04 \}}$       & $\{\ip1, \ii03 \}$               \\
		$\boldsymbol{\{\ip2, \ii24, \ii04 \}}$ & $\boldsymbol{\{\ip3,  \ii44 \}}$       & $\{\ip1, \ii34 \}$               \\
		$\boldsymbol{\{\ip2,  \ii44 \}}$       & $\{\ip3, \ii34 \}$                     & $\boldsymbol{\{\ip1, \ii04 \}}$  \\
		$\boldsymbol{\{\ip2, \ii03 \}}$        & $\boldsymbol{\{\ip4, \ii14 \}}$        & $\{\ip3, \ii14 \}$               \\
		$\{\ip2, \ii34 \}$                     & $\boldsymbol{\{\ip4, \ii24 \}}$        & $\boldsymbol{\{\ip3, \ii24 \}}$  \\
		$\boldsymbol{\{\ip2, \ii04, < \}}$     & $\boldsymbol{\{\ip4,  \ii04 \}}$       & $\{\ip3, \ii04 \}$               \\
		$\{\ip3, \ii14 \}$                     & $\boldsymbol{\{\ip4,  \ii44 \}}$       & $\boldsymbol{\{\ip3,  \ii44 \}}$ \\
		$\boldsymbol{\{\ip3, \ii24 \}}$        & $\boldsymbol{\{\ip4, \ii34 \}}$        & $\{\ip3, \ii34 \}$               \\
		$\boldsymbol{\{\ip3, \ii04 \}}$        &                                        & $\boldsymbol{\{\ip4, \ii14 \}}$  \\
		$\boldsymbol{\{\ip3, \ii44 \}}$        &                                        & $\boldsymbol{\{\ip4, \ii24 \}}$  \\
		$\{\ip3, \ii34 \}$                     &                                        & $\boldsymbol{\{\ip4, \ii04 \}}$  \\
		$\boldsymbol{\{\ip4, \ii14 \}}$        &                                        & $\boldsymbol{\{\ip4,  \ii44 \}}$ \\
		$\boldsymbol{\{\ip4, \ii24 \}}$        &                                        & $\boldsymbol{\{\ip4, \ii34 \}}$  \\
		$\boldsymbol{\{\ip4,  \ii04 \}}$       &                                        &                                  \\
		$\boldsymbol{\{\ip4,  \ii44 \}}$       &                                        &                                  \\
		$\boldsymbol{\{\ip4, \ii34 \}}$        &                                        &                                  \\
		&                                        &                                  \\ \hline
	\end{tabular}
\caption{The spectrum of the $\mathsf{mcs}(r)$, for each $r\in\allm^+$. - Class: $\Unb$.}\label{tab:mixedunb}
\end{center}
\end{table}

\begin{lemma}\label{lem:allm:unb}
Tab.~\ref{tab:mixedunb} is correct.
\end{lemma}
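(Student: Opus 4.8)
The plan is to verify Table~\ref{tab:mixedunb} by establishing, for each explicit mixed relation $\ip 0$, $\ip 1$, $\ip 2$, that every listed set is $r$-complete over $\Unb$; the minimality of these sets (and, dually, the maximality of the incomplete ones) then follows automatically from the incompleteness results, as laid out in the general strategy. Since $\Unb \subseteq \Lin$, any set already shown $r$-complete over $\Lin$ in Table~\ref{tab:alllin1} is $r$-complete over $\Unb$, so I only need to dispatch the entries that are genuinely new in the unbounded setting. The essential observation is that all the real semantic work --- the explicit defining formulas --- has been factored out into Lemmas~\ref{lem:first:allm:comp:unb} and~\ref{lem:unb:anticipate}; consequently no new formula need be written here, and the proof amounts to assembling deduction chains from those two lemmas together with the $\Unb$-completeness facts for $<$ and $=_i$ (Table~\ref{tab:EqPEqILtUnb}), the $\Unb$-completeness facts among interval--interval relations (Table~\ref{tab:Int:Int:unb}), the $\Lin$-completeness facts for mixed relations (Table~\ref{tab:alllin1}), and the symmetry principle (Lemma~\ref{lem:symm}).

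For the $\ip 0$ column the two anchors are the $\ip 0$-completeness of $\{\ip 1, \ii 24\}$ and of $\{\ip 2, \ii 04, <\}$ furnished by Lemma~\ref{lem:first:allm:comp:unb}. The $\ip 2$-based entries are reached by first extracting $<$ from the set via Table~\ref{tab:EqPEqILtUnb} and then invoking $\Lin$-completeness results such as $\{\ip 2, \ii 14, <\} \rightarrow_{\Lin} \ip 0$ and $\{\ip 2, \ii 03, <\} \rightarrow_{\Lin} \ip 0$ (Table~\ref{tab:alllin1}); for $\{\ip 2, \ii 24, \ii 04\}$ I route through $\{\ip 2, \ii 04, <\}$ after producing $<$ from $\{\ip 2, \ii 24\}$. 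The $\ip 3$- and $\ip 4$-based entries all proceed by first obtaining $\ii 14$ from the set --- via $\ii 34$-completeness (Lemma~\ref{lem:unb:anticipate}), via $\{\ii 44\} \rightarrow_{\Unb} \ii 14$ (Table~\ref{tab:Int:Int:unb}), or via $\{\ii 34\} \rightarrow \ii 14$ --- and then citing $\{\ip 3, \ii 14\} \rightarrow_{\Lin} \ip 0$ or, after $\{\ip 4\} \rightarrow_{\Unb} <$, the $\Lin$-fact $\{\ip 4, \ii 14, <\} \rightarrow_{\Lin} \ip 0$. The two entries $\{\ip 1, \ii 04\}$ and $\{\ip 1, \ii 44\}$ follow because $\{\ii 04\}$ and $\{\ii 44\}$ each define $=_i$ over $\Unb$ while $\{\ip 1, \ii 04, =_i\}$ is $\ip 0$-complete over $\Lin$. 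The $\ip 1$ column is then nearly free: $\{\ip 0\}$ is $\ip 1$-complete over $\Unb$ (Lemma~\ref{lem:first:allm:comp:unb}), so every set in the $\ip 0$ column also defines $\ip 1$, and the few remaining $\ip 1$-entries reduce to $\ip 0$-completeness facts already in hand.

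For the $\ip 2$ column I would lean heavily on symmetry. Because $\ip 2$ is self-symmetric, Lemma~\ref{lem:symm} carries any $\ip 2$-completeness statement over to its symmetric set verbatim, so each $\ip 3$- or $\ip 4$-based entry follows at once from a corresponding $\ip 0$- or $\ip 1$-based entry (for instance $\{\ip 4, \ii 24\} \sim \{\ip 0, \ii 24\}$ and $\{\ip 4, \ii 14\} \sim \{\ip 0, \ii 03\}$). The base $\ip 0$- and $\ip 1$-based cases come from Lemma~\ref{lem:first:allm:comp:unb} (which gives $\{\ip 0, \ii 24\}$, $\{\ip 0, \ii 04\}$, $\{\ip 1, \ii 24\}$, $\{\ip 1, \ii 04\}$), supplemented by the reductions $\{\ii 44\} \rightarrow_{\Unb} \ii 04$ and $\{\ii 34\} \rightarrow \ii 04$, and by $\{\ip 0\} \rightarrow_{\Unb} <$ feeding $\{\ip 0, \ii 03, <\} \rightarrow_{\Lin} \ip 2$.

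The main obstacle is organizational rather than conceptual: one must check that every cell of the table is covered by a valid chain, and the error-prone part is the handling of symmetry. The crucial subtlety is that $\ip 0$ is \emph{reversible}, with reverse $\ip 4$, and not self-symmetric, so symmetrizing a $\ip 0$-completeness result yields a $\ip 4$-completeness statement rather than a new $\ip 0$-one; this is precisely why the $\ip 0$ column cannot be shortcut by symmetry and must instead be driven through $<$, $\ii 14$, and the $\Lin$ table, whereas the self-symmetry of $\ip 2$ makes symmetry available for free in the last column. I would therefore organise the argument strictly column by column --- $\ip 0$, then $\ip 1$, then $\ip 2$ --- so that each column may freely cite the completeness facts secured in the ones before it.
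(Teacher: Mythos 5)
Your proposal is correct and follows essentially the same route as the paper's proof: no new defining formulas, with Lemmas~\ref{lem:first:allm:comp:unb} and~\ref{lem:unb:anticipate} as anchors and the remaining entries dispatched column by column through deduction chains combining the $\Unb$-facts for $<$ and $=_i$ (Tables~\ref{tab:EqPEqILtUnb} and~\ref{tab:Int:Int:unb}), the $\Lin$-facts of Table~\ref{tab:alllin1}, and Lemma~\ref{lem:symm}. The only divergence is cosmetic --- in the $\ip 2$ column you invoke the self-symmetry of $\ip 2$ systematically for all $\ip 3$- and $\ip 4$-based sets, where the paper mostly reduces these through their $\ip 0$-completeness over $\Unb$ and uses symmetry only for $\{\ip 4, \ii 14\}$ --- and both chains check out, as does your correct caveat that the reversibility of $\ip 0$ and $\ip 1$ blocks any symmetry shortcut in the first two columns.
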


\begin{proof}
No new explicit definitions are needed. We focus first on $\ip 0$-completeness. The cases for $\underline{\{\ip 2,\ii 04, <\}}$ and $\underline{\{\ip 1, \ii 24\}}$ were already dealt with in Lemma \ref{lem:first:allm:comp:unb}.
In the cases $\underline{\{ \ip 1, \ii 04 \}}$ and  $\underline{\{ \ip 1, \ii 44 \}}$ we note that both  $\{\ii 04 \}$ and $\{\ii 44 \}$ define $=_i$ in the unbounded case, and that both $\{ \ip 1, \ii 04, =_i \}$ and $\{ \ip 1, \ii 44, =_i \}$ and $\ip 0$-complete. For $\underline{\{ \ip 2, \ii 14 \}}$, note that $\{ \ip 2, \ii 14 \}$ defines $<$ in the unbounded case and that $\{ \ip 2, \ii 14, < \}$ defines  $\ip 0$. The case for $\underline{\{\ip 2, \ii 24, \ii 04 \}}$ follows from the facts that, in the unbounded case, $\{\ip 2, \ii 24 \}$ defines $<$ and $\{\ip 2, \ii 04, < \}$ defines $\ip 0$, as already proved. The case for $\underline{\{\ip 2,  \ii 44 \}}$ now follows from the latter via the fact that $\{\ii 44 \}$ is $\ii 24$-complete and $\{\ii 44 \}$ is $\ii 04$-compete in the unbounded case. For $\underline{\{\ip 2,  \ii 03 \}}$, observe that  $\{\ip 2, \ii 03 \}$ defines $<$ in the unbounded case and that $\{\ip 2, \ii 03, < \}$ is $\ip 0$-complete. To see that $\underline{\{\ip 3,  \ii 24 \}}$ is $\ip 0$-complete, one may use the facts that $\{\ip 1, \ii 24\}$ defines $\ii 34$ in the unbounded case(Lemma \ref{lem:unb:anticipate}), hence, by symmetry (Lemma~\ref{lem:symm}), $\{\ip 3, \ii 24\}$ defines $ \ii 34$, and that $\{\ip 3, \ii 34\}$ defines  $\ip 0$ in the general case. For $\underline{\{\ip 3,  \ii 04 \}}$, note that $\{\ip 3, \ii 04\}$ defines $\ii 34$ (Lemma \ref{lem:unb:anticipate}) in the unbounded case, and then again that $\{\ip 3, \ii 34\}$ defines  $\ip 0$ in the general case. The cases for $\underline{\{\ip 3,  \ii 24 \}}$,  $\underline{\{\ip 3,  \ii 04 \}}$ and  $\underline{\{\ip 3,  \ii 44 \}}$ all follow from the fact that $\{\ip 3, \ii 34\}$ defines $\ip 0$, via that fact that each of these three sets defines $\ii 34$ over $\Unb$: in the first instance $\{\ip 1, \ii 24\}$ defines $\ii 34$ (Lemma \ref{lem:unb:anticipate}), hence, by symmetry (Lemma~\ref{lem:symm}) so does $\{\ip 3, \ii 24\}$; in the second instance we have that $\{\ip 3, \ii 04\}$ defines  $\ii 34$ (Lemma \ref{lem:unb:anticipate}); in the third one we have that $\ii 44$ defines, directly, $\ii 34$ in the unbounded case. The set $\underline{\{\ip 4,  \ii 14 \}}$ is $\ip 0$-complete since $\{\ip 4 \}$ defines $<$ in the unbounded case and $\{\ip 4,  \ii 14, < \}$ defines $\ip 0$.  The $\ip 0$-completeness of $\underline{\{\ip 4,  \ii 34 \}}$ follows from that of $\{\ip 4,  \ii 14 \}$ since $\ii 34$ defines all interval-interval relations over all linear orders. From this, in turn, follow the cases for  $\underline{\{\ip 4,  \ii 24 \}}$ and  $\underline{\{\ip 4,  \ii 44 \}}$ since both these sets define $\ii 34$ over unbounded orders (Lemma \ref{lem:first:allm:comp:unb}). Lastly the case for $\underline{\{\ip 4,  \ii 44 \}}$ also follows from the $\ip 0$-completeness of $\{\ip 4,  \ii 14 \}$ since $\ii 44$ defines $\ii 14$ over unbounded orders. We now turn our attention to $\ip 1$-completeness. The case of \underline{$\{\ip 0\}$} was already treated in Lemma \ref{lem:first:allm:comp:unb}. All remaining cases follow immediately from this, as the sets involved have already been shown to define $\ip 0$ over unbounded orders. Now focusing on $\ip 2$-completeness, the cases for $\underline{\{\ip 1,\ii 24\}}$, $\underline{\{\ip 1,\ii 04\}}$, $\underline{\{\ip 0,\ii 24\}}$ and $\underline{\{\ip 0,\ii 04\}}$ were already considered in Lemma \ref{lem:first:allm:comp:unb}. The cases for $\underline{\{\ip 0,\ii 44\}}$, $\underline{\{\ip 0,\ii 03\}}$ and $\underline{\{\ip 0,\ii 34\}}$ all follow from the $\ip 2$-completeness of $\{\ip 0,\ii 24\}$ via, respectively, the facts that $\{\ii 44 \}$ defines $\ii 24$ over unbounded orders,  $\{\ip 0,\ii 03\}$ defines $\ii 24$ over linear orders, and $\{\ii 34 \}$ defines $\ii 24$ over unbounded orders. The $\ip 2$-completeness of $\underline{\{\ip 1,\ii 44\}}$ follows from that of  $\{\ip 1,\ii 24\}$, since $\{\ii 44\}$ defines $\ii 24$ over unbounded orders. The cases for $\underline{\{\ip 3,\ii 24\}}$ and $\underline{\{\ip 3,\ii 44\}}$ follow from the fact that both define $\ip 0$ over unbounded orders and the $\ip 2$ completeness of, respectively, $\{\ip 0,\ii 24\}$ and $\{\ip 0,\ii 44\}$. The $\ip 2$ completeness of $\underline{\{\ip 4,\ii 14\}}$ follows by symmetry (Lemma \ref{lem:symm}) from that of $\{\ip 0,\ii 03\}$. To see that $\underline{\{\ip 4,\ii 24\}}$, $\underline{\{\ip 4,\ii 04\}}$, $\underline{\{\ip 4,\ii 44\}}$ and $\underline{\{\ip 4,\ii 34\}}$ we note that each of these sets defines $\ip 0$ over unbounded orders and than appeal to the $\ip 2$-completeness of the sets $\{\ip 0,\ii 24\}$, $\{\ip 0,\ii 04\}$, $\{\ip 0,\ii 44\}$ and $\{\ip 0,\ii 34\}$, respectively.
\end{proof}

\subsection{Definability for Relations in $\alli^+\setminus\{=_i\}$ in $\Unb$}\label{Sec:allI:unb}

\begin{table}[t]
\begin{center}
\small
\begin{tabular}{|p{0.17\textwidth}|p{0.17\textwidth}|p{0.17\textwidth}|p{0.17\textwidth}|p{0.17\textwidth}|}
	\hline
	$\ii 14$                               & $\ii 34$                           & $\ii 24$                        & $\ii 04$                        & $\ii44$                            \\ \hline
	$\{\ip0, \ip2 \}$                      & $\{\ip0, \ip2 \}$                  & $\{\ip0, \ip2 \}$               & $\{\ip0, \ip2 \}$               & $\{\ip0, \ip2 \}$                  \\
	$\{\ip0, \ip3 \}$                      & $\{\ip0, \ip3 \}$                  & $\{\ip0, \ip3 \}$               & $\{\ip0, \ip3 \}$               & $\{\ip0, \ip3 \}$                  \\
	$\{\ip0, \ip4 \}$                      & $\{\ip0, \ip4 \}$                  & $\{\ip0, \ip4 \}$               & $\{\ip0, \ip4 \}$               & $\{\ip0, \ip4 \}$                  \\
	$\boldsymbol{\{\ip0, \ii24 \}}$        & $\boldsymbol{\{\ip0, \ii24 \}}$    & $\{\ip0, \ii03 \}$              & $\boldsymbol{\{\ip0, \ii24 \}}$ & $\boldsymbol{\{\ip0, \ii24 \}}$    \\
	$\{\ip0, \ii03 \}$                     & $\{\ip0, \ii03 \}$                 & $\boldsymbol{\{\ip0, \ii04 \}}$ & $\{\ip0, \ii03 \}$              & $\{\ip0, \ii03 \}$                 \\
	$\boldsymbol{\{\ip0, \ii04 \}}$        & $\boldsymbol{\{\ip0, \ii04 \}}$    & $\{\ip1, \ip2 \}$               & $\{\ip1, \ip2 \}$               & $\boldsymbol{\{\ip0, \ii04 \}}$    \\
	$\{\ip1, \ip2 \}$                      & $\{\ip1, \ip2 \}$                  & $\{\ip1, \ip3 \}$               & $\{\ip1, \ip3 \}$               & $\{\ip1, \ip2 \}$                  \\
	$\{\ip1, \ip3 \}$                      & $\{\ip1, \ip3 \}$                  & $\{\ip1, \ip4 \}$               & $\{\ip1, \ip4 \}$               & $\{\ip1, \ip3 \}$                  \\
	$\{\ip1, \ip4 \}$                      & $\{\ip1, \ip4 \}$                  & $\{\ip1, \ii03 \}$              & $\boldsymbol{\{\ip1, \ii24 \}}$ & $\{\ip1, \ip4 \}$                  \\
	$\boldsymbol{\{\ip1, \ii24 \}}$        & $\boldsymbol{\{\ip1, \ii24 \}}$    & $\boldsymbol{\{\ip1, \ii04 \}}$ & $\{\ip1, \ii03 \}$              & $\boldsymbol{\{\ip1, \ii24 \}}$    \\
	$\{\ip1, \ii03 \}$                     & $\{\ip1, \ii03 \}$                 & $\{\ip2, \ip3 \}$               & $\{\ip2, \ip3 \}$               & $\{\ip1, \ii03 \}$                 \\
	$\boldsymbol{\{\ip1, \ii04 \}}$        & $\boldsymbol{\{\ip1, \ii04 \}}$    & $\{\ip2, \ip4 \}$               & $\{\ip2, \ip4 \}$               & $\boldsymbol{\{\ip1, \ii04 \}}$    \\
	$\{\ip2, \ip3 \}$                      & $\{\ip2, \ip3 \}$                  & $\{\ip2, \ii14 \}$              & $\{\ip2, \ii14 \}$              & $\{\ip2, \ip3 \}$                  \\
	$\{\ip2, \ip4 \}$                      & $\{\ip2, \ip4 \}$                  & $\{\ip2, \ii03 \}$              & $\{\ip2, \ii03 \}$              & $\{\ip2, \ip4 \}$                  \\
	$\boldsymbol{\{\ii24, \ii04 \}}$ & $\{\ip2, \ii14 \}$    & $\{\ip2, < \}$                  & $\{\ip3, \ii14 \}$              & $\{\ip2, \ii14 \}$                 \\
	$\{\ip2, \ii03 \}$                     & $\{\ip2, \ii03 \}$                 & $\{\ip3, \ii14 \}$              & $\{\ip4, \ii14 \}$              & $\{\ip2, \ii03 \}$                 \\
	$\boldsymbol{\{\ip2, \ii04, < \}}$     & $\boldsymbol{\{\ip2, \ii04, < \}}$ & $\{\ip4, \ii14 \}$              & $\{\ii14, \ii24 \}$             & $\boldsymbol{\{\ip2, \ii04, < \}}$ \\
	$\{\ii24, \ii03 \}$                    & $\{\ip3, \ii14 \}$                 & $\{\ii14, \ii03 \}$             & $\{\ii14, \ii03 \}$             & $\{\ip3, \ii14 \}$                 \\
	$\{\ii03, \ii04 \}$                    & $\boldsymbol{\{\ip4, \ii14 \}}$    & $\{\ii14, \ii04 \}$             & $\{\ii24, \ii03 \}$             & $\{\ip4, \ii14 \}$                 \\
	$\{\ii34 \}$                           & $\{\ii14, \ii24 \}$                & $\{\ii03, \ii04 \}$             & $\{\ii34 \}$                    & $\{\ii14, \ii24 \}$                \\
	$\boldsymbol{\{\ii44 \}}$              & $\{\ii14, \ii03 \}$                & $\{\ii34 \}$                    & $\boldsymbol{\{\ii44 \}}$       & $\{\ii14, \ii03 \}$                \\
	$\boldsymbol{\{\ip 3, \ii24 \}}$       & $\{\ii14, \ii04 \}$                & $\boldsymbol{\{\ii44 \}}$       &$\boldsymbol{\{\ip 3, \ii24 \}}$                                 & $\{\ii14, \ii04 \}$                \\
	$\boldsymbol{\{\ip 4, \ii24 \}}$       & $\{\ii24, \ii03 \}$                & $\boldsymbol{\{\ip 3, \ii04 \}}$                                & $\boldsymbol{\{\ip 4, \ii24 \}}$                                 & $\{\ii24, \ii03 \}$                \\
	$\boldsymbol{\{\ip 3, \ii04 \}}$    & $\boldsymbol{\{\ii24, \ii04 \}}$   & $\boldsymbol{\{\ip 4, \ii04 \}}$                                  &                                 & $\boldsymbol{\{\ii24, \ii04 \}}$   \\
	$\boldsymbol{\{\ip 4, \ii04 \}}$  & $\{\ii03, \ii04 \}$                &                                 &                                 & $\{\ii03, \ii04 \}$                \\
	                                       & $\boldsymbol{\{\ii44 \}}$          &                                 &                                 & $\{\ii34 \}$                       \\
	                                       & $\boldsymbol{\{\ip 1, \ii24 \}}$ & & &$\boldsymbol{\{\ip 3, \ii24 \}}$\\
	                                       & $\boldsymbol{\{\ip 3, \ii24 \}}$ & & &$\boldsymbol{\{\ip 4, \ii24 \}}$\\
	                                       & $\boldsymbol{\{\ip 4, \ii24 \}}$ & & &$\boldsymbol{\{\ip 3, \ii04 \}}$\\
	                                       & $\boldsymbol{\{\ip 3, \ii04 \}}$ & & &$\boldsymbol{\{\ip 4, \ii04 \}}$\\
	                                       & $\boldsymbol{\{\ip 4, \ii04 \}}$ & & &\\ \hline
\end{tabular}
\end{center}\caption{The spectrum of the $\mathsf{mcs}(r)$, for each $r\in\alli^+\setminus\{=_i\}$. - Class: $\Unb$.}\label{tab:iunb}
\end{table}

\begin{lemma}\label{lem:alli:unb}
Tab.~\ref{tab:iunb} is correct.
\end{lemma}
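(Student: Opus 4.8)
The plan is to observe that, as with the two preceding lemmas, Lemma~\ref{lem:alli:unb} requires no genuinely new explicit definitions: every boldface (new or stricter) entry of Tab.~\ref{tab:iunb} can be reached by a deduction chain from results already in hand, while every non-boldface entry is inherited directly from the linear case, since $\Unb\subseteq\Lin$ and completeness transfers downward to subclasses. The engine driving all the boldface chains is the fact, inherited from Part~I and visible in Tab.~\ref{tab:alllin2}, that $\{\ii 34\}$ is complete with respect to every relation in $\alli^{+}\setminus\{=_i\}$ over $\Lin$, and hence over $\Unb$. Consequently, to verify a boldface cell it suffices to show that the listed set defines $\ii 34$ over $\Unb$; the target relation of that column then comes for free.

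First I would dispatch the purely interval-interval entries. Here Lemma~\ref{lem:Pure:Int:Inte:Und} already does the work: $\{\ii 44\}$ and $\{\ii 24,\ii 04\}$ are $\ii 34$-complete over $\Unb$, and hence are complete with respect to $\ii 14$, $\ii 24$, $\ii 04$ and $\ii 44$ as well, which accounts for their appearance in the respective columns.

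Next I would treat the mixed entries. The bulk of these are covered directly by Lemma~\ref{lem:unb:anticipate}, which gives $\ii 34$-completeness over $\Unb$ of the eight sets $\{\ip 0,\ii 24\}$, $\{\ip 1,\ii 24\}$, $\{\ip 0,\ii 04\}$, $\{\ip 1,\ii 04\}$, $\{\ip 4,\ii 24\}$, $\{\ip 3,\ii 24\}$, $\{\ip 4,\ii 04\}$ and $\{\ip 3,\ii 04\}$; each of these therefore defines every relation in the column in which it occurs. The two remaining mixed sets are handled through their point-relation definability: $\{\ip 2,\ii 04,<\}$ defines $\ip 0$ over $\Unb$ (Lemma~\ref{lem:first:allm:comp:unb}) and already contains $\ip 2$, while $\{\ip 4,\ii 14\}$ is both $\ip 0$-complete and $\ip 2$-complete over $\Unb$ (Lemma~\ref{lem:allm:unb}). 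In either case the set defines $\{\ip 0,\ip 2\}$, and $\{\ip 0,\ip 2\}$ defines $\ii 14$ and $\ii 34$ over $\Lin$ by Tab.~\ref{tab:alllin2}, which suffices. The entries belonging to the reverse relation $\ii 03$ (not tabulated separately) and any remaining symmetric counterparts are then obtained via Lemma~\ref{lem:symm}.

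I expect the main obstacle to be organizational rather than mathematical: one must confirm that every boldface cell of Tab.~\ref{tab:iunb} is reached by at least one valid chain, and that each chain invokes only results proved earlier in the unbounded or linear setting, so that no circularity slips in (in particular, the $\ii 34$-completeness facts of Lemmas~\ref{lem:Pure:Int:Inte:Und} and~\ref{lem:unb:anticipate}, and the point-relation definability of Lemmas~\ref{lem:first:allm:comp:unb} and~\ref{lem:allm:unb}, must all precede their uses here, as they do). As throughout the paper, minimality of the listed sets, and the corresponding maximality of the incomplete ones, is not argued at this point but will follow once the matching undefinability results are established, by the systematic computation described in the Strategy subsection.
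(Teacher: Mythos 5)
Your proposal is correct and follows essentially the same route as the paper's proof: reduce every boldface entry to $\ii 34$-completeness over $\Unb$ (since $\{\ii 34\}$ defines all interval-interval relations over $\Lin$), cite Lemma~\ref{lem:Pure:Int:Inte:Und} for $\{\ii 44\}$ and $\{\ii 24,\ii 04\}$ and Lemma~\ref{lem:unb:anticipate} for the eight mixed sets, and finish $\{\ip 2,\ii 04,<\}$ and $\{\ip 4,\ii 14\}$ by deduction chains through point-relation definability. The only (immaterial) difference is that for these last two sets you route through $\ip 0$-completeness and the $\ii 34$-completeness of $\{\ip 0,\ip 2\}$ over $\Lin$, where the paper instead uses $\ip 1$- (and $\ip 2$-) completeness together with $\{\ip 1,\ip 2\}$; both chains are valid and non-circular.
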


\begin{proof}
We begin by noticing that every new $\ii 14$-, $\ii 24$-, $\ii 04$- or $\ii 44$-complete set also appears as a $\ii 34$-complete set. Since $\ii 34$ defines every other interval-interval relation over linear orders, it is therefore sufficient to justify the $\ii 34$-completeness of all new $\ii 34$-complete sets. This has already been done for  $\underline{\{\ii 24, \ii 04 \}}$ and  $\underline{\{\ii 44 \}}$ in Lemma \ref{lem:Pure:Int:Inte:Und}, and for $\underline{\{\ip 0, \ii 24 \}}$, $\underline{\{\ip 1, \ii 24 \}}$, $\underline{\{\ip 0, \ii 04 \}}$, $\underline{\{\ip 1, \ii 04 \}}$,  $\underline{\{\ip 4, \ii 24 \}}$, $\underline{\{\ip 3, \ii 24 \}}$, $\underline{\{\ip 4, \ii 04 \}}$,  $\underline{\{\ip 3, \ii 04 \}}$ in Lemma \ref{lem:unb:anticipate}. The $\ii 34$-completeness of $\underline{\{\ip 1, \ii 24 \}}$ follows from the the $\ip 2$-completeness of $\{\ip 1, \ii 24 \}$ over unbounded orders and the  $\ii 34$-completeness of $\{\ip 1, \ip 2 \}$ over linear orders. The case for $\underline{\{\ip 2, \ii 04, < \}}$ follows since this set defines $\ip 1$ over unbounded orders and $\{\ip 1, \ip 2\}$ is $\ii 34$-complete over linear orders. Lastly, the $\ii 34$-completes of  $\underline{\{\ip 4, \ii 14\}}$ is a consequence of the $\ip 1$- and $\ip 2$-completeness of this set over unbounded order and the $\ii 34$-completes of $\{\ip 1, \ip 2 \}$ over linear orders.
\end{proof}

\section{Incompleteness Results in The Class $\Unb$}\label{sec:unb-incomp}

We can now turn our attention to the maximal incomplete sets for relations in $\allr^+$. Notice that for some $r\in\allr^+$, some $r$-incomplete
set in the class $\Den$ is also maximally $r$-incomplete in the class $\Unb$, and it has been proven so by means of a dense unbounded
counterexample; in these cases, we can borrow the same argument unchanged.

\begin{table}[t]
\small
\begin{center}
\begin{tabular}{p{0.34\textwidth}|p{0.022\textwidth}|p{0.022\textwidth}|p{0.022\textwidth}|p{0.022\textwidth}|p{0.022\textwidth}|p{0.022\textwidth}|p{0.022\textwidth}|p{0.022\textwidth}|p{0.022\textwidth}|p{0.022\textwidth}|p{0.022\textwidth}|p{0.022\textwidth}|p{0.022\textwidth}|p{0.022\textwidth}|p{0.022\textwidth}|}
{\em Proved}                            & $=_p$             & $=_i$             & $<$              & $\ip0$           & $\ip1$           & $\ip2$          & $\ip3$           & $\ip4$           & $\ii34$          & $\ii14$          & $\ii03$          & $\ii24$          & $\ii04$          & $\ii44$           \\
 \hline
$\alli^{+}$                             & $\bullet$         &                   &                  &                  &                  &                 &                  &                  &                  &                  &                  &                  &                  &                   \\
 \hline
$\{=_p, <, \ip 0,\ip 1\}$               &                   & $\bullet$         &                  &                  &                  &                 &                  &                  &                  &                  &                  &                  &                  &                   \\
\hline
$\{=_p, <, \ip2, \ii 24\}$              &                   & $\bullet$         &                  &                  &                  &                 &                  &                  &                  &                  &                  &                  &                  &                   \\
\hline
$\{=_p, =_i, \ip1, \ii 14\}$            &                   &                   & $\bullet$        & $\bullet$        &                  &                 &                  &                  &                  &                  &                  &                  &                  &                   \\
\hline
$\{=_p\}\cup\alli^+$                    &                   &                   & $\bullet$        &         &         & &         &         &                  &                  &                  &                  &                  &                   \\
\hline
$\{=_p,<\}\cup\alli^+$                    &                   &                   &        &  $\bullet$        & $\bullet$         &$\bullet$  & $\bullet$         & $\bullet$         &                  &                  &                  &                  &                  &                   \\
\hline
$\{=_p, =_i, \ip 2, \ii 04\}$           &                   &                   & $\bullet$        & $\bullet$        & $\bullet$        &                 & $\bullet$        & $\bullet$        & $\bullet$        & $\bullet$        & $\bullet$        & $\bullet$        &                  &  $\bullet$        \\
\hline
$\{=_p, =_i,<,\ip 3,\ip4, \ii03\}$      &                   &                   &                  & $\bullet$        & $\bullet$        & $\bullet$       &                  &                  & $\bullet$        & $\bullet$        &                  & $\bullet$        & $\bullet$        &  $\bullet$        \\
\hline
$\{=_p, =_i,<,\ip 2,\ii24\}$            &                   &                   &                  & $\bullet$        & $\bullet$        &                 & $\bullet$        & $\bullet$        & $\bullet$        & $\bullet$        & $\bullet$        &                  & $\bullet$        &  $\bullet$        \\
\hline
$\{=_p, =_i,<,\ip 0,\ip 1\}$            &                   &                   &                  &                  &                  &                 &                  &                  &                  & $\bullet$        &                  &                  &                  &               \\
\hline
$\{=_p, =_i,<,\ii 04\}$                 &                   &                   &                  &                  &                  &                 &                  &                  & $\bullet$        & $\bullet$        & $\bullet$        & $\bullet$        &                  &  $\bullet$    \\
\hline

\\
{\em Symmetric}                         & $=_p$             & $=_i$             & $<$              & $\ip0$           & $\ip1$           & $\ip2$          & $\ip3$           & $\ip4$           & $\ii34$          & $\ii14$          & $\ii03$          & $\ii24$          & $\ii04$          & $\ii44$           \\
\hline
$\{=_p, <, \ip 3,\ip 4\}$                &                   &$\bullet$           &                  &                  &                  &                     &                  &                  &                  &                  &                  &                  &                  &                   \\
\hline
$\{=_p, =_i, \ip 3, \ii 03\}$            &                   &                    &$\bullet$         &                  &                  &                     &                  & $\bullet$        &                  &                  &                  &                  &                  &                   \\
\hline
$\{=_p, =_i, <,\ip 0, \ip 1, \ii 14\}$   &                   &                    &                  &                  &                  & $\bullet$           & $\bullet$        & $\bullet$        & $\bullet$        &                  & $\bullet$        & $\bullet$        & $\bullet$        & $\bullet$         \\
\hline
$\{=_p, =_i,<,\ip 3,\ip 4\}$             &                   &                    &                  &                  &                  &                     &                  &                  &                  &                  & $\bullet$        &                  &                  &               \\
\hline

\end{tabular}                                                                                                                                                                                                                                         \caption{$\mathsf{MIS}(r)$, for each $r\in\allr^+$; upper part: sets for which we give an explicit construction; lower part: symmetric ones. - Class: $\Unb$.}\label{tab:MISRunb}
\end{center}
\end{table}

\begin{lemma}
Tab.~\ref{tab:MISRunb} is correct.
\end{lemma}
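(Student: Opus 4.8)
The plan is to establish, for every pair $(S,r)$ marked in Tab.~\ref{tab:MISRunb}, that $FO+S\not\rightarrow_{\Unb} r$, by exhibiting two concrete unbounded point-interval structures together with a surjective $S$-truth-preserving relation $\zeta=\zeta_p\cup\zeta_i$ that breaks $r$; by Theorem~\ref{theo:truth} this is exactly what is needed. As the section preamble observes, many rows can be imported verbatim from the dense analysis: every $r$-incompleteness in Lemma~\ref{lem:tab:MISRden} that was witnessed by a $\mathbb{Q}$-based $\zeta$ remains valid here, since $\mathbb{Q}\in\Unb$. This disposes of the rows $\{=_p,<,\ip 0,\ip 1\}$ (for $=_i$), $\{=_p,=_i,\ip 1,\ii 14\}$ (for $<,\ip 0$), $\{=_p,=_i,\ip 2,\ii 04\}$, $\{=_p,=_i,<,\ip 3,\ip 4,\ii 03\}$, $\{=_p,=_i,<,\ip 0,\ip 1\}$ (for $\ii 14$) and $\{=_p,=_i,<,\ii 04\}$, reusing the reflections and endpoint-stretchings of Lemma~\ref{lem:tab:MISRden} unchanged.

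Next I would dispatch the low-content new rows. For $S=\alli^+$ and $r={=_p}$, note that $\alli^+$ contains no relation mentioning a point, so clauses (ii)(a)–(b) of Definition~\ref{Def:S:Truth:Pres:Rel} are vacuous; taking any $\mathbb{D}=\mathbb{D}'\in\Unb$, $\zeta_i=Id_i$, and $\zeta_p$ equal to $Id_p$ together with one extra pair duplicating a single point immediately breaks $=_p$. (This row replaces the dense row $\{\ip 2\}\cup\alli^+$, because over $\Unb$ the relation $\ip 2$ already defines $=_p$, by Tab.~\ref{tab:EqPEqILtUnb}.) For $S=\{=_p\}\cup\alli^+$ and $r={<}$, again nothing links $\zeta_p$ to $\zeta_i$, so over $\mathbb{Q}$ I take $\zeta_i=Id_i$ and $\zeta_p$ an order-reversing bijection; it respects $=_p$ and all of $\alli^+$ while breaking $<$. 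The mixed-relation incompleteness of $\{=_p,<\}\cup\alli^+$ is obtained exactly as in the dense case, by the point-shift $\zeta_p(a)=a+1$, $\zeta_i=Id_i$ over $\mathbb{Q}$, which respects $=_p$, $<$ and all interval-interval relations but breaks every relation in $\allm^+$.

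The genuinely new difficulty lies in the two rows featuring $\ip 2$ and $\ii 24$ together, namely $\{=_p,<,\ip 2,\ii 24\}$ (to be shown $=_i$-incomplete) and $\{=_p,=_i,<,\ip 2,\ii 24\}$ (to be shown incomplete for $\ip 0,\ip 1,\ip 3,\ip 4,\ii 34,\ii 14,\ii 03,\ii 04,\ii 44$). Over a dense order these two relations effectively pin down the endpoints of an interval, since $\ip 2$ fixes its internal points and hence, by density, its endpoints; no nontrivial $\zeta$ then survives, and the dense witnesses cannot be reused. The resolution is to exploit that $\Unb$ contains non-dense orders: I would work over $\mathbb{Z}$, where a unit interval $[n,n+1]$ has no internal point and overlaps no interval, so that it is completely invisible to both $\ip 2$ and $\ii 24$. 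For the $=_i$ case, take $\zeta_p=Id_p$ and let $\zeta_i$ be $Id_i$ together with the extra pair $([0,1],[5,6])$; since $[0,1]$ and $[5,6]$ bear no $\ip 2$- or $\ii 24$-relation to anything, clauses (ii)(b)–(c) hold, while $=_i([0,1],[0,1])$ holds yet $=_i([0,1],[5,6])$ fails, breaking $=_i$. For the second row, where $=_i\in S$ forces $\zeta_i$ to be a bijection, I instead swap $[0,1]$ and $[5,6]$ (identity elsewhere); the same invisibility argument shows all of $=_p,=_i,<,\ip 2,\ii 24$ are respected, whereas comparing $[5,6]$ with a point such as $3$, or with a neighbouring interval such as $[5,8]$ or $[6,7]$, breaks each of the listed endpoint-sensitive target relations.

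Finally, the rows in the lower part of Tab.~\ref{tab:MISRunb} are the symmetric counterparts of those above and follow immediately from Lemma~\ref{lem:symm}. The main obstacle I anticipate is precisely the $\ip 2,\ii 24$ pair: the crux is verifying that, over $\mathbb{Z}$, no $FO+\{\ip 2,\ii 24\}$-formula can detect the relocation of a unit interval, that is, that clauses (ii)(b) and (ii)(c) genuinely hold for \emph{all} interval–point and interval–interval pairs involving the moved intervals. Once this ``invisibility of unit intervals'' is confirmed, the maximality of the listed $r$-incomplete sets and the minimality of the corresponding $r$-complete sets follow, as explained in the Strategy subsection, from the completeness results of Section~\ref{sec:unb} together with the systematic way in which the candidate sets were computed.
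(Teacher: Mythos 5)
Your proposal is correct and takes essentially the same approach as the paper: you recycle the $\mathbb{Q}$-based dense witnesses verbatim, handle the genuinely new rows involving $\ip 2$ and $\ii 24$ over $\mathbb{Z}$ by exploiting that unit intervals are invisible to both relations, and dispatch the lower rows by Lemma~\ref{lem:symm}, exactly as the paper does. The only (immaterial) difference is that for $\{=_p,=_i,<,\ip 2,\ii 24\}$ you swap two unit intervals while the paper shifts every unit interval $[a,a+1]$ to $[a+1,a+2]$; both constructions respect $S$ and break the same target relations.
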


\begin{proof}
Let $S$ be \underline{$\alli^{+}$}: proving that it is $=_p$-incomplete is very easy. Indeed, it suffices to take $\mathbb D=\mathbb D'=\mathbb Q$, $\zeta=(\zeta_p,\zeta_i)$, where $\zeta_i=Id_i$ (the identical relation on intervals), $\zeta_p=\{(0,1')\}$ plus the identical relation on points to have a surjective truth-preserving relation that breaks $=_p$. The $=_i$-incompleteness of \underline{$\{=_p,<,\ip 0,\ip 1\}$} is justified with the same argument used in the dense case (which was based on $\mathbb Q$). We can then prove that also \underline{$\{=_p,<,\ip 2,\ii 24\}$} is $=_i$-incomplete, by taking $\mathbb D=\mathbb D'=\mathbb Z$, $\zeta_p=Id_p$, and $\zeta_i=Id_i$ plus $\zeta_i([1,2],[0',1'])$. For the $<,\ip 0$-incompleteness of $\{=_i,=_p,\ip 1,\ii14\}$ we can recycle the argument used for the dense case (again, based on $\mathbb Q$). The $<$-incompleteness of \underline{$\{=_p\}\cup\alli^{+}$} can be proved by taking $\mathbb D=\mathbb D'=\mathbb Q$, $\zeta=(\zeta_p,\zeta_i)$, where $\zeta_i=Id_i$ and $\zeta_p(a)=-a'$ for every $a\in\mathbb Q$, which clearly respects all interval-interval relations and the equality between points. As for proving that \underline{$\{=_p,<\}\cup\alli^{+}$} is $m$-incomplete for each $m\in\allm^+$ we can recycle the same argument as in the dense case, as it was based on the set $\mathbb Q$. When $S$ is \underline{$\{=_p, =_i, \ip 2, \ii 04\}$}, we have to prove that it is $<,\ip 0,\ip1,i$-incomplete, where $i\in\alli^{+}\setminus\{\ii 04,=_i\}$. Consider two structures based on $\mathbb Q$, and let $\zeta=(\zeta_p,\zeta_i)$ be defined as $\zeta_p(a)=-a'$ for every point and $\zeta_i([a,b])=[-b',-a']$ for every interval. Clearly, {\em containment} is respected for both sorts; nevertheless, $<,\ip 0$ and $\ip 1$ and all interval-interval relations, except $\ii 04$,  are broken. Once again, we have already proved that \underline{$\{=_i, =_p, <, \ip3, \ip4, \ii03 \}$} is $\ip 0,\ip 1,\ip 2,i$-incomplete, where $i\in\alli^+\setminus\{\ii 03,=_i\}$ when we were treating the dense case, and the same holds for the $m,i$-incompleteness of \underline{$\{=_p, < \}\cup\alli^{+}$}, where $i\in\alli^{+}$ and $m\in\allm^{+}$. Consider now the $\ip 0,\ip 1,i$-incompleteness
of \underline{$\{=_p, =_i, <, \ip 2, \ii 24\}$}, where $i\in\alli^{+}\setminus\{\ii 24, =_i\}$. Take $\mathbb D=\mathbb D'=\mathbb Z$, $\zeta_p=Id_p$ and $\zeta_i([a,a+1])=[a'+1,a'+2]$ plus the identical relation over every other interval; since the only intervals affected by $\zeta$ are unitary, the relation $\ii 24$ cannot be broken, and since such interval do not have internal points, the relation $\ip 2$ cannot be broken either. Once more, the $\ii 14$-incompleteness of \underline{$\{=_i,=_p,<,\ip0,\ip 1\}$} comes directly from the dense case, and the same holds for the $r$-incompleteness
of \underline{$\{=_i, =_p, < ,\ii04\}$} for $r\in\alli^+\setminus\{\ii 04\}$, which concludes the proof.
\end{proof}

\section{Harvest: The Complete Picture for $\Den$ and $\Unb$}\label{sec:harvest}

We are now capable to identify all expressively different subsets of $\allr^{+}$ under the hypotheses of linearity+density and linearity+unboundedness. Unlike Part I, we limit ourselves to list the maximally incomplete sets and the minimally complete sets for each of the two cases in the full language only.

\begin{table}[t]
\small
\begin{center}
\begin{tabular}{|p{0.16\textwidth}|p{0.42\textwidth}|}
\hline
\multicolumn{2}{|c|}{$\allr^+$}\\
\hline
$\mathsf{mcs}$ & $\mathsf{MIS}$ \\
\hline
 $\{\ip0, \ip2 \}$     & $\{\ip0, \ip1, \ii14, =_i, =_p, < \}$                                 \\
  $\{\ip0, \ip3 \}$     &  $\{\ip2, \ii14, \ii24, \ii03, \ii34, \ii04, \ii44, =_i, =_p \}$   \\
  $\{\ip0, \ip4 \}$     &  $\{\ip3, \ip4, \ii03, =_i, =_p, < \}$     \\
  $\{\ip0, \ii24 \}$     &  $\{\ii14, \ii24, \ii03, \ii34, \ii04, \ii44, =_i, =_p, < \}$   \\
  $\{\ip0, \ii03 \}$     &    \\
  $\{\ip0, \ii34 \}$     &    \\
  $\{\ip0, \ii04 \}$     &                                                                   \\
  $\{\ip0, \ii44 \}$     &                                                                   \\
  $\{\ip1, \ip2 \}$     &                                                                   \\
  $\{\ip1, \ip3 \}$     &                                                                   \\
  $\{\ip1, \ip4 \}$     &                                                                   \\
  $\{\ip1, \ii24 \}$     &                                                                   \\
  $\{\ip1, \ii03 \}$     &                                                                   \\
  $\{\ip1, \ii34 \}$     &                                                                   \\
  $\{\ip1, \ii04 \}$     &                                                              \\
  $\{\ip1, \ii44 \}$     &                                                              \\
  $\{\ip2, \ip3 \}$     &                                                               \\
  $\{\ip2, \ip4 \}$     &                                                   \\
  $\{\ip2, < \}$     &                                                  \\
  $\{\ip3, \ii14 \}$     &                                                  \\
  $\{\ip3, \ii24 \}$     &                                                  \\
  $\{\ip3, \ii34 \}$     &                                                  \\
  $\{\ip3, \ii04 \}$     &                                                  \\
  $\{\ip3, \ii44 \}$     &                                                  \\
  $\{\ip4, \ii14 \}$     &                                                  \\
  $\{\ip4, \ii24 \}$     &                                                  \\
  $\{\ip4, \ii34 \}$     &                                                  \\
  $\{\ip4, \ii04 \}$     &                                                  \\
  $\{\ip4, \ii44 \}$   &                                                  \\
\hline
\end{tabular}
\caption{Minimally $\allr^+$-complete and maximally $\allr^+$-incomplete sets. - Class: $\Den$.}\label{tab:harvestden}
\end{center}
\end{table}

\begin{table}[t]
\small
\begin{center}
\begin{tabular}{|p{0.16\textwidth}|p{0.42\textwidth}|}
\hline
\multicolumn{2}{|c|}{$\allr^+$}\\
\hline
$\mathsf{mcs}$ & $\mathsf{MIS}$ \\
\hline
$\{\ip0, \ip2 \}$      & $\{\ip0, \ip1, \ii14, =_i, =_p, < \}$     \\
 $\{\ip0, \ip3 \}$     &  $\{\ip2, \ii24, =_i, =_p, < \}$     \\
 $\{\ip0, \ip4 \}$     &  $\{\ip2, \ii04, =_i, =_p \}$     \\
 $\{\ip0, \ii24 \}$    &  $\{\ip3, \ip4, \ii03, =_i, =_p, < \}$     \\
 $\{\ip0, \ii03 \}$    &  $\{\ii14, \ii24, \ii03, \ii34, \ii04, \ii44, =_i, =_p, < \}$   \\
 $\{\ip0, \ii34 \}$    &    \\
 $\{\ip0, \ii04 \}$    &                                                                   \\
 $\{\ip0, \ii44 \}$    &                                                                   \\
 $\{\ip1, \ip2 \}$     &                                                                   \\
 $\{\ip1, \ip3 \}$     &                                                                   \\
 $\{\ip1, \ip4 \}$    &                                                                   \\
 $\{\ip1, \ii24 \}$    &                                                                   \\
 $\{\ip1, \ii03 \}$    &                                                                   \\
 $\{\ip1, \ii34 \}$    &                                                                   \\
 $\{\ip1, \ii04 \}$    &                                                              \\
 $\{\ip1, \ii44 \}$    &                                                              \\
 $\{\ip2, \ip3 \}$    &                                                               \\
 $\{\ip2, \ip4 \}$    &                                                   \\
 $\{\ip2, \ii14 \}$    &                                                  \\
 $\{\ip2, \ii24, \ii04 \}$    &                                                  \\
 $\{\ip2, \ii03 \}$    &                                                  \\
 $\{\ip2, \ii34 \}$    &                                                  \\
 $\{\ip2, \ii04, < \}$    &                                                  \\
 $\{\ip2, \ii44 \}$    &                                                  \\
 $\{\ip3, \ii14 \}$    &                                                  \\
 $\{\ip3, \ii24 \}$    &                                                  \\
 $\{\ip3, \ii34 \}$    &                                                  \\
 $\{\ip3, \ii04 \}$    &                                                  \\
 $\{\ip3, \ii44 \}$    &                                                  \\
 $\{\ip4, \ii14 \}$    &                                                  \\
 $\{\ip4, \ii24 \}$    &                                                  \\
 $\{\ip4, \ii34 \}$    &                                                  \\
 $\{\ip4, \ii04 \}$    &                                                  \\
 $\{\ip4, \ii44 \}$    &                                                  \\
\hline
\end{tabular}
\caption{Minimally $\allr^+$-complete and maximally $\allr^+$-incomplete sets. - Class: $\Unb$.}\label{tab:harvestunb}
\end{center}
\end{table}

\begin{theorem}
If a set of relations is listed:
\begin{enumerate}[noitemsep,topsep=0pt]
\item as $\mathsf{mcs}(\allr^{+})$ in Tab.~\ref{tab:harvestden}, left column (resp., right column), then it is minimally $\allr^{+}$-complete (resp., maximally $\allr^+$-incomplete) in the class of all dense linearly ordered sets.
\item as $\mathsf{mcs}(\allr^{+})$ in Tab.~\ref{tab:harvestunb}, left column (resp., right column), then it is minimally $\allr^{+}$-complete (resp., maximally $\allr^+$-incomplete) in the class of all unbounded linearly ordered sets.
\end{enumerate}
\end{theorem}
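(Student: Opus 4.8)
The plan is to treat this final statement as a pure aggregation of the per-relation spectra established in Lemmas~\ref{lem:tab:mcsIntIntDen}--\ref{lem:tab:MISRden} (for $\Den$) and the corresponding lemmas for $\Unb$, exploiting two structural facts about the subset lattice of $\allr^+$. First, a set $S$ is $\allr^+$-complete exactly when $FO+S\rightarrow_{\mathrm C} r$ for every $r\in\allr^+$, and $\allr^+$-incomplete exactly when $FO+S\not\rightarrow_{\mathrm C} r$ for some $r$; hence the $\allr^+$-incomplete sets form a down-set (every subset of an incomplete set is incomplete) and the $\allr^+$-complete sets form the complementary up-set. Since every relation is definable from its reverse via Lemma~\ref{lem:symm}, it suffices to verify completeness against the eleven explicit relations and transfer the rest by symmetry.

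For the right column (maximal incomplete sets) I would first observe that a maximally $\allr^+$-incomplete set $M$ must, for each relation $r$ it fails to define, be maximally $r$-incomplete: any proper extension of $M$ still failing $r$ would be a strictly larger $\allr^+$-incomplete set, contradicting maximality. Consequently the maximally $\allr^+$-incomplete sets are exactly the maximal elements of $\bigcup_{r\in\allr^+}\mathsf{MIS}(r)$, and conversely every maximal element of this union is maximally $\allr^+$-incomplete (if an extension failed some relation $r_1$, it would embed in a tabulated member of $\mathsf{MIS}(r_1)$). It then remains to run through the rows of Tab.~\ref{tab:MISRden} (resp.\ Tab.~\ref{tab:MISRunb}) and check by inclusion which per-relation MIS sets are maximal: each listed row is contained in one of the four sets in the right column of the harvest table, and those four are pairwise incomparable. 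For instance in the dense case one checks that $\{=_p,<,\ip0,\ip1\}$, $\{=_p,=_i,\ip1,\ii14\}$ and $\{=_p,=_i,<,\ip0,\ip1\}$ all sit inside $\{\ip0,\ip1,\ii14,=_i,=_p,<\}$, and similarly for the other three maximal sets.

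For the left column (minimal complete sets) I would use the dual characterisation valid in a finite lattice: $S$ is $\allr^+$-complete precisely when it is not a subset of any maximally $\allr^+$-incomplete set. Each set $S$ listed on the left is thus shown complete by exhibiting, for every maximal incomplete $M$ on the right, a relation in $S\setminus M$---equivalently, by reading off from the per-relation completeness tables that $S$ defines every $r\in\allr^+$. Minimality is then immediate from the MIS classification: every proper subset $S'\subsetneq S$ is contained in one of the four maximal incomplete sets, hence is $\allr^+$-incomplete. The same two-step verification, reading off the unbounded completeness tables (Tab.~\ref{tab:Int:Int:unb}, Tab.~\ref{tab:EqPEqILtUnb}, Tab.~\ref{tab:mixedunb}, Tab.~\ref{tab:iunb}) against Tab.~\ref{tab:MISRunb}, settles the unbounded case.

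The main obstacle will be purely combinatorial: one must confirm that the four listed maximal incomplete sets genuinely absorb every row of the MIS tables and are mutually incomparable, and that each of the roughly thirty listed minimal complete sets both defines all eleven explicit relations and collapses to incompleteness upon removal of either generator. No new definability equation or truth-preserving relation is required---the content is exactly the bookkeeping that these finitely many sets are the extremal elements determined by the already-proven lemmas, as anticipated by the back-and-forth stabilisation argument described in the Strategy subsection.
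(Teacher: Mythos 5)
Your proposal is correct and takes essentially the same route as the paper, which states this theorem without a separate written proof, treating it as the harvest of the per-relation $\mathsf{mcs}$/$\mathsf{MIS}$ lemmas of Sections~\ref{sec:den}--\ref{sec:unb-incomp} combined with the systematic back-and-forth computation described in the Strategy subsection. Your explicit scaffolding---the down-set/up-set structure, identifying the maximally $\allr^+$-incomplete sets as the maximal elements of $\bigcup_{r}\mathsf{MIS}(r)$, the dual non-containment test for completeness, and symmetry transfer via Lemma~\ref{lem:symm}---is just a more carefully spelled-out version of exactly that aggregation, not a different argument.
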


\section{Conclusions}\label{sec:concl}

We considered here the two-sorted first-order temporal languages that includes relations between intervals, points, and inter-sort, and we treated equality between points and between intervals as any other relation, with no special role. Under four different assumptions on the underlying structure, namely, linearity only, linearity+discreteness, linearity+density, and linearity+unboundedness, we asked the question: which relation can be first-order defined by which subset of all relations? As a result, we identified all possible inter-definability between relations, all minimally complete, and all maximally incomplete subsets of relations. These inter-definability results allow one to effectively compute all expressively different subsets of relations, and, with minimal effort, also all expressively different subsets of relations for the interesting
sub-languages of interval relations only or mixed relations only. Two out of four interesting classes of linearly ordered sets are treated in Part I of this paper, while the remaining two are dealt with in the present one (Part II). There are several aspects of temporal reasoning in computer science to which this extensive study can be related:
\begin{enumerate}[noitemsep,topsep=0pt]
\item first-order logic over linear orders extended with temporal relations
  between points, intervals and mixed, is the very foundation of modal logics
  for temporal reasoning, and it is necessary to have a complete understanding
  of the former in order to deal with the latter.  Indeed these first-order
  languages and the second-order languages based on them are the
  \emph{correspondence languages} (over models and frames, respectively), for
  interval-based temporal logics. These logics have as yet a very imperfectly
  understood correspondence theory (see e.g.\ \cite{van2001correspondence} and \cite{UnifiedCor}), and a proper understanding of the families of correspondence languages and their relative expressivity as is developed in this paper, can be seen as an important step towards the development of such a theory. As a simple example, consider the following: because of the greater expressivity of first-order languages, there is more inter-definability between sets of temporal relations in first-order languages than between the corresponding sets of modalities in propositional temporal languages. This allows one to establish connections between propositional temporal languages which would not have been directly available via the study of the languages themselves;

\item automated reasoning techniques for interval-based modal logics are at their first stages; an uncommon, but promising approach is to treat them as pure
modal logics over particular Kripke-frames, whose first-order properties are, in fact, representation theorems such as those (indirectly) treated in this paper. As a future work, we also plan to systematically study the area of representation theorems;
\item the decidability of pure first-order theories extended with interval relations is well-known~\cite{lad87}; nevertheless, these results hinge on the decidability of MFO[$<$], while we believe that they could be refined both algorithmically and computationally;
\item the study of other related languages, important in artificial intelligence, can benefit from our results, such as first-order and modal logics for
spatial reasoning where basic objects are, for example, rectangles.
\end{enumerate}

\bibliographystyle{alpha}
\bibliography{biblio}

\newcommand{\etalchar}[1]{$^{#1}$}
\begin{thebibliography}{BMM{\etalchar{+}}14}

\bibitem[AF94]{JLOGC::AllenF1994}
J.F. Allen and G.~Ferguson.
\newblock Actions and events in interval temporal logic.
\newblock {\em Journal of Logic and Computation}, 4(5):531--579, 1994.

\bibitem[AH85]{Allen85}
J.F. Allen and P.~Hayes.
\newblock A common-sense theory of time.
\newblock In {\em Proc. of the 9th International Joint Conference on Artificial
  Intelligence}, pages 528--531. Morgan Kaufmann, 1985.

\bibitem[AH87]{Allen87}
J.F. Allen and P.~J. Hayes.
\newblock Short time periods.
\newblock In {\em Proc. of the 10th International Joint Conference on
  Artificial Intelligence}, pages 981--983, 1987.

\bibitem[AH89]{DBLP:journals/ci/AllenH89}
J.F. Allen and P.J. Hayes.
\newblock Moments and points in an interval-based temporal logic.
\newblock {\em Computational Intelligence}, 5:225--238, 1989.

\bibitem[All83]{Allen:1983:MKA}
J.F. Allen.
\newblock Maintaining knowledge about temporal intervals.
\newblock {\em Communications of the ACM}, 26(11):832--843, 1983.

\bibitem[BGS11]{DBLP:journals/entcs/BalbianiGS11}
P.~Balbiani, V.~Goranko, and G.~Sciavicco.
\newblock Two-sorted point-interval temporal logics.
\newblock {\em Electronic Notes on Theoretical Compututer Science}, 278:31 --
  45, 2011.

\bibitem[BMG{\etalchar{+}}14]{amai2014}
D.~Bresolin, D.~Della Monica, V.~Goranko, A.~Montanari, and G.~Sciavicco.
\newblock The dark side of interval temporal logic: Marking the undecidability
  border.
\newblock {\em Annals of Mathematics and Artificial Intelligence}, 71(1-3):41
  -- 83, 2014.

\bibitem[BMM{\etalchar{+}}14]{tcs2014}
D.~Bresolin, D.~Della Monica, A.~Montanari, P.~Sala, and G.~Sciavicco.
\newblock Interval temporal logics over strongly discrete linear orders:
  Expressiveness and complexity;.
\newblock {\em Theoretical Compututer Science}, 560:269 -- 291, 2014.

\bibitem[Boc90]{DBLP:journals/ndjfl/Bochman90}
A.~Bochman.
\newblock Concerted instant-interval temporal semantics {I}: Temporal
  ontologies.
\newblock {\em Notre Dame Journal of Formal Logic}, 31(3):403--414, 1990.

\bibitem[CDS12]{time2012}
W.~Conradie, S.~Durhan, and G.~Sciavicco.
\newblock An integrated first-order theory of points and intervals: Expressive
  power in the class of all linear orders.
\newblock In {\em Proc. of the 19th International Symposium on Temporal
  Representation and Reasoning (TIME)}, pages 47 -- 51. IEEE, 2012.

\bibitem[CDS18]{DBLP:journals/lmcs/ConradieDS18}
W.~Conradie, S.~Durhan, and G.~Sciavicco.
\newblock An integrated first-order theory of points and intervals over linear
  orders ({P}art {I)}.
\newblock {\em Logical Methods in Computer Science}, 14(2):15:1 -- 15:40, 2018.

\bibitem[CGP14]{UnifiedCor}
Willem Conradie, Silvio Ghilardi, and Alessandra Palmigiano.
\newblock {Unified Correspondence}.
\newblock In Alexandru Baltag and Sonja Smets, editors, {\em Johan van Benthem
  on Logic and Information Dynamics}, volume~5 of {\em Outstanding
  Contributions to Logic}, pages 933--975. Springer International Publishing,
  2014.

\bibitem[CM00]{590368}
L.~Chittaro and A.~Montanari.
\newblock Temporal representation and reasoning in artificial intelligence:
  Issues and approaches.
\newblock {\em Annals of Mathematics and Artificial Intelligence},
  28(1-4):47--106, 2000.

\bibitem[Coe09]{Coetzee:MThesis}
C.~J. Coetzee.
\newblock Representation theorems for classes of interval structures.
\newblock Master's thesis, {D}epartment of {M}athematics, {U}niversity of
  {J}ohannesburg, 2009.

\bibitem[CS11]{DBLP:conf/caepia/ConradieS11}
W.~Conradie and G.~Sciavicco.
\newblock On the expressive power of first order-logic extended with {A}llen's
  relations in the strict case.
\newblock In {\em Proc. of the 2011 Conferencia Espa\~nola para la Inteligencia
  Artificial (CAEPIA)}, volume 7023 of {\em Lecture Notes in Computer Science},
  pages 173--182. Springer, 2011.

\bibitem[GMS03]{GMS03}
V.~Goranko, A.~Montanari, and G.~Sciavicco.
\newblock Propositional interval neighborhood temporal logics.
\newblock {\em Journal of Universal Computer Science}, 9(9):1137 -- 1167, 2003.

\bibitem[Hod93]{hodges1993model}
W.~Hodges.
\newblock {\em Model Theory}.
\newblock Encyclopedia of Mathematics and its Applications. Cambridge
  University Press, 1993.

\bibitem[HS91]{JACM::HalpernS1991}
J.~Halpern and Y.~Shoham.
\newblock A propositional modal logic of time intervals.
\newblock {\em Journal of the ACM}, 38(4):935--962, 1991.

\bibitem[Lad]{lad87}
P.B. Ladkin.
\newblock Deciding first-order statements about time intervals.
\newblock Kestrel Institute Technical Report.

\bibitem[Lad78]{Ladkin}
P.B. Ladkin.
\newblock {\em {T}he {L}ogic of {T}ime {R}epresentation}.
\newblock PhD thesis, University of California, Berkeley, 1978.

\bibitem[Lig91]{lig91}
G.~Ligozat.
\newblock On generalized interval calculi.
\newblock In {\em Proc. of the 9th National Conference on Artificial
  Intelligence (AAAI 91)}, pages 234--240, 1991.

\bibitem[MGMS11]{ijcai11}
D.~Della Monica, V.~Goranko, A.~Montanari, and G.~Sciavicco.
\newblock {Expressiveness of the Interval Logics of Allen's Relations on the
  Class of all Linear Orders: Complete Classification}.
\newblock In {\em Proc. of the 22nd International Joint Conference on
  Artificial Intelligence (IJCAI)}, pages 845--850, 2011.

\bibitem[MH06]{DBLP:journals/cj/MaH06}
J.~Ma and P.J. Hayes.
\newblock Primitive intervals versus point-based intervals: Rivals or allies?
\newblock {\em Comput. J.}, 49(1):32--41, 2006.

\bibitem[vB91]{vB91}
J.~van Benthem.
\newblock {\em The Logic of Time (2nd Edition)}.
\newblock Kluwer Academic Press, 1991.

\bibitem[VB01]{van2001correspondence}
Johan Van~Benthem.
\newblock Correspondence theory.
\newblock In {\em Handbook of philosophical logic}, pages 325--408. Springer,
  2001.

\bibitem[Ven91]{JLOGC::Venema1991}
Y.~Venema.
\newblock A modal logic for chopping intervals.
\newblock {\em Journal of Logic and Computation}, 1(4):453--476, 1991.

\end{thebibliography}

\end{document}